\newcommand*\circled[2][draw,inner sep=1pt]{%
  \tikz[baseline=(char.base)]{
    \node[circle,#1] (char) {$\displaystyle #2$};}}
\newcommand{\boundary}[1]{\ensuremath{{\tiny \circled{#1}}}}
\setlist[enumerate]{leftmargin=20pt,itemindent=0pt,itemsep=0pt,parsep=0pt,topsep=2pt,partopsep=0pt}
\newcommand{\ThmPreSkip}{3pt}   % space above env
\newcommand{\ThmPostSkip}{3pt}  % space below env
\newcommand{\ThmHeadSep}{0.3em} % space after head
\newcommand{\ThmStretch}{0.96}  % line spacing inside env (0.94~0.98 typical)
\newtheoremstyle{tightplain}% name
  {\ThmPreSkip}{\ThmPostSkip}% above/below
  {\itshape}% body font
  {}% indent
  {\bfseries}% head font
  {.}% punctuation after head
  {\ThmHeadSep}% after head
  {\thmname{#1}\thmnumber{ #2}\thmnote{ [#3]}}
\newtheoremstyle{tightdef}% name
  {\ThmPreSkip}{\ThmPostSkip}%
  {}% upright body
  {}%
  {\bfseries}%
  {.}%
  {\ThmHeadSep}%
  {\thmname{#1}\thmnumber{ #2}\thmnote{ [#3]}}
\newtheoremstyle{tightremark}% name
  {\ThmPreSkip}{\ThmPostSkip}%
  {}% upright body
  {}%
  {\itshape}% italic head (remark-style)
  {.}%
  {\ThmHeadSep}%
  {\thmname{#1}\thmnumber{ #2}\thmnote{ [#3]}}
\theoremstyle{tightplain}
\newtheorem{theorem}{Theorem}[section]
\newtheorem{lemma}[theorem]{Lemma}
\newtheorem{claim}[theorem]{Claim}
\theoremstyle{tightdef}
\newtheorem{definition}[theorem]{Definition}
\theoremstyle{tightremark}
\newtheorem{fact}[theorem]{Fact}
\definecolor{myblue}{RGB}{0,6,209}
\crefname{theorem}{lemma}{lemmas}
\Crefname{theorem}{Lemma}{Lemmas}
\DeclareMathOperator{\bigo}{\mathcal{O}}
\DeclareMathOperator{\algo}{\mathcal{A}}
\DeclareMathOperator{\pr}{\mathbb{P}}
\DeclareMathOperator{\E}{\mathbb{E}}
\NewDocumentCommand{\haoyu}{mg}{%
  \IfNoValueTF{#2}%
    % Case A: only one argument
    {\textcolor{teal}{#1}}%
    % Case B: two arguments
    {\textcolor{teal}{$\blacktriangleright$}#1
     \textcolor{teal}{$\triangleright$ #2$\blacktriangleleft$}}%
}
\newcommand{\mypara}[1]{%
    \vspace{2pt}
    \par\noindent\textbf{#1}
    \noindent}
\newcommand{\query}[1]{{\textsf{#1}}}
\newtheorem*{lemmanrestate}{Lemma}
\newcommand{\algoa}{\textbf{RoundWalks}}
\newcommand{\algor}{\textbf{DistWalks}}
\newcommand{\algorr}{\textbf{RoundEst}}
\newcommand{\newalgor}{\textbf{DistWalks}\textsf{(Adaptive-Threshold)}}
\newcommand{\newalgorr}{\textbf{RoundEst}\textsf{(Graph-Prebuilt)}}
\begin{document}

\title[Near-Optimality for Single-Source Personalized PageRank]
{Near-Optimality for Single-Source Personalized PageRank}
\titlenote{Authors are listed in alphabetical order. Siqiang Luo is the corresponding author.}

\begin{abstract}
The Single Source Personalized PageRank (SSPPR) query is a fundamental problem in graph OLAP, and has been extensively studied across various data-driven applications. 
For a source node $s$ in a graph $G$, SSPPR measures the PPR score $\pi(s,t)$ for every node $t\in G$, where $\pi(s,t)$ is the probability that an $\alpha$-decay random walk starting from $s$ terminates at $t$. Despite decades of research, a significant gap remains between the known. Under two standard accuracy guarantees—\emph{absolute error} (SSPPR-A), requiring $|\hat{\pi}(s,t)-\pi(s,t)| \le \epsilon$ for all $t$, and \emph{relative error} (SSPPR-R), requiring $|\hat{\pi}(s,t)-\pi(s,t)| \le c\cdot\pi(s,t)$ for all $t$ with $\pi(s,t)\ge\delta$—the best-known upper bounds by~\cite{mc,fora_j,speedppr} collectively yield {\scriptsize$\bigo\!\left(\min({\log(1/\epsilon)}/{\epsilon^{2}},{\sqrt{m\log n}}/{\epsilon},m\log({1}/{\epsilon}))\right)$} for SSPPR-A and {\scriptsize$\bigo\!\left(\min\left({\log(1/n)}/{\delta},\sqrt{m\log(n/\delta)},m\log(\frac{\log(n)}{m\delta})\right)\right)$} for SSPPR-R ($n,m$ denote the number of nodes and edges). Meanwhile, only trivial lower bounds are known—{\scriptsize$\Omega(\min(n,1/\epsilon))$}~\cite{wei2024approximating} and {\scriptsize$\Omega(\min(n,1/\delta))$}—leaving a substantial theoretical gap yet to be closed. This work narrows or even closes this gap. On the upper bound side, we 
tighten the computational complexities of the SSPPR-A and SSPPR-R queries to {\scriptsize$\bigo(1/\epsilon^2)$} and {\scriptsize$\bigo\!\left(\min\left({\log(1/\delta)}/{\delta},m+n\log(n)\log(\frac{\log(n)}{m\delta})\right)\right)$}, respectively. These results improve upon the best-known bounds by factors of {\scriptsize$\log(1/\epsilon)$} and {\scriptsize$\log(\frac{\log(n)}{m\delta})$}. On the lower bound side, under the standard arc-centric model, we establish significantly stronger results: {\scriptsize$\Omega(\min(m,1/\epsilon^2))$} for SSPPR-A (improving by $m/n$ or $1/\epsilon$) and {\scriptsize$\Omega(\min(m,{\log(1/\delta)}/{\delta}))$} for SSPPR-R (improving by $m/n$ or $\log(1/\delta)$). These findings substantially strengthen the theoretical foundations of PPR computation: For SSPPR-R, our new upper and lower bounds coincide for graphs with $m\in \Omega (n\log^2(n))$ and any threshold \(\delta, 1/\delta \in \bigo(\text{poly}(n)) \), indicating %that our algorithm \algor{} 
that we have achieved \textit{theoretical optimality} under most graph regimes; the SSPPR-A query attains partial optimality under rough absolute error requirements (i.e., larger $\epsilon$), where the upper bound simplifies to {\scriptsize$\bigo(1/\epsilon^2)$}. %via \algoa{}, 
This also matches our new lower bound result. To the best of our knowledge, this is the first work to establish an \textit{optimal algorithmic result} for SSPPR queries. Further, our proposed techniques and results are generalizable to other relevant topics. Our lower-bound framework naturally extends to the \textit{Single-Target Personalized PageRank} (STPPR) query, improving its lower bound from {\scriptsize$\Omega(\min(n,1/\delta))$} to {\scriptsize$\Omega(\min(m,\tfrac{n}{\delta}\cdot\log n))$} under standard arc-centric model. This new result also matches the upper bound established in~\cite{rbs}, revealing the \textit{optimality} and underscoring its theoretical versatility.

\end{abstract}

\author{Xinpeng Jiang}
%\authornotemark[2]
\orcid{0009-0009-5840-9881}
\affiliation{%
  \department{College of Computing and Data Science}
  \institution{Nanyang Technological University}
  \country{Singapore}
}
\email{xinpeng006@e.ntu}

\author{Haoyu Liu}
%\authornotemark[2]
\orcid{0000-0002-0839-5460}
\affiliation{%
  \department{College of Computing and Data Science}
  \institution{Nanyang Technological University}
  \country{Singapore}
}
\email{haoyu.liu@ntu.edu.sg}

\author{Siqiang Luo}
%\authornotemark[2]
\orcid{0000-0001-8197-0903}
\affiliation{%
  \department{College of Computing and Data Science}
  \institution{Nanyang Technological University}
  \country{Singapore}
}
\email{siqiang.luo@ntu.edu.sg}

\author{Xiaokui Xiao}
%\authornotemark[2]
\orcid{0000-0003-0914-4580}
\affiliation{%
  \department{School of Computing}
  \institution{National University of Singapore}
  \country{Singapore}
}
\email{xkxiao@nus.edu.sg}

\begin{CCSXML}
<ccs2012>
   <concept>
       <concept_id>10003752.10010070.10010111.10011710</concept_id>
       <concept_desc>Theory of computation~Data structures and algorithms for data management</concept_desc>
       <concept_significance>500</concept_significance>
       </concept>
   <concept>
       <concept_id>10003752.10003777.10003785</concept_id>
       <concept_desc>Theory of computation~Proof complexity</concept_desc>
       <concept_significance>500</concept_significance>
       </concept>
   <concept>
       <concept_id>10002950.10003648.10003671</concept_id>
       <concept_desc>Mathematics of computing~Probabilistic algorithms</concept_desc>
       <concept_significance>500</concept_significance>
       </concept>
 </ccs2012>
\end{CCSXML}

\ccsdesc[500]{Theory of computation~Data structures and algorithms for data management}
\ccsdesc[500]{Theory of computation~Proof complexity}
\ccsdesc[500]{Mathematics of computing~Probabilistic algorithms}

\keywords{Personalized PageRank; Upper Bound; Lower Bound;}

\maketitle

\section{Introduction}
\label{sec:intro}
The \emph{Single-Source Personalized PageRank} (SSPPR) query, first proposed in~\cite{haveliwala2002topic}, has become a fundamental graph operation, attracting sustained interest from both academia and industry.  
Given a graph $G$, the SSPPR vector $\pi(s,\cdot)$ with respect to a source node $s$—PPR values of all nodes from $s$, is defined as the probability of an $\alpha$—decay Random Walk ($\alpha$-RW), with probability $\alpha$ to stop at the current node and probability $1-\alpha$ to randomly transition to neighbors, starting from $s$ and terminating at any node $t$ as $\pi(s,t)$. This definition naturally captures the importance of any node $t$ relative to $s$, giving rise to widely used applications such as web search~\cite{brin1998anatomy,wu2008top}, recommendation systems~\cite{drossman2021review,bhpp,nguyen2015evaluation,liu2024bird}, graph sparsification~\cite{vattani2011preserving,yang2019homogeneous}, and graph neural networks~\cite{appnp,pprgo,agp,scara,scara_j}. %{\color{red} Please mostly mention them as data-driven applications or put it under data management context. Make sure the refs of SIGMOD, PODS, VLDB, ICDE, ICDT are sufficient.}

Exact computation of all SSPPR vectors is expensive, requiring matrix inversion with $\Omega(n^2\log n)$ time~\cite{tveit2003complexity}. %, which is prohibitive for large real-world graphs.  
Consequently, considerable effort has focused on approximation algorithms for SSPPR query to balance efficiency with error guarantees. As such, a series of approximation algorithms have emerged~\cite{mc,fora,wu2021unifying,luo2019baton,yoon2018tpa,avrachenkov2007monte,lrh2022efficient,lrh2023efficient,liu2016powerwalk}, focusing on improved computational upper bounds based on two error settings: \emph{absolute error} (SSPPR-A), which requires $|\hat{\pi}(s,t)-\pi(s,t)| \le \epsilon$ for all $t$, and \emph{relative error} (SSPPR-R), which requires $|\hat{\pi}(s,t)-\pi(s,t)| \le c\cdot\pi(s,t)$ for all $t$ with $\pi(s,t)\ge\delta$. Among the literature, MC~\cite{mc}, FORA~\cite{fora_j} and SpeedPPR~\cite{speedppr} together give the best known upper bound results of {\scriptsize$\bigo\!\left(\min({\log(1/\epsilon)}/{\epsilon^{2}},{\sqrt{m\log n}}/{\epsilon},m\log({1}/{\epsilon}))\right)$} for SSPPR-A and {\scriptsize$\bigo\!\left(\min({\log(1/\delta)}/{\delta},\sqrt{m\log(n/\delta)},m\log({\log(n)}/{(\delta m)}))\right)$} for SSPPR-R query. Though large amount of subsequent works~\cite{liao2022efficient, liao2023efficient, liu2024bird, wang2019parallelizing, hou2023personalized,hou2021massively} try to further improve the computing overhead, such theoretical upper bound has been struck for nearly half decade. Despite the vitality in upper bounds, the lower bounds of SSPPR query remains far less developed. The best known lower bounds are $\Omega\!\left(\min(n,1/\varepsilon)\right)$ for SSPPR-A with $\epsilon$ absolute error~\cite{wei2024approximating} and $\Omega\!\left(\min(n,1/\delta)\right)$ for SSPPR-R with a constant relative error, both largely derived from worst-case output-size arguments or from the simpler \emph{single-pair} PPR setting, due to the difficulties of constructing hard instances that force \emph{any} algorithm to incur substantial computation on all nodes to meet the error requirement. This leaves a substantial theoretical gap yet to be closed.

In this work, we largely close the gap between upper-bounds and lower-bounds in SSPPR queries, as summarized in Table~\ref{tab:pprs}. On the \textbf{upper-bound} side, we propose two new algorithms, \algoa{} and \algor{}, which together tighten the computational complexities of the SSPPR-A and SSPPR-R queries. % to {\scriptsize$\bigo(1/\epsilon^2)$} and {\scriptsize$\bigo\!\left(\min({\log(1/\delta)}/{\delta},m+n\log(n)\log^3(1/(n\delta))\log n)\right)$}. 
These results improve upon the best-known bounds by factors of {\scriptsize$\log(1/\epsilon)$} and {\scriptsize$\log({\log(n)}/{(m\delta)})$}. On the \textbf{lower-bound} side, %under the standard arc-centric model, 
we establish significantly stronger results: {\scriptsize$\Omega(\min(m,1/\epsilon^2))$} for SSPPR-A (improving by $m/n$ or $1/\epsilon$) and {\scriptsize$\Omega(\min(m,{\log(1/\delta)}/{\delta}))$} for SSPPR-R (improving by $m/n$ or $\log(1/\delta)$). %These findings substantially strengthen the understanding of SSPPR computation. Most interestingly, 
We highlight that, for SSPPR-R, our upper and lower bounds coincide for most graphs with $m\in \Omega (n\log^2(n))$ and any threshold \(\delta, 1/\delta \in \bigo(\text{poly}(n)) \), indicating that our algorithm \algor{} achieves \textit{theoretical optimality} under most graph regimes. Besides, the SSPPR-A query attains partial optimality under rough absolute error requirements (i.e., larger $\epsilon$), where the upper bound simplifies to {\scriptsize$\bigo(1/\epsilon^2)$} via \algoa{}, matching our new lower bound result. To the best of our knowledge, this is the first work to establish an \textit{optimal algorithmic result} for SSPPR queries, marking a significant advancement in the foundational study of graph analysis. 

Our lower-bound results are based on constructing a non-trivial hard instance, and this technique has impact beyond the SSPPR topic. %Notably, our proposed techniques and results are broadly generalizable. 
For example, we successfully extend our lower-bound framework to establish a much stronger result for \textit{Single-Target Personalized PageRank} (STPPR) query, improving its lower bound from {\scriptsize$\Omega(\min(n,1/\delta))$} to {\scriptsize$\Omega(\min(m,\frac{n}{\delta}\cdot\log(n)))$} under standard arc-centric model. This new result also matches the computational upper bound established in~\cite{rbs}, thereby revealing the \textit{optimality} of our analysis and underscoring its theoretical robustness and versatility. %In the remainder of this section, we formally state the problem setting, review best-known results, and outline our new results and contributions.

\subsection{Problem Formulation}
% \haoyu{highlight $\delta$ setting is $\geq \frac{1}{n}$ is mostly used, and we achieve matching results. And less interesting parts, we improve previous gaps.}
%Among the comprehensive approximation measurements including the $\ell_1$, $\ell_2$, absolute, relative and degree-normalized absolute error, we focus on the research problem of SSPPR query with relative and additive error guarantee unifiedly. These two settings are widely utilized and studied in mainstreaming PPR-related researches~\cite{fora, wu2021unifying, mc, rbs} which requires the estimator to be accurate within chosen error requirement. We give their formal definition below:
We focus on two error settings: relative and additive error guarantees, which are widely adopted in mainstream PPR studies~\cite{fora_j, wu2021unifying, wang2019parallelizing, topppr, liu2024bird, zhu2024personalized, hou2021massively, liao2022efficient, liao2023efficient, zhou2025one, setpush, wang2024revisiting}.

\begin{definition}
\label{df_ssppr_r}
SSPPR with Relative Error Guarantee (SSPPR-R). Given a graph $G$, a decay factor $\alpha$, a source node $s$, a relative error parameter $ c \in (0,0.5) $, an overall failure probability parameter $ p_f $ and choosing a targeting PPR threshold $\delta \in (0,1)$, SSPPR-R query requires an estimated SSPPR vector $\hat{\pi}(s,\cdot)$ such that with at least $1-p_f$ probability, $|\hat{\pi}(s,t)-\pi(s,t)| \leq c \cdot \pi(s,t)$
holds for all node $t$ with its PPR value $\pi(s,t)\geq \delta$. Note that the threshold $\delta$ can be chosen arbitrary close to 0 or 1, which correspond to the shallow or large targeting PPR values. Following conventions in prior work~\cite{pprscaling,bippr,wang2024revisiting,liu2024bird}, parameters $\alpha$, $c$, and $p_f$ are treated as constants.
\end{definition}

\begin{definition}
\label{df_ssppr_a}
SSPPR with Additive Error Guarantee (SSPPR-A). Given a graph $G$, a decay factor $\alpha$, a source node $s$, an additive error parameter $\varepsilon \in (0,1)$ and an overall failure probability parameter $ p_f $, SSPPR-A query requires an estimated SSPPR vector $\hat{\pi}(s,\cdot)$ such that with at least $1-p_f$ probability, $|\hat{\pi}(s,t)-\pi(s,t)| \leq \varepsilon $
holds for all node $t$. 
\end{definition}

\begin{definition}
\label{df_stppr}
Single Target Personalized PageRank Query (STPPR). Given a graph $G$, a decay factor $\alpha$, a target node $s$, a relative error parameter $ c \in (0,0.5) $, an overall failure probability parameter $ p_f $ and choosing a targeting PPR threshold $\delta \in (0,1)$, STPPR query requires an estimated STPPR vector $\hat{\pi}(s,\cdot)$ such that with at least $1-p_f$ probability, $|\hat{\pi}(s,t)-\pi(s,t)| \leq c \cdot \pi(s,t)$
holds for all node $s$ with its PPR value $\pi(s,t)\geq \delta$.
%The definition of STPPR follows SSPPR-R by changing source node $s$ to target node $t$ and estimates STPPR vector $\hat{\pi}(\cdot,t)$.
%Given a graph $G$, a decay factor $\alpha$, a target node $t$, a relative error parameter $ c \in (0,0.5) $, an overall failure probability parameter $ p_f $ and choosing a targeting PPR threshold $\delta \in (0,1)$, STPPR query requires an estimated STPPR vector $\hat{\pi}(\cdot, t)$ such that with at least $1-p_f$ probability, $|\hat{\pi}(s,t)-\pi(s,t)| \leq c \cdot \pi(s,t)$
%holds for all node $s$ with its PPR value $\pi(s,t)\geq \delta$. Note that the threshold $\delta$ can be chosen arbitrary close to 0 or 1, which correspond to the shallow or large targeting PPR values. 
\end{definition}

\subsection{Existing Results for SSPPR-A, SSPPR-R and STPPR}
%We summarize the best-known upper and lower bounds for SSPPR-A, SSPPR-R and STPPR queries. % overview inTable~\ref{tab:pprs}. %An overview of these results is presented in Table~\ref{tab:pprs}.
%\haoyu{re-write with : first list the three base: MC, FP and backward push. Then summarize the sota methods and bounds.}

\mypara{Upper Bounds.} %We briefly summarize the best-known computational upper bounds for answering SSPPR-A, SSPPR-R and STPPR queries here. 
Prior arts primarily use three foundational algorithmic paradigms, namely, Monte Carlo (MC), Forward Push (FP), and Backward Push (BP), and have been further refined through various state-of-the-art (SOTA) techniques. The MC method provides a sample-based approximation to the SSPPR vector using $\alpha$-decay random walks, and achieves high-probability relative error guarantees with expected time complexity {\small$\bigo\left(\log(n)/\delta\right)$} for SSPPR-R, or {\small$\bigo\left(\log(1/\varepsilon)/\varepsilon^2\right)$} for SSPPR-A. The FP method is a local-push algorithm that simulates random walks deterministically via graph traversal. To guarantee an additive error of $\varepsilon$, it requires setting a residue threshold $r_{\max} = \varepsilon/m$, resulting in a worst-case time complexity of $\bigo(m/\varepsilon)$. The BP method is similarly structured but performs the push operation in reverse, tailored to STPPR. With \textit{global} BP, $\bigo(m\log(1/\delta))$ is required for solving STPPR. BP can also be applied naively to SSPPR-A for all target nodes with $\bigo(m/\varepsilon)$ complexity. To overcome the inefficiencies of basic PPR algorithms, hybrid methods that combine Monte Carlo (MC) sampling with local-push strategies have been developed. 
\citet{fora} improves upon MC by first using FP and then refining the result via targeted MC. Balancing these two phases, it achieves an expected complexity of $\bigo\!\bigl(\sqrt{(m\log n)/\delta}\bigr)$ for SSPPR-R query. ~\citet{wu2021unifying} restructures FP to enforce a more regular push schedule, reducing total residual mass more efficiently and yielding $\bigo\!\bigl(m\log(1/\varepsilon)\bigr)$ for SSPPR-A and comparable efficiency for SSPPR-R when combined with MC.  
More recently,~\citet{wei2024approximating} combine MC with an adaptive variant of Backward Push (BP). Their approach first constructs a candidate set via a coarse MC estimation, and then refines it using selective backward pushes together with additional random walks, achieving a sublinear expected complexity of $\bigo\!\bigl(\sqrt{m\log n}/\varepsilon\bigr)$ for SSPPR-A. For the single-target case,~\citet{rbs} introduce randomness into the deterministic BP procedure, yielding an unbiased estimator with bounded variance while avoiding the propagation of negligible probability mass. Their method answers STPPR in $\bigo\!\bigl(\frac{n}{\delta}\cdot\log n\bigr)$ time, with an additional preprocessing cost of $O(m)$ to sort in-neighbors. Subsequent work~\cite{bertram2025estimating} further improves the complexity to $\bigo\!\left(\min\!\left(m\log\!\bigl(\tfrac{1}{\delta}\bigr),\,\tfrac{n}{\delta}\cdot\log n\right)\right)$ by combining classical Power Iteration with MC techniques.%These methods illustrate how combining sampling with localized deterministic updates leads to near-optimal performance.% and greatly improves the scalability of PPR computation on large-scale graphs.

\vspace{\intextsep}
\begin{table}[!hb]
\Large
\vspace{-\intextsep}
\caption{Best-Known results and our Improvements of SSPPR-A, SSPPR-R and STPPR query. We highlight \textcolor{myblue}{our new results in blue} and bold the \textbf{Improvements} when error parameters $\epsilon, \delta$ are sufficiently small. For simplicity, we denote $F_A(m,n,\epsilon):=\min(\frac{1}{\epsilon}{\sqrt{m \log(n)}}\text{~\cite{wei2024approximating}}, m \log\frac{1}{\epsilon}\text{~\cite{wu2021unifying}})$.} %For ease of simplicity, we here assume the number of edges, $m\in \bigo(n^{2-\beta})$ for any constant $\beta\in(0,1)$ in our SSPPR-A Lower Bounds.}
%\vspace{-\intextsep}
\centering
\label{tab:pprs}
\renewcommand{\arraystretch}{2.2}
\begin{minipage}{1.\linewidth}
\resizebox{1.\textwidth}{!}{
\begin{tabular}{|c|c|c|c|}
\toprule
\textbf{Query} & \textbf{Best-Known Lower Bounds} & \textbf{New Lower Bounds}\footnote{A concurrent work~\cite{bertram2025estimating} establishes lower bounds of $\Omega(\min(m,1/\delta))$ for SSPPR (via a direct application of their Single-Node PPR result) and $\Omega(\min(m,n/\delta))$ for STPPR. Developed independently using different techniques, our results further surpass these bounds by introducing additional $\log(1/\delta)$ and $\log(n)$ factors, respectively.} & \textbf{Improvements} \\
\midrule
SSPPR-R & $\Omega\left(\min(n, \frac{1}{\delta})\right)$%~[\textcolor{myblue}{TRIVIAL}] 
& {\color{myblue}$\Omega\left(\min(m, \frac{\log(\frac{1}{\delta})}{\delta})\right)$} & $\log(\frac{1}{\delta})$ or ${\mathbf{\frac{m}{n}}}$
\\ \hline
SSPPR-A & $\Omega\left(\min(n, \frac{1}{\epsilon})\right)$~\cite{wei2024approximating} & {\color{myblue}$\Omega\left(\min(m, \frac{1}{\epsilon^2})\right)$} & $\frac{1}{\epsilon}$ or ${\mathbf{\frac{m}{n}}}$ 
\\ \hline
STPPR & $\Omega\left(\min(n, \frac{1}{\delta})\right)$~\cite{rbs} & {\color{myblue}$\Omega\left(\min(m, \frac{n}{\delta}\cdot\log(n))\right)$} & $n\log(n)$ or $\mathbf{\frac{m}{n}}$% , \textcolor{red}{[Optimal]}\\
\\ 
\midrule

\textbf{Query} & \textbf{Best-Known Upper Bounds} & \textbf{New Upper Bounds} & \textbf{Improvements} \\
\midrule

\multirow{2}{*}{SSPPR-R} 
& $\bigo \Big(\min\big(\frac{\log(n)}{\delta}\text{\cite{mc}}, \sqrt{\frac{m \log(n)}{\delta}}\text{\cite{fora}},$ 
& {\color{myblue}$\bigo\Big(\min\big(\tfrac{\log(\frac{1}{\delta})}{\delta},$} & $\frac{\log(n)}{\log(\frac{1}{\delta})}$ \\

& $\quad\;\;  m \log(\frac{\log(n)}{\delta m})\text{\cite{wu2021unifying}}\big)\Big)$ 
&{\color{myblue} $\quad\;\; m+n\log(n)\log(\frac{\log(n)}{\delta m})\big)\Big)$} 
& or $\approx \log(\frac{\log(n)}{\delta m})$
\\
\hline

SSPPR-A &
{$\bigo\Big(\min(\frac{\log(1/\epsilon)}{\epsilon^2}\text{\cite{mc}}, F_A(m,n,\epsilon)\big)\Big)$ } &
{$\bigo\Big(\min\big({\color{myblue}\frac{1}{\epsilon^2}}, F_A(m,n,\epsilon)\big)\Big)$} & $\log(\frac{1}{\epsilon})$
\\
\hline

STPPR & \multicolumn{1}{c|}{$\bigo\left(\min(m\log(\frac{1}{\delta}), \frac{n}{\delta}\cdot\log(n))\right)$\cite{bertram2025estimating}} & --- & --- \\ 
\midrule

\multicolumn{1}{|c|}{\textbf{Query}}
&\multicolumn{3}{c|}{\textbf{Optimality}} \\ 
\midrule
\multicolumn{1}{|c|}{SSPPR-R} & {\color{myblue}\algor{} [Ours]} & \multicolumn{2}{c|}{any small \(\delta, 1/\delta \in \bigo(\text{poly}(n)) \) and $m\in \Omega (n\log^2(n))$} \\ \hline
\multicolumn{1}{|c|}{SSPPR-A} & {\color{myblue}\algoa{} [Ours]} & \multicolumn{2}{c|}{any small $\epsilon,{1}/{\epsilon}\in \bigo(\sqrt{m})$}   \\ \hline
\multicolumn{1}{|c|}{STPPR} & RBS~\cite{rbs} & \multicolumn{2}{c|}{any small $\delta, 1/\delta \in \bigo(\frac{m}{n\log(n)})$} \\

\bottomrule
\end{tabular}}
\end{minipage}
\vspace{-\intextsep}
\end{table}

\mypara{Lower Bounds.} Unlike the extensive body of work on improving \emph{upper} bounds for SSPPR computation, rigorous \emph{lower} bounds remain sparse and often heuristic, offering limited insight into the fundamental complexity of the problem. The earliest bound relevant to Personalized PageRank (PPR) is due to ~\citet{lofgren2014fast}, which gives a lower bound of $\Omega(\sqrt{1/\delta})$ for \emph{single-pair} PPR with relative error. However, this is dominated by the trivial output-size bound $\Omega\!\bigl(\min(n,1/\delta)\bigr)$ obtained by noting that there can be up to $\min(n,1/\delta)$ targets $t$ with $\pi(s,t)\!\ge\!\delta$. To the best of our knowledge, no other nontrivial \emph{direct} lower bounds are known for SSPPR-R. For the absolute-error variant (SSPPR-A), \citet{wei2024approximating} proves $\Omega\!\bigl(\min(n,1/\varepsilon)\bigr)$ by the same output-size reasoning: any algorithm must return nonzero estimates for all $t$ with $\pi(s,t)\!\ge\!\varepsilon$, of which there may be $\Omega\!\bigl(\min(n,1/\varepsilon)\bigr)$. While conceptually simple, such arguments understate the difficulty of producing accurate estimates for \emph{all} nodes within prescribed error margins. Likewise, STPPR query, \citet{rbs} shows lower bounds of $\Omega(n)$ in the worst case and $\Omega\!\bigl(\min(n,1/\delta)\bigr)$ on average, which still leaves a substantial gap relative to known upper bounds unlike other problems~\cite{lu2022optimal}.

\subsection{Our Results and Discussion}
In this section, we present our main results summarized in Table~\ref{tab:pprs}. All our arguments are developed under the standard \emph{arc-centric model~\cite{goldreich1998property,goldreich1997property,goldreich2008computational,goldreich2017introduction, goldreich1999combinatorial}}, which will be formally defined in Section~\ref{sec:notation}.

\begin{theorem}[Improved upper bound of SSPPR-R]\label{thm:ssppr-r-upper} 
We present \algor{} (Algorithm~\ref{algo:dist_walks}), which solves the SSPPR-R query within $\bigo\left(m+n\log(n)\log(\frac{\log(n)}{m\delta})\right)$ queries in expectation.
\end{theorem}

\noindent
\algor{} fundamentally overcomes the limitations of prior state-of-the-art methods that rely on deterministic push operations for initialization. Instead, it introduces a novel decomposition of PPR by accumulating contributions from already discovered effective paths, thereby avoiding redundant random walks over previously visited nodes. This refined design achieves, for the first time, linear dependency on the graph size ($m$) in relatively dense graphs. This significantly tightens the upper-bound widely. % without introducing an annoying $\frac{1}{\delta}$ factor (in fact, we make it a $\log{\frac{1}{\delta}}$ factor). 

\begin{theorem}[Improved upper bound of SSPPR]\label{thm:ssppr-a-upper} 
We present \algoa{} (Algorithm~\ref{algo:discover}), which solves the SSPPR-A (SSPPR-R) query within $\bigo(1/\epsilon^2)$ ($\bigo(\log(1/\delta)/\delta)$) queries in expectation.
\end{theorem}

\noindent 
The proposed Algorithm~\ref{algo:discover} employs a two-phase estimator, i.e., \emph{discovery} and \emph{estimation}, to improve upon naive Monte Carlo (MC) methods. This design achieves $\log(1/\epsilon)$ improvement of SSPPR-A. %It first identifies high-probability candidate nodes through random walks, then concentrates precise estimation on this subset. 

\begin{theorem}[Improved lower bound of SSPPR-R]\label{thm:ssppr-r}
Choose any decay factor \(\alpha\in (0,1)\), failure probability \(p\in (0,1)\), error parameter \(c\in(0,\frac{1}{2}]\), and any functions \(\delta(n)\in(0,1), m(n)\in\Omega(n)\cap \bigo(n^2)\). 
Consider any oracle (possibly randomized) algorithm within the arc-centric model that, for an arbitrary graph \(G\) with \(\bigo(m)\) edges and \(\bigo(n)\) nodes and any source node \(s\), with probability at least \(1-p\) the oracle algorithm gives an estimation \(\hat\pi(s,\cdot)\) for SSPPR vector \(\pi(s,\cdot)\), satisfying $|\hat{\pi}(s,v)-\pi(s,v)| \leq c \cdot \pi(s,v)$ for each node \(v\) at \(\pi(s,v) \geq \delta\).
Then there exists a graph \(\mathrm{U}\) and a source node \(s\) such that:
\begin{enumerate}[label=(\roman*).]
    \item \(\mathrm{U}\) contains  \(\bigo(m)\) edges and \(\bigo(n)\) nodes;
    \item the algorithm requires \(\Omega\left(\min(m, \frac{\log(1/\delta)}{\delta})\right)\) queries to succeed in expectation.
\end{enumerate}
\end{theorem}

\noindent
Theorem~\ref{thm:ssppr-r} improves the best-known lower bound from $\Omega\!\bigl(\min(n,1/\delta)\bigr)$ to $\Omega\!\bigl(\min(m,\log(1/\delta)/\delta)\bigr)$. 
This improvement is substantial, adding a multiplicative factor of up to $m/n$, which can reach $\Theta(n)$ on dense graphs such as cliques when high approximation precision is required. Moreover, the entire proof framework is readily adaptable. By inverting all edge directions in the SSPPR-R construction, we derive  Theorem~\ref{thm:stppr} for the STPPR query:

%It further reveals a key insight: when the graph is at least sparse-free ($m=\Omega(n\log n)$) and the error threshold satisfies $\delta=1/n$—the standard setting in PPR query computation~\cite{fora,speedppr,liu2024bird,bippr}—our lower bound matches the known computational upper bounds of MC and FORA. This indicates that MC and FORA are in fact \emph{optimal} algorithms for computing SSPPR-R queries in practical regimes.  

\begin{theorem}[Improved Lower Bound of STPPR Query]\label{thm:stppr}
Choose any decay factor \(\alpha\in (0,1)\), failure probability \(p\in (0,1)\), error parameter \(c\in(0,\frac{1}{2}]\), and any functions \(\delta(n)\in(0,1), m(n)\in\Omega(n)\cap \bigo(n^{2})\). 
Consider any oracle (possibly randomized) algorithm within the arc-centric model that, for an arbitrary graph \(G\) with \(\bigo(m)\) edges and \(\bigo(n)\) nodes and any target node \(t\), with probability at least \(1-p\) the oracle algorithm outputs an estimation \(\hat\pi(\cdot,t)\) for STPPR vector \(\pi(\cdot,t)\), satisfying $|\hat{\pi}(v,t)-\pi(v,t)| \leq c \cdot \pi(v,t)$ for each node \(v\) at \(\pi(v,t) \geq \delta\).
Then there exists a graph \(\mathrm{U}\) and a target node \(t\) such that:
\begin{enumerate}[label=(\roman*).]
\item \(\mathrm{U}\) contains  \(\bigo(m)\) edges and \(\bigo(n)\) nodes;
\item the algorithm requires \(\Omega\left(\min(m, \frac{n}{\delta}\cdot\log(n))\right)\) queries to succeed in the worst case.
\end{enumerate}
\end{theorem}

\noindent
Benefiting from our flexible proof framework, Theorem~\ref{thm:stppr} significantly strengthens the best-known worst-case lower bound from $\Omega\!\bigl(\min(n,1/\delta)\bigr)$ to $\Omega\!\bigl(\min(m, \frac{n}{\delta}\cdot \log(n))\bigr)$.  
Crucially, this lower bound matches the known computational upper bound in the prevalent setting $\delta = \bigo(1/n)$, which is standard in PPR query computation~\cite{fora,speedppr,liu2024bird,bippr}.  
This implies that the computational complexity of STPPR query is essentially \textit{optimal}, unveiling the maturity in PPR algorithm development.  
%Moreover, the improved lower bound in Theorem~\ref{thm:ssppr-a} further narrows the remaining gap for SSPPR-A. 
%We also observe that with refined hard-instance constructions, our framework can yield even tighter bounds, as developed in the subsequent theorem.

\begin{theorem}[Improved Lower Bound of SSPPR-A Query]\label{thm:ssppr-a-v1}
Choose any decay factor \(\alpha\in (0,1)\), failure probability \(p\in (0,1)\),  and any functions \(\varepsilon(n)\in(0,1), m(n)\in\Omega(n)\cap \bigo(n^{2})\).
Consider any oracle (possibly randomized) algorithm within the arc-centric model that, for an arbitrary graph \(G\) with \(\bigo(m)\) edges and \(\bigo(n)\) nodes and any source node \(s\), with probability at least \(1-p\) the oracle algorithm gives an estimation \(\hat\pi(s,\cdot)\) for SSPPR vector \(\pi(s,\cdot)\), satisfying $|\hat{\pi}(s,v)-\pi(s,v)| \leq \varepsilon$ for each node \(v\).
Then there exists a graph \(\mathrm{U}\) and a source node \(s\) such that:
\begin{enumerate}[label=(\roman*).]
\item \(\mathrm{U}\) contains  \(\bigo(m)\) edges and \(\bigo(n)\) nodes;
\item the algorithm requires \(\Omega\left(\min(m, \frac{1}{\varepsilon^2})\right)\) queries to succeed in expectation.
\end{enumerate}
\end{theorem}

\noindent
Theorem~\ref{thm:ssppr-a-v1} strengthens the best-known lower bound from $\Omega\!\left(\min(n,1/\epsilon)\right)$ to $\Omega\!\left(\min(m,1/\epsilon^{2})\right)$.  
This improvement is substantial and, strikingly, the $\tfrac{1}{\epsilon^{2}}$ term partially matches our improved computational upper bound for the SSPPR-A query.  
This alignment indicates that our new algorithm (Algorithm~\ref{algo:discover}) is essentially \emph{optimal} for SSPPR-A when the error requirement is relatively coarse.

%\mypara{Remark.}
%As summarized in Table~\ref{tab:pprs}, we substantially narrow the gap between known upper and lower bounds for PPR queries. In particular, we obtain almost \emph{matching} upper and lower bounds: For SSPPR-R, \haoyu{add res diss}. 
%These results strongly suggest the \emph{optimality} of our bounds and provide valuable guidance for future research on the complexity of PPR queries. 
%\input{text/relatedwork}
\section{Notations and Tools}
\label{sec:notation}

\mypara{Notations.} Given an underlying directed graph $G = (V_G, E_G)$, let $n = |V_G|$ and $m = |E_G|$ denote the number of nodes and edges, respectively. Each edge $e = (u, v) \in E_G$ indicates that node $u$ is a parent of node $v$, and correspondingly, $v$ is a child of $u$. For a node set $V$, we denote its elements either as $v \in V$ or using index notation $V[i]$ to refer to the node at index $i$. The out-degree and in-degree of a node $u$ are denoted by $d_{\text{out}}(u)$ and $d_{\text{in}}(u)$, respectively. In particular, we denote by $\deg_V(v)$ the number of edges between node $v$ and all potential nodes in the set $V$. A permutation of size $n$ is defined as a bijection $\mathrm{p}^n\in \mathrm{S}^n: \{1, 2, \ldots, n\} \to \{1, 2, \ldots, n\}$; equivalently, $\mathrm{p^n}$ is an element of the symmetric group $\mathrm{S}^n$. For any index $0 \leq i < n$, we use $\mathrm{p}[i]$ to represent the permuted index. We summarize all frequently used notations of this paper in Table~\ref{tab:notations} for reference.%We summarize all frequently used notations of this paper in Table~\ref{tab:notations} for reference.% We consider random walks on $G$ where, at each step, the walk proceeds to a uniformly random child of the current node. Additionally, we study $\alpha$-decay random walks, where at each step, the walk continues with probability $\alpha \in (0,1)$. Unless otherwise specified, we assume $\alpha$ is a constant. Finally, we allow the graph to contain parallel edges (i.e., multiple edges between the same pair of nodes), which are formally defined in Definition~\ref{def:multigraph} below. %Additional frequently used notations are summarized in Table~\ref{tab:fre_notations}. \haoyu{add table.}

\begin{definition}[multigraph and edge multiplicity~\cite{martens2022complexity}]\label{def:multigraph}
A \emph{multigraph} is a graph that may contain parallel edges (multiple edges between the same pair of nodes); a \emph{simple graph} contains no such parallel edges. The \emph{edge multiplicity} of a multigraph $\mathrm{U}$ is defined as $$\mathrm{MUL}(G) \;=\; \max_{u,v \in V} \bigl|\{\,e \in E_{\mathrm{U}} \mid e = (u,v)\}\bigr|,$$ representing the maximum number of parallel edges between any node pair.
\end{definition}

\begin{comment}
\begin{definition}[Conditional uniformity]
Let \(S\) be a non-empty finite set. Whenever this abuse of notation causes no ambiguity, we treat \(S\) both as a set and as the uniform distribution over its elements; that is, an element \(s \in S\) is selected with probability \(1/|S|\).

Given a predicate \(\mathrm{cond}: S \to \{\mathrm{True}, \mathrm{False}\}\), the \emph{conditioning} of the uniform distribution on the event \(\mathrm{cond}(s) = \mathrm{True}\) yields the uniform distribution over the subset
\[
S_{\mathrm{cond}} \;=\; \{\, s \in S \mid \mathrm{cond}(s) = \mathrm{True} \}.
\]
\end{definition}
\end{comment}

%\begin{definition}[Maximum posterior estimation]\label{df:MAP}
%Let \(\Theta\) be a prior distribution over a finite set of possible values, and let \(\mathbf{x}\) denote the observed data. Then, there exists a maximum a posteriori (MAP) estimate \(\hat{\theta}_{\mathrm{MAP}} \in \Theta\) such that the success probability of any algorithm that predicts \(\theta \in \Theta\) based on \(\mathbf{x}\) is at most \(\Pr[\hat{\theta}_{\mathrm{MAP}} \mid \mathbf{x}]\). Furthermore, replacing \(\mathbf{x}\) with any function of \(\mathbf{x}\) (i.e., using only partial or derived information) cannot increase the prediction success probability.
%\end{definition}

\begin{definition}[Arc–Centric Graph Query Model~\cite{goldreich1998property, goldreich1997property}]\label{def:arc}
We focus on the query complexity under the standard RAM model. To formalize graph interactions, we allow algorithms to access the underlying graph $G$ under the standard \emph{arc-centric graph-access model}~\cite{goldreich1998property, goldreich1997property}, where the oracle supports the following queries in unit time:
\begin{enumerate}[label=(\roman*).]
  \item \query{JUMP}(): returns a uniformly random node $v \in V$ and marks $v$ as \textit{covered};
  \item \query{INDEG}$(v)$/\query{OUTDEG}$(v)$: for any \textit{covered} node $v \in V$, returns its in/out-degree $d_{in}(v)$/$d_{out}(v)$;
  \item \query{ADJ$_{in}$}$(v,i)$/\query{ADJ$_{out}$}$(v,i)$: for any \textit{covered} node $v \in V$ and integer $1 \le i \le d_{in}(v)/d_{out}(v)$, returns the $i^{th}$ in/out-neighbor of node $v$.
\end{enumerate}
% This model has been widely used in prior work~\cite{lofgren2014fast, bippr, wang2024revisiting} to establish lower bound complexities. 
%In our analysis, we strengthen the adjacency query \query{ADJ$_{out}$}$(v,i)$ (analogously for \query{ADJ}$_{in}$) to not only return the out-neighbor node $w$ of $v$ but also provide the label of the in-edge $(v,w)$ for node $w$ denoted as $k$, such that edge $(v,w)$ is the $i$-th out edge for $v$ and $k$-th in-edge for $w$. Note that this stronger model, we adopt by default, reveals more information than the original, ensuring that any lower bounds we derive remain valid.% for the standard arc-centric model. %For simplicity and clarity, we adopt this enhanced model by default later. %In the context of graph algorithms, \textit{query complexity} refers to the number of queries the algorithm invokes to the graph oracle. Hence, query complexity serves as a lower bound to computational complexity.
\end{definition}

\begin{definition}[Graph with edge and node indices]\label{def:arc-centric}
Following the notation in~\cite{goldreich1998property}, we refer to a graph $G = (V, E)$ with edge and node indices with the following labels:
\begin{enumerate*}[label=(\roman*).]
    \item Node labels. Each node \(v \in V\) is assigned a unique label from the set \(\{1, \ldots, |V|\}\);
    \item Edge labels. For each node \(v \in V\), its in-edges and out-edges are labeled with integers of \(\{1, 2, \ldots, d_{\mathrm{in}}(v)\}\) and \(\{1, 2, \ldots, d_{\mathrm{out}}(v)\}\).
\end{enumerate*}
\end{definition}

% \mypara{Remark.} Due to space limitations, we defer all proof details to the online full version~\cite{jiang2025near}. In particular, detailed proofs for ~\cref{lemma:all_rounds,lemma:data_process,lemma:discover,lemma:GDB_prop,lemma:legal_instance,lemma:newdecom,lemma:one_round,lemma:PPR_R_const,lemma:SSPPR-A1,lemma:statistical_hardness,lemma:targetProb} are provided there.
\begin{comment}
\begin{definition}[State]
\label{df:state}
When interacting with the underlying graph via queries, we define the \emph{current state} after performing $t$ queries as  
\[
\mathrm{s_t} = \bigl( (Q_1, R_1), (Q_2, R_2), \ldots, (Q_t, R_t) \bigr),
\]
where $Q_i$ denotes the $i$-th query executed under the allowed query model, and $R_i$ is the corresponding response, such as an edge returned by $Q_i$. Meanwhile, we denote the distribution of state as $\mathrm{S_t}$.
\end{definition}
\end{comment}
\section{Improved Upper Bounds for SSPPR Queries}

{In this section, we present our advances on the upper bounds for approximating SSPPR queries. Existing upper bounds do \emph{not} imply any optimality, even when combined with our new lower bounds in Table~\ref{tab:pprs}. Specifically, the best-known SSPPR-R upper bound {\footnotesize $F_R=\bigo\!\left(\min\!\left(\tfrac{\log(n)}{\delta},\sqrt{\tfrac{m\log(n)}{\delta}},\, m\log\!\tfrac{\log(n)}{\delta m}\right)\right)$} reveals two fundamental gaps from the lower bound {\footnotesize $f_R=\Omega\!\left(\min\!\left(\tfrac{\log(1/\delta)}{\delta},\,m\right)\right)$}: for moderate thresholds (e.g.,\ $\delta=1/\log(n)$) a persistent $\log(n)$ gap remains, and for very small $\delta$, an extra $\log(\tfrac{\log(n)}{\delta m})$ factor prevents matching the $\Omega(m)$ bound. To close these two gaps, we introduce two complementary algorithms: \algoa{}, which refines the classical MC estimator to eliminate the unnecessary global $\log(n)$ dependence and achieves {\footnotesize $\bigo(\log(1/\delta)/\delta)$} for moderate-$\delta$ regimes; and \algor{}, which leverages new insights into state-of-the-art SSPPR techniques and a novel path-based PPR decomposition to achieve {\footnotesize $\bigo\!\left(m+n\log(n)\log(\frac{\log(n)}{\delta m})\right)$}, matching the $\Omega(m)$ lower bound on sufficiently dense graphs (e.g.,\ $m\in\Omega(n\log^2 n)$). Together, these results yield an improved overall upper bound {\footnotesize $\bigo\!\left(\min\!\left(\tfrac{\log(1/\delta)}{\delta},\, m+n\log(n)\log(\frac{\log(n)}{\delta m})\right)\right)$}, matching the lower bounds for most practical settings where $1/\delta\in\mathrm{poly}(n)$ and $m$ is moderately large. We first introduce the main idea of \algoa{}.}

%In this section, we present our advances on the algorithmic upper bounds for SSPPR approximation.  
%We begin by revisiting the limitations of the classical MC method and introduce our refined estimator \algoa{}. The idea is simple, yet it surprisingly resolves these issues and achieves the upper bounds stated in Theorem~\ref{thm:ssppr-a-upper}.  
%Building on new insights into state-of-the-art SSPPR techniques and a novel path-based decomposition of PPR, we further develop \algor{}—the first algorithm to obtain \emph{linear} query complexity in the number of edges \(m\), as formalized in Theorem~\ref{thm:ssppr-r}.  
%This establishes a significant breakthrough in the upper-bound landscape for SSPPR queries, resolving long-standing barriers that have persisted for years.

\mypara{Improve the MC Estimator.}
The classical MC estimator~\cite{mc} enforces a \emph{uniform} accuracy guarantee over all \(n\) nodes, which introduces an unavoidable \(\log(n)\) factor via a union bound (appearing as \(\frac{\log(n)}{\delta}\)). This is well matched to regimes with \emph{extreme} precision (e.g., \(\delta\approx 1/n\)) but overly conservative for \emph{moderate} ones (e.g., \(\delta\approx 1/\log(n)\)). Such a global \(\log(n)\) term becomes a bottleneck for overall approximation. Motivated by this, our idea is, we can first conduct a \emph{modest} number of random walks to obtain an initial estimate that, with high probability, contains all required target nodes—i.e., a node set \(R_{\text{out}}\)—and then we restrict high-precision estimation only to nodes in \(R_{\text{out}}\) via another batch of random walks. This way, we can replace the global \(\log(n)\) term with a \(\log(1/\delta)\) dependence governed by the candidate size. Accordingly, we present \algoa{} (Algorithm~\ref{algo:discover}), which consists of the following two phases:

\begin{algorithm}[!t]
\caption{\algoa{} $(G,s,\alpha,\delta,p_f)$}\label{algo:discover}
\DontPrintSemicolon
\KwInput{Graph $G$, source $s\in V$, decay factor $\alpha$, threshold $\delta$, failure probability $p_f$.}
\KwOutput{$V_{\mathrm{out}}\subseteq V$.}

$T_1 \gets \tfrac{1}{\delta}\,\bigl(\log \tfrac{1}{\delta}+\log \tfrac{2}{p_f}\bigr)$; $T_2 \gets \tfrac{20}{\delta}\,\bigl(\log \tfrac{1}{\delta}+\log \tfrac{2}{p_f}\bigr)$; $V^{\rm raw}_{\text{out}}\gets \varnothing$; $H(\cdot)\gets \mathbf{0}$;\\
\For{$i \gets 1$ \KwTo $T_1$}{
  $v \gets$ end of an $\alpha$-RW from $s$;\\
  $V^{\rm raw}_{\text{out}}\gets V^{\rm raw}_{\text{out}}\cup \{v\}$;
}
$\theta \gets 10\bigl(\log \tfrac{1}{\delta}+\log \tfrac{2}{p_f}\bigr)$; \# For refining the returned node set.\\
\For{$i \gets 1$ \KwTo $T_2$}{
  $v \gets$ end of an $\alpha$-RW from $s$;\\
  $H[v] \gets H[v] + 1$;
}
$V_{\mathrm{out}} \gets \{ u \in V^{\rm raw}_{\text{out}}\mid H[u] \ge \theta \}$; \\
\Return{$V_{\mathrm{out}}$}
\end{algorithm}

\begin{comment}
\begin{wrapfigure}{l}{0.4\textwidth} % r or l; width as needed
  \vspace{-\intextsep} % optional: tighten vertical gap
  \begin{minipage}{\linewidth}
    \begin{algorithm}[H]
    \caption{\algoa{}}\label{algo:discover}
    \DontPrintSemicolon
    \KwInput{Graph $G=(V,E)$, source $s\in V$, decay factor $\alpha$, threshold $\delta$ and failure probability $p_f$.}
    \KwOutput{$R_{out}\subseteq V$.}
    $T_1 \gets \tfrac{1}{\delta}\,\bigl(\log \tfrac{1}{\delta}+\log \tfrac{2}{p}\bigr)$;\\
    $T_2 \gets \tfrac{20}{\delta}\,\bigl(\log \tfrac{1}{\delta}+\log \tfrac{2}{p}\bigr)$;\\
    $V_{\mathrm{out}}\gets \varnothing$; $H(\cdot)\gets \mathbf{0}$;\\
    \For{$i \gets 1$ \KwTo $T_1$}{
      $v \gets$ end of an $\alpha$-RW from $s$;\\
      $V_{\mathrm{out}}\gets V_{\mathrm{out}}\cup \{v\}$;
    }
    \For{$i \gets 1$ \KwTo $T_2$}{
      $v \gets$ end of an $\alpha$-RW from $s$;\\
      $H[v] \gets H[v] + 1$;
    }
    $\theta \gets 10\bigl(\log \tfrac{1}{\delta}+\log \tfrac{2}{p}\bigr)$;\\
    $V_{\mathrm{out}} \gets \{ v \in V_{\mathrm{out}}\mid H[v] \ge \theta \}$;\\
    \Return{$R_{}$}
    \end{algorithm}
    \vspace{-2\intextsep}
  \end{minipage}
\end{wrapfigure}
\end{comment}

\begin{itemize}[leftmargin=10pt,itemsep=0pt,parsep=0pt,topsep=2pt,partopsep=0pt]
\item {Discovery (lines 1-4).} We first conduct \(T_1\) \(\alpha\)-RWs to find an initial candidate node set \(V_{\mathrm{out}}\) such that, with high probability, all nodes of interest \(V_{\ge \delta}:=\{u:\pi(s,u)\ge \delta\mid u\in V\}\) are included.
\item {Estimation (lines 5-9).} We then restrict high-precision estimation to \(V_{\mathrm{out}}\) by running \(T_2\) additional walks. By filtering nodes with tiny values, we'll ensure that the final candidate set $V_{\mathrm{out}}$ contains only meaningful nodes that $V_{\mathrm{out}} \subseteq V_{\ge \delta/(5e)}:=\{v:\pi(s,v)\ge \delta/(5e)\}$. 
\end{itemize}
By rigorously setting \(T_1\) and \(T_2\), we conclude the following approximation guarantee in Lemma~\ref{lemma:discover}.
\begin{lemma}\label{lemma:discover}(Proof in Appendix~\ref{sec:proof:lemma:discover}.)
With probability $\geq1-p_f$, the node set \(V_{\mathrm{out}}\) returned by Algorithm~\ref{algo:discover} satisfies: 
$$V_{\ge \delta}\subseteq V_{\mathrm{out}}\subseteq V_{\ge \delta/(5e)}.$$
\end{lemma}

Based on this discovered candidate set from Algorithm~\ref{algo:discover}, we can then estimate SSPPR values only for nodes in \(R_{\text{out}}\) using the standard MC. 
For SSPPR-A query, we further run $$T = 256[\,{\log\!\big(\tfrac{1}{\delta/(5e)}\big)+\log\!\big(\tfrac{2}{p}\big)}]/[{c^2\,(\delta/(5e))}]=\bigo\!\left({\log(1/\delta)}/{\delta}\right)$$ independent $\alpha$-RWs and for each \(u\in V_{\mathrm{out}}\) estimate \(\pi(s,u)\) by empirical frequency, we then answer SSPPR-R query with overall failure probability \(\le p_f\) combining the Chernoff and union bound. 

For SSPPR-A query, we then replace the error threshold $\delta$ with $\epsilon$. By Lemma~\ref{lemma:discover}, with probability \(\ge 1-p_f/2\) we obtain
\(
V_{\ge \varepsilon} \subset V_{\mathrm{out}} \subset V_{\ge \varepsilon/(5e)}.
\)
Again, we execute $T = 1024{\log\!\big(\tfrac{2}{p_f}\big)}/{\big(\tfrac{\varepsilon}{5e}\big)^2}=\bigo\!\big(\tfrac{1}{\varepsilon^2}\big)
$ independent $\alpha$-RWs from $s$, and estimate \(\pi(s,u)\) for \(u\in V_{\mathrm{out}}\) by the empirical frequency with absolute tolerance \(\varepsilon/(5e)\). Since \(\pi(s,u)\ge \varepsilon/(5e)\) for all \(u\in V_{\mathrm{out}}\), the Chernoff bound yields 
$$\pr[\text{fail at }u]
\le 2\exp\!\Big(-\tfrac{1}{3}\,\pi(s,u)\Big(\tfrac{\varepsilon/(5e)}{\pi(s,u)}\Big)^{\!2}\,T\Big)
\le 2\exp\!\Big(-\ln\tfrac{2}{p} - \tfrac{1}{\pi(s,u)}\Big)
\le \tfrac{p_f}{2}\cdot\pi(s,u).$$ %where the last inequality uses \(e^{-1/x}\le x/2\) for \(x\in(0,1]\), absorbed into the constant in \(T\).  
Summing over \(u\) with \(\sum_u \pi(s,u)=1\) gives total estimation failure at most \(p_f/2\). Together with the probability in discovery phase, the overall failure probability is bounded at most \(p_f\). 

Combining these steps, we finally achieve optimized complexities: SSPPR-A at \(\mathcal{O}(1/\varepsilon^2)\) and SSPPR-R at \(\mathcal{O}(\log(1/\delta)/\delta)\), supporting the main claims in Theorem~\ref{thm:ssppr-a-upper}.

\subsection{\algor{}}
\label{sec:distwalks}
Building on \algoa{}, we observe that the MC approach is highly effective for producing an initial, relatively accurate estimate. Leveraging this insight, we introduce \algor{}, which combines \algoa{} with a new decomposition of SSPPR computation based on paths. We first review the bottlenecks of existing techniques, to motivate our idea. %This novel algorithm attains \emph{linear} complexity in the number of edges and, for the first time, substantially sharpens the upper bounds for SSPPR-R queries, as formalized in Theorem~\ref{thm:ssppr-r-upper}.

%\mypara{Evaluate Personalized PageRank with from Heavy to Light}
\mypara{Current Bottlenecks.}
Current SOTA methods (e.g., FORA and SpeedPPR) combine push and walk procedures based on a key \emph{invariant} decomposition of PPR: $\pi(s,t) = \hat{\pi}(s,t) + \sum_{u \in V} r(u)\pi(u,t),$ where $\hat{\pi}(s,\cdot)$ is the current estimation vector and $r(\cdot)$ denotes the residual values, which is based on the iterative definition of PPR vectors of $\pi(s,\cdot) = (1-\alpha)\mathbf{P}\pi(s,\cdot)+ \alpha \mathbf{1}_{s}.$ These methods typically perform push operations to initialize $\hat{\pi}(s,\cdot)$, followed by $\alpha$-random walks to estimate the residual term using $r(u)$ on active nodes. 
While this hybrid design achieves notable improvements over MC, we identify two critical bottlenecks:
\begin{enumerate}[label=(\roman*)]
    \item The deterministic push phase expends substantial computation on \emph{heavy} nodes with large $\pi(s,v)$, even though their scores could be efficiently estimated with only a few random walks.
    \item The walk phase often revisits nodes whose estimates are already accurate, resulting in redundant computations and poor reuse of previously gathered information for refinement.
\end{enumerate}
These inefficiencies originate from the traditional decomposition itself, which overlooks the deeper relationship between \emph{PPR and its underlying path summations}. 
Motivated by this insight, we propose a novel \emph{path-based decomposition} that yields more accurate initial estimates while fully reusing observed information, eliminating redundant $\alpha$-RW steps and reducing computational complexity.

%We observe that above decomposition stems from the fundamental relationship between PPR and the sum over all paths in the graph. With this insight, we can find more powerful decompositions, which can significantly narrow the gap in different situations.

\mypara{A New Path-based Decomposition.} Let $\overleftrightarrow{p_G}(u,v)$ denote the set of all directed paths between any node pair $(u,v)$ in graph $G$. For any path ${\footnotesize\overleftrightarrow{p}} \in \overleftrightarrow{p_G}(u,v)$, we define its weight $w_G({\footnotesize\overleftrightarrow{p}})$ as the probability that an $\alpha$-decay random walk (RW) traverses ${\footnotesize\overleftrightarrow{p}}$. For example, for a single-edge path ${\footnotesize\overleftrightarrow{p_1}}=\{u,v\}$ with $(u,v)\in E$, we have $w_G({\footnotesize\overleftrightarrow{p_1}})=\frac{1-\alpha}{d_{\mathrm{out}}(u)}$. For two paths ${\footnotesize\overleftrightarrow{p_1}} \in \overleftrightarrow{p_G}(u_2,u_3)$ and ${\footnotesize\overleftrightarrow{p_2}} \in \overleftrightarrow{p_G}(u_1,u_2)$, their concatenation ${\footnotesize\overleftrightarrow{p_1}}\!\circ\!{\footnotesize\overleftrightarrow{p_2}} \in \overleftrightarrow{p_G}(u_1,u_3)$ satisfies $w_G({\footnotesize\overleftrightarrow{p_1}}\!\circ\!{\footnotesize\overleftrightarrow{p_2}})=w_G({\footnotesize\overleftrightarrow{p_1}})\,w_G({\footnotesize\overleftrightarrow{p_2}})$. Without loss of generality, we assume the graph contains no dangling nodes and define the weight of zero-length paths to be zero. Under this formulation, the PPR can be equivalently expressed as {$\pi(s,t)=\alpha \sum_{{\footnotesize\overleftrightarrow{p}} \in \overleftrightarrow{p_G}(s,t)} w_G({\footnotesize\overleftrightarrow{p}})$}, where the summation aggregates contributions from all feasible paths between $s$ and $t$. Building upon this, we then introduce a new path-based \emph{decomposition} that focuses on estimating PPR values of any subset of nodes. This key component plays a fundamental role in our final estimator for SSPPR-R query, replacing the traditional Push-and-Walk paradigm while allowing \emph{sufficient re-using} of intermediate estimations and avoid re-visiting redundant nodes.

Specifically, given any subset $X\subset V$ of nodes with source node $s \notin X$, we define the \emph{boundary} set \boundary{X}, to contain all nodes in $V\backslash X$ which have an out-edge to node set $X$. As such, any path ${\footnotesize\overleftrightarrow{p}}$ starting from $s$ and ending at nodes in $X$ can be linked as ${\footnotesize\overleftrightarrow{p_1}}\circ {\footnotesize\overleftrightarrow{p_2}}$, where ${\footnotesize\overleftrightarrow{p_1}}$ is a path ending in $\boundary{X}$ and ${\footnotesize\overleftrightarrow{p_2}}$ starts in $\boundary{X}$, has its first edge entering $X$, and then stays entirely within $X$ (or intuitively, it is the path ever since the last visit of outside $X$).
This partition exists and is unique for each \({\footnotesize\overleftrightarrow{p}} \in \overleftrightarrow{p_G}(s,u)\) for any source $s$ and node $u\in V$. Thereby, for any node $t \in X$ we have that 
\begin{align}\label{Eq:decom1}
\pi(s,t) &= \alpha \begin{matrix}\sum_{u\in \boundary{X}} \sum_{{\footnotesize\overleftrightarrow{p_1}} \in \overleftrightarrow{p_G}(s, u)}\end{matrix} w_G({\footnotesize\overleftrightarrow{p_1}}) \begin{matrix}\sum_{{\footnotesize\overleftrightarrow{p_2}} \in \overleftrightarrow{p_G}(u, t|X)}\end{matrix} w_G({\footnotesize\overleftrightarrow{p_2}}) \\ &= 
\begin{matrix}\sum_{u\in \boundary{X}}\end{matrix} \pi(s, u) \begin{matrix}\sum_{{\footnotesize\overleftrightarrow{p_2}} \in \overleftrightarrow{p_G}(u,t|X)}\end{matrix} w_G({\footnotesize\overleftrightarrow{p_2}}),
\end{align}
where $\overleftrightarrow{p_G}(u,t|X)$ denotes paths that only contain nodes in $X$ except the starting point. Equation~\ref{Eq:decom1} indicates that $\pi(s,t)$ can be expressed as the combination of:  
(i) intermediate contributions from the boundary set via $\pi(s,u)$ for $u\in\boundary{X}$, and  
(ii) a restricted path–weight aggregation over paths that stay entirely inside $X$ after entering it.  
Consequently, once accurate estimates of $\pi(s,u)$ are available for all $u\in\boundary{X}$, the remaining contribution to $\pi(s,t)$ for $t\in X$ can be computed using only nodes and edges within $X$, without revisiting any node outside $X$. A key observation is that the residual term  
${\sum_{p_2 \in \overleftrightarrow{p_G}(u,t\mid X)} w_G(p_2)}$  
can be equivalently expressed through PPR values on a carefully constructed auxiliary graph $G_X$.  
This equivalence is formalized below:

\begin{lemma}[Decomposition](Proof in Appendix~\ref{sec:proof:lemma:newdecom}.)
\label{lemma:newdecom} Given graph $G$, subset $X$ and source node $s \notin X$, we have \begin{align}
\label{eq:decomp-pi}
\begin{matrix}
\pi_G(s,t) = \frac{1}{\alpha}\sum_{u \in \boundary{X}}{\pi_G(s,u)\pi_{G_X}(u, t)d_{out}^{G_X}(u)}/{d_{out}^{G}(u)},\end{matrix}
\end{align} 
where $G_X$ is constructed by: \begin{enumerate}[label=(\textcolor{myblue}{\Alph*}).] \item $V_{G_X} = X \cup \boundary{X} \cup \{v_X\}$, where $v_X$ is a new auxiliary node. \label{construction_1} 
\item For nodes $u\in V$ and $t\in X$, if directed edge $(u,t)\in E$, add this edge to $E_{G_X}$. \label{construction_2} 
\item For nodes $t\in X$ and $u \in V/X$, if directed edge $(t,u)\in E$, add edge \((t, v_X)\) to $E_{G_X}$. \label{construction_3} 
\end{enumerate} \end{lemma}

Intuitively, Step \ref{construction_1} eliminates any new dangling nodes introduced by the out-edge deletion, and Step~\ref{construction_2} preserves all edges entering $X$, while Step~\ref{construction_3} redirects all edges leaving $X$ to the sink-like node $v_X$, ensuring that $d_{\mathrm{out}}^{G_X}(t)=d_{\mathrm{out}}^{G}(t)$ for all $t\in X$. This construction precisely encodes the restricted path aggregation in Equation~\ref{Eq:decom1}.  
We emphasize that $G_X$ is a conceptual object used for analysis and we avoid materialize it explicitly to maintain a bounded complexity later.

\begin{algorithm}[!t]
\caption{\(\mathsf{\algorr{}}     \ (G,X,\hat{\pi}_G(s,\cdot),\alpha,c_r,\delta_r,p_r)\)}
\label{algo:dist_walk}
\DontPrintSemicolon
\KwInput{Graph \(G\); target set \(X\); initial estimation $\hat{\pi}_G(s,\cdot)$; decay factor \(\alpha\); error parameter \(c_r\); threshold \(\delta_r\); failure probability \(p_r\).}
\KwOutput{Detected set $X_{out}\in X$ and updated estimation $\hat{\pi}_G(s,\cdot).$}

Construct $G_X$ following steps \ref{construction_1}\ref{construction_2}\ref{construction_3}\\
$ \forall u\in \boundary{X}, \mathbf{\hat{S}_r}(u) \gets {d_{\mathrm{out}}^{G_X}(u)\hat{\pi}_G(s,u)}/{(\alpha\,d_{\mathrm{out}}^{G}(u))};$ $\texttt{SUM}(\mathbf{\hat{S}_r}) \gets \sum_{u \in \boundary{X}}\mathbf{\hat{S}_r}(u), \mathbf{\hat{\bar{S}}_r} \gets \mathbf{\hat{S}_r}/\texttt{SUM}(\mathbf{S_r})$;\\
Construct an alias table for \(\mathbf{\hat{\bar{S}}_r}\); \\
Use it to execute $X_{out} \gets \algoa{}(G_X, \mathbf{\hat{\bar{S}}}, \alpha, \delta_r, p_r)$;\\

$N_r\gets {256\alpha\log(n/p_r)}/\left({c_r^2\delta_r}\right), H(\cdot)\gets \mathbf{0};$ \\

\For{\(i=1,2,\dots,N_r\)}{
Draw a node \(u\sim \mathbf{\hat{\bar{S}}}\); \\
Let $v \gets$ end of i-th $\alpha$-RW from \(u\) on $G_X$;\\
\textbf{if}\(v\in X_{out}\) \textbf{then} \(H[v]\gets H[v]+\frac{1}{N_r}\);\\
}

\(\hat\pi_G(s,t) \gets H(t)\cdot \texttt{SUM}(\mathbf{\hat{S}_r}) \text{ for } \forall t \in X_{out}; \) \\

\Return{\(X_{\text{out}}\,,\,\hat{\pi}_G(s,\cdot)\)}
\end{algorithm}

\begin{comment}
\begin{algorithm}[!h]
\caption{\(\mathsf{\algor{}}\)}
\label{algo:dist_walks}
\DontPrintSemicolon
\KwInput{Graph \(G\); source set \(V_{\text{source}}\); target set \(X\); source distribution \(S\) (supported on \(V_{\text{source}}\)); lower bound \(q_w\) on \(\sum_{x\in X}\pi(S,x)\); decay \(\alpha\); error \(c\in(0,1)\); threshold \(\delta\); failure probability \(p\).}
\KwOutput{Estimates \(\hat\pi(S,\cdot)\) on \(X\) and detected set \(V_{\text{out}}\subseteq X\).}
\BlankLine
\(\displaystyle N \gets C \cdot \frac{\log|X|+\log(1/p)}{q_w\,c^2\delta}\) for a sufficiently large constant \(C\);\;
\(N_{\mathrm{eff}}\gets 0\);\;
\(W[x]\gets 0\) for all \(x\in X\);\;
\BlankLine
Preprocess an alias table for \(S\) in \(O(|V_{\text{source}}|)\) time;\;
\For{\(w=1,2,\dots,N\)}{
  Sample \(s\sim S\) via the alias table;\;
  Run an $\alpha$-RW from \(s\) to its terminal vertex \(v\), without transportation to \(V_{pad}\) at \(s\);\;
  \If{\(v\in X\)}{
    \(N_{\mathrm{eff}}\gets N_{\mathrm{eff}}+1\);\;
    \(W[v]\gets W[v]+1\);\;
  }
}
\(V_{\text{out}}\gets \{x\in X:\; W[x] \ge (\delta/32)\,N_{\mathrm{eff}}\}\);\;
\(\hat\pi(S,x) \gets \begin{cases} W[x]/N_{\mathrm{eff}}, & x\in X\ \text{and}\ N_{\mathrm{eff}}>0,\\[2pt] 0, & \text{otherwise.}\end{cases}\)\;
\Return{\(V_{\text{out}},\,\hat\pi(S,\cdot)\)}\;
\end{algorithm}

\end{comment}
By Lemma~\ref{lemma:newdecom}, we can now construct an on-hand estimator to solve PPR estimations for any interested nodes in the given subset $X$. Based on an initially accurate estimation $\{\hat{\pi}_G(s,u)\mid u\in \boundary{X}\}$ (e.g., by running MC), the remaining part becomes $\alpha$-RW simulation for $\pi_{G_X}(u,\cdot)$ on graph $G_X$ with distributed starting nodes $u\in \boundary{X}$ valued $\mathbf{\hat{S}_r}(u):=\frac{1}{\alpha}{\hat{\pi}_G(s,u)d_{\mathrm{out}}^{G_X}(u)}/{d_{\mathrm{out}}^{G}(u)}$. This sourced $\alpha-$RW simulation can be readily achieved by first constructing an alias table with normalized probability distribution $\mathbf{\hat{\bar{S}}_r}:=\mathbf{\hat{S}_r}/\texttt{SUM}(\mathbf{\hat{S}_r}),$
$\text{where } {\small\texttt{SUM}(\mathbf{\hat{S}_r}) =\sum_{u \in \boundary{X}}\mathbf{\hat{S}_r}(u)}$. This step costs $\bigo(|\boundary{X}|)$ for construction and supports sampling in $\bigo(1)$ time and then simulate a proper amount of $N_r$ $\alpha-$RWs. This gives rise to our \emph{intermediate} Algorithm~\ref{algo:dist_walk} ($\algorr{}$), which can accurately estimate PPR values of nodes in target set $X$ larger than the pre-defined threshold $\delta_r$ (returned as node set $X_{out}$), based on some initial estimation on the boundary set $\boundary{X}$. With proper parameter settings, we guarantee the quality of algorithm $\algorr{}$ with relative error $c_r$ and failure probability $p_r$:

\begin{lemma}[\algorr{} Guarantee](Proof in Appendix~\ref{sec:proof:lemma:one_round}.)
\label{lemma:one_round}
Assume $n$ is sufficiently large and initial estimation satisfies $\forall u\in \boundary{X}, \hat\pi_G(s, u)/\pi_G(s, u) \in [1-c, 1+c]$ for constant $c\in (0,\frac{1}{2}]$. Then with probability $\geq 1 - p_r$, the output set $X_{out}$ and updated estimation $\hat{\pi}_G(s,\cdot)$ from Algorithm~\ref{algo:dist_walk} satisfy: 
\begin{enumerate}[label=(\roman*)]
\item $X_{\geq \delta_r\cdot\texttt{SUM}(\mathbf{\hat{S}_r})} \subset X_{out}$. That is---$\forall x \in X$ with $\pi_G(s,x) \geq \delta_r\cdot\texttt{SUM}(\mathbf{\hat{S}_r})$, it's included in set $X_{out}$. 
\item $\forall t \in X_{out}, \hat\pi_G(s, t)/\pi_G(s, t) \in [(1-c)(1-c_r), (1+c)(1+c_r)]$ holds for relative error guarantee. %, where $0<c_r<1/2$ is the error parameter in the algorithm.
\end{enumerate}%{\color{red}What is $c_r$?}
\end{lemma}

By Lemma~\ref{lemma:one_round}, Algorithm~\ref{algo:dist_walk} provides a basic mechanism for solving the SSPPR-R query: one may simply add an auxiliary source  $s'$ together with a directed edge $(s, s')$ to $G$, and take the initial target set $X = V\setminus\{s'\}$, with the boundary consisting solely of the source node $s$, for which $G_X =G, \hat{\pi}_G(s',s') = \pi_G(s',s')=\alpha$ is known exactly. 
Setting the round parameters to $c_r = c$, $p_r = p$, and $\delta_r = \delta$ makes \algorr{} degenerate into the raw \algoa{} procedure, which performs $\alpha$-RWs directly from $s$ similarly. However, simulating $\alpha$-RWs reveals a key structural property: nodes with larger PPR values relative to $s$ (\emph{heavy} nodes) are encountered far more frequently and therefore require significantly fewer samples to obtain accurate estimates compared to nodes with very small PPR values (\emph{light} nodes). This motivates us a progressive refinement strategy. Instead of estimating all nodes at once, we expand the estimation frontier gradually—from heavy nodes to increasingly lighter nodes—by decreasing the threshold round-by-round, eventually reaching the target $\delta$. 
This multi-round refinement greatly integrates with core of \algorr{} that estimated heavy nodes will not be re-visited in subsequent rounds (as we only utilize $G_X$), avoiding redundant costs. To implement the above refinement strategy, we introduce a shrinking factor $\Delta>1$ that exponentially decreases the detection threshold from an initial value $\delta_0\in\Theta(\frac{1}{n})$ down to the target threshold $\delta_R\texttt{SUM}(\mathbf{\hat{S}_r}) \leq \delta$ over $R\in\bigo\log_\Delta(1/(n\delta))$ rounds. This yields the main procedure of \algor{} framework (Algorithm~\ref{algo:dist_walks}) for solving the SSPPR-R query with great complexity.

\begin{algorithm}[!b]
\caption{$\mathsf{\algor{}}(G,s,\alpha,c,\delta,p_f,\Delta)$}
\label{algo:dist_walks}
\DontPrintSemicolon

\KwInput{Graph $G$; source node $s$; decay factor $\alpha$; error $c$; threshold $\delta$; probability $p_f$; shrinking factor $\Delta$.}

\KwOutput{Estimation $\hat{\pi}_G(s,\cdot).$}

$\hat{\pi}_G(s,\cdot) \gets \mathbf{0},\;
\delta_0 \gets \frac{\alpha}{(1+c)n\Delta},\;
R \gets \log_\Delta(\frac{2\delta_0}{\alpha\delta});$\\

$V_{0} \gets \algoa{}(G, s, \delta_0, p_f)$;\\

Cast $\frac{320R^2\log(n/p_f)}{c^2\delta_0}$ $\alpha$-RWs from $s$
to estimate $\hat{\pi}_G(s,u)$ for all $u\in V_{0}$; \\
$X_0 \gets V \backslash V_{0}$;\\

\For{$r=1,\dots,R-1$}{
    $c_r=\frac{c}{4R},\; \delta_r=\delta_0,\; p_r=\frac{p_f}{2n}$;\\
    $V_{r}, \hat{\pi}_G(s,\cdot) \gets
    \mathsf{\algorr{}}(G, G_X, X_{r-1}, \hat \pi_G(s,\cdot), \alpha, c_r,\delta_r,p_r);$\\
    $X_r \gets X_{r-1}\backslash V_r$;
}
\Return{$\hat{\pi}_G(s,\cdot)$}
\end{algorithm}

We begin by identifying an initial heavy node set $V_0$ using \algoa{} (line~\(2\)), and estimating their PPR values accurately via $\alpha$-RWs (line~\(3\)). The complement $X_0 = V \setminus V_0$ then becomes the initial target set for refinement. For each round $r=1,\dots,R-1$, we invoke the single-round estimator \algorr{} on the current target set $X_{r-1}$, producing a newly resolved node set $V_r$ with certified relative-error guarantees. The remaining unresolved nodes form the next target set $X_r = X_{r-1} \setminus V_r$. As the remaining PPR scores decreases geometrically by factor $\Delta$ across rounds, the algorithm progressively expands its estimation coverage from heavy nodes toward light nodes. After $R$ rounds, all nodes with $\pi(s,x)\ge\delta$ will then be included ideally. Although chaining multiple refinement rounds may seem to introduce cumulative approximation error, we show that all such errors can be rigorously controlled. By carefully choosing the parameters $\delta_r$, $c_r$, and $p_r$ in each round (line~\(5\)), the overall estimator remains within desired SSPPR-R query guarantees, formally established as below:

\begin{lemma}[\algor{} Guarantee](Proof in Appendix~\ref{sec:proof:lemma:all_rounds}.)
\label{lemma:all_rounds}
Algorithm~\ref{algo:dist_walks} correctly solves the SSPPR-R query: it returns an estimated SSPPR vector $\hat{\pi}(s,\cdot)$ that with probability at least $1-p_f$, it holds $|\hat{\pi}(s,u)-\pi(s,u)| \le c\cdot \pi(s,u)$ for all node $u \in V$ with PPR value $\pi(s,u)\ge \delta$.
\end{lemma}

\subsection{Complexity analysis}
We next analyze the overall complexity. Recall that we avoid explicitly constructing $G_X$ during each round of estimation. We first describe the implementation details needed to derive the result.

\mypara{Implicit Construction of $G_X$.}
Instead of explicitly materializing $G_X$ in each round $r$ (denoted as $G_X^{(r)}$), we observe that it suffices to maintain and update the out-neighbors of nodes newly estimated in that round (i.e., the node set $V_r$). Initially, we construct $G_X^{(0)}$ by augmenting the original graph $G$ with a global auxiliary node $v_X$. Then, immediately after the node set $V_r$ is estimated in round $r$, we update $G_X^{(r-1)}$ to $G_X^{(r)}$ by processing all edges incident to each node $v_{\text{new}} \in V_r$ using the following steps, while maintaining the updated out-degree $d_{out}^{G_X}$.

\begin{enumerate}[label=(\roman*)]
\item For any incoming edge $e=(u,v_{\text{new}})$ where $u\in X_r$ (the remaining nodes to be estimated), we redirect $e$ to point to $v_X$, i.e., update $e$ to $(u,v_X)$.
\item For any incoming edge $e=(u,v_{\text{new}})$ where $u\notin X_r$ (nodes already estimated), we remove $e$.
\item For any outgoing edge $e=(v_{\text{new}},u)$ where $u\notin X_r$ (nodes already estimated), we remove $e$.
\item If after steps (b) or (c) an already estimated node $u\notin X_r$ satisfies $d_{out}^{G_X}(u)=0$, we remove $u$.
\end{enumerate}

These steps dynamically recover $G_X^{(r)}$ in each round without reconstructing it from scratch. Consequently, the update cost for each new node $v_{\text{new}}\in V_r$ is bounded by $\bigo(d_{in}(v_{\text{new}})+d_{out}(v_{\text{new}}))$. Since $\bigcup_r V_r$ equals the initial target set $X_0$, the total cost of all updates is bounded by $\bigo(m)$.

Finally, we note that the construction of $G_X$ may introduce parallel edges, which can be handled directly by algorithm $\algoa$. Alternatively, parallel edges can be avoided by adding $\bigo(n)$ auxiliary (padding) nodes $u'_1,\dots,u'_n$ and redirecting edges in FIFO order. This modification does not increase the number of nodes in recursive applications of the construction, since the same padding nodes can be reused. The remaining cost arises from $\alpha$-RW sampling and alias-table construction. Each round performs 
$N_r=\Theta((\log |X_r|/p_r)/(c_r^2\delta_r))$ independent $\alpha$-random walks and builds an $\bigo(n)$ alias table. Since the term $\texttt{SUM}(\mathbf{\hat{S}_r})\delta_r$ shrinks geometrically and our parameter settings satisfy $\delta_r=\Omega(1/n)$ and $c_r=\Omega(1/R)$, the algorithm terminates in $\bigo(\log(1/(n\delta)))$ rounds. This yields an overall complexity of
$\bigo\!\left(m+n\log n\,\log^3(1/(n\delta))\right)$ for \algor{}.

\mypara{Reducing $\log$ factors.}
So far, we have presented the core framework of \algor{}, which establishes a computational upper bound of
$O(m+n\log n\,\log^3(1/(n\delta)))$ for the SSPPR-R query. While this bound already demonstrates near-optimality in dense regimes, the logarithmic factor $\log^3(1/(n\delta))$ can be further improved to $\log(\frac{\log n}{m\delta})$ through additional algorithmic optimizations, which matches the bound stated in Theorem~\ref{thm:ssppr-r-upper}. To achieve this improvement, we introduce the following two technical refinements without altering the overall framework:

\begin{enumerate}[label=(\roman*)]
\item \textbf{Adaptive thresholds for $\delta_r$.}
We employ a doubling strategy to dynamically determine the threshold $\delta_r$ in each round. This allows the computational cost to be amortized across rounds, scaling as $\bigo(n/R)$ instead of the worst-case $\bigo(n)$ per round.

\item \textbf{Global error analysis.}
We revisit the accuracy guarantees from a global perspective. Instead of relying on worst-case error propagation from previous rounds (as in Lemma~\ref{lemma:one_round}), we analyze the cumulative statistics of the estimator, enabling tighter variance bounds and improved overall complexity.
\end{enumerate}

The details of above two improvements are displayed in Appendix~\ref{sec:proof:log} for better readability and completeness. Combining all these augments, we claim that Theorem~\ref{thm:ssppr-r-upper} holds. \qed

\section{Lower Bound of SSPPR-R Query}
\label{sec:ssppr_r_lower}
In this section, we prove the lower bound for SSPPR-R as stated in Theorem~\ref{thm:ssppr-r}. As stated from $\algoa$, we have established an upper bound query complexity as $\bigo(\log(1/\delta)/\delta)$. Since we can use $\bigo(m)$ queries to acquire the entire graph, the total query complexity thus will not surpass $\Omega(\min(m, \log(1/\delta)/\delta))$, which is exactly the \emph{optimal in arc-centric graph access model} of our result. The overall proof proceeds in three main steps: \begin{enumerate*}[label=(\roman*).]
\item  Construct a family of hard-instance graphs with distribution;
% \item  Define a distribution over that family;
\item  Reduce the SSPPR-R query to a simpler query type, i.e. examining certain amount of edges over the distribution;
\item Derive the query complexity of the simplified problem over the distribution, based on observed known graph structures through graph oracle interactions.
%Analyze the query complexity problem on the distribution via the distribution of its answers over known queried results.
\end{enumerate*}

\mypara{Hard-instance Graph family.}
To construct a family of hard-instance graphs with accurate analysis of query complexity, we model different queries under uniform measurement. To achieve this, we keep the same node subsets and their in/out-degrees for all the graphs. In this setting, the query amount required to execute equals the known edge amount. Additionally, we need to reduce the SSPPR-R problem on this graph to a node-classification problem. To achieve this, we will construct a target node set $X$ and a ground truth split $X = X_1 \sqcup X_2$, where nodes in $X_1$ and $X_2$ have marginal PPR scores violating the SSPPR-R requirement. Thus, any algorithm must predict this split. Formally, we define the graph family $\mathcal{U}(n, D, d)$ such that, for every sufficiently large $n$, it serves as a collection of hard instances for estimating SSPPR queries. This family is constructed based on a simple yet insightful core structure $\mathbf{U}$, illustrated in Figure~\ref{fig:UUr_instance}. 

\begin{figure} 
  \centering
  \vspace{-0.5\intextsep}
  \includegraphics[width=0.88\linewidth]{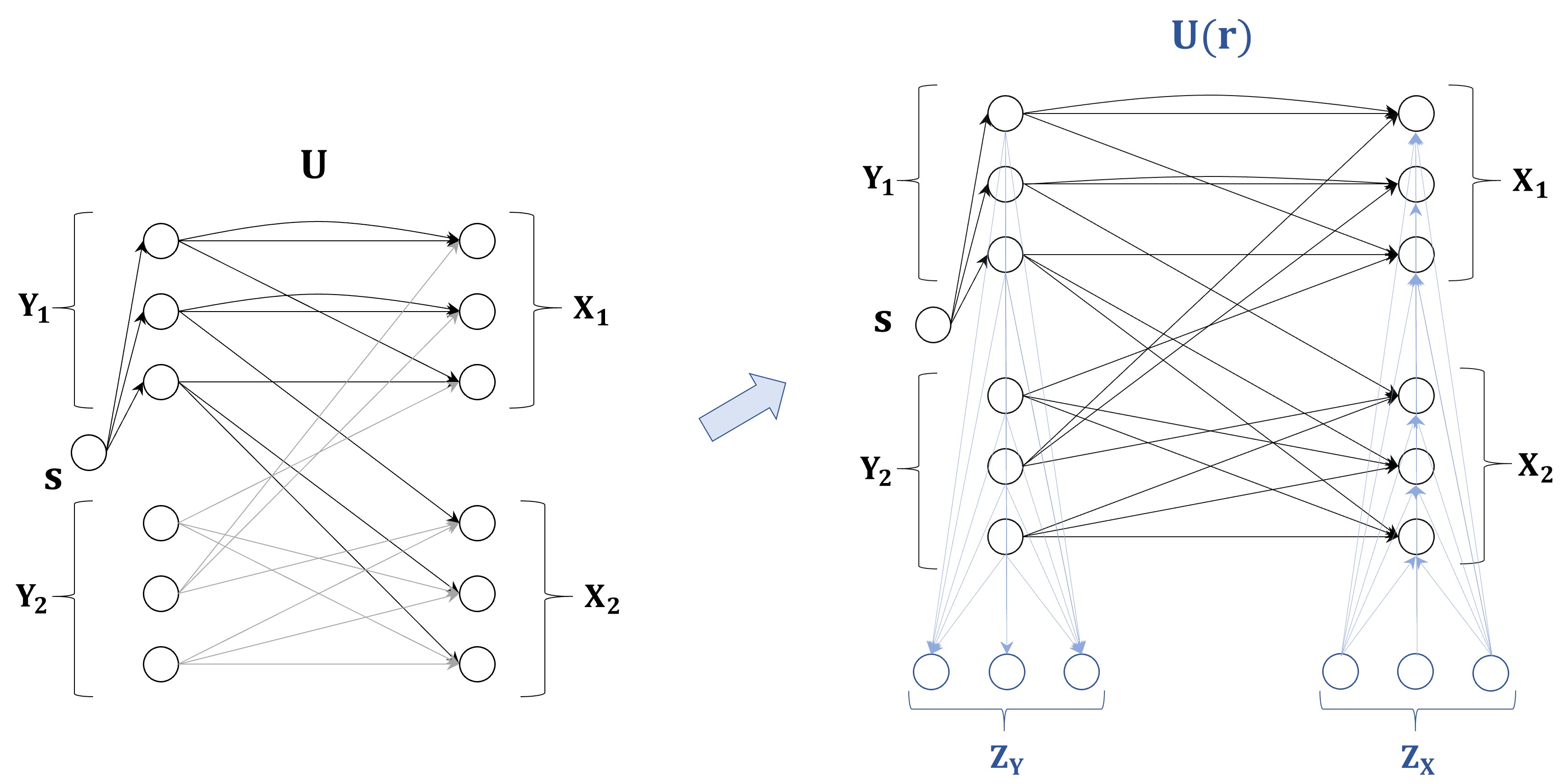}
  \vspace{-1\intextsep}
  \caption{Instance $\mathrm{U}\to r$-padded Instance $\mathrm{U}(r)$.}
  \label{fig:UUr_instance}
  %\vspace{-\intextsep}
\end{figure}

\begin{comment}

\begin{wrapfigure}{l}{0.41\textwidth} % r/l for side; set box width
  %\vspace{-\intextsep} % optional: tighten vertical gap
  \centering
  \vspace{-0.5\intextsep}
  \includegraphics[width=\linewidth]{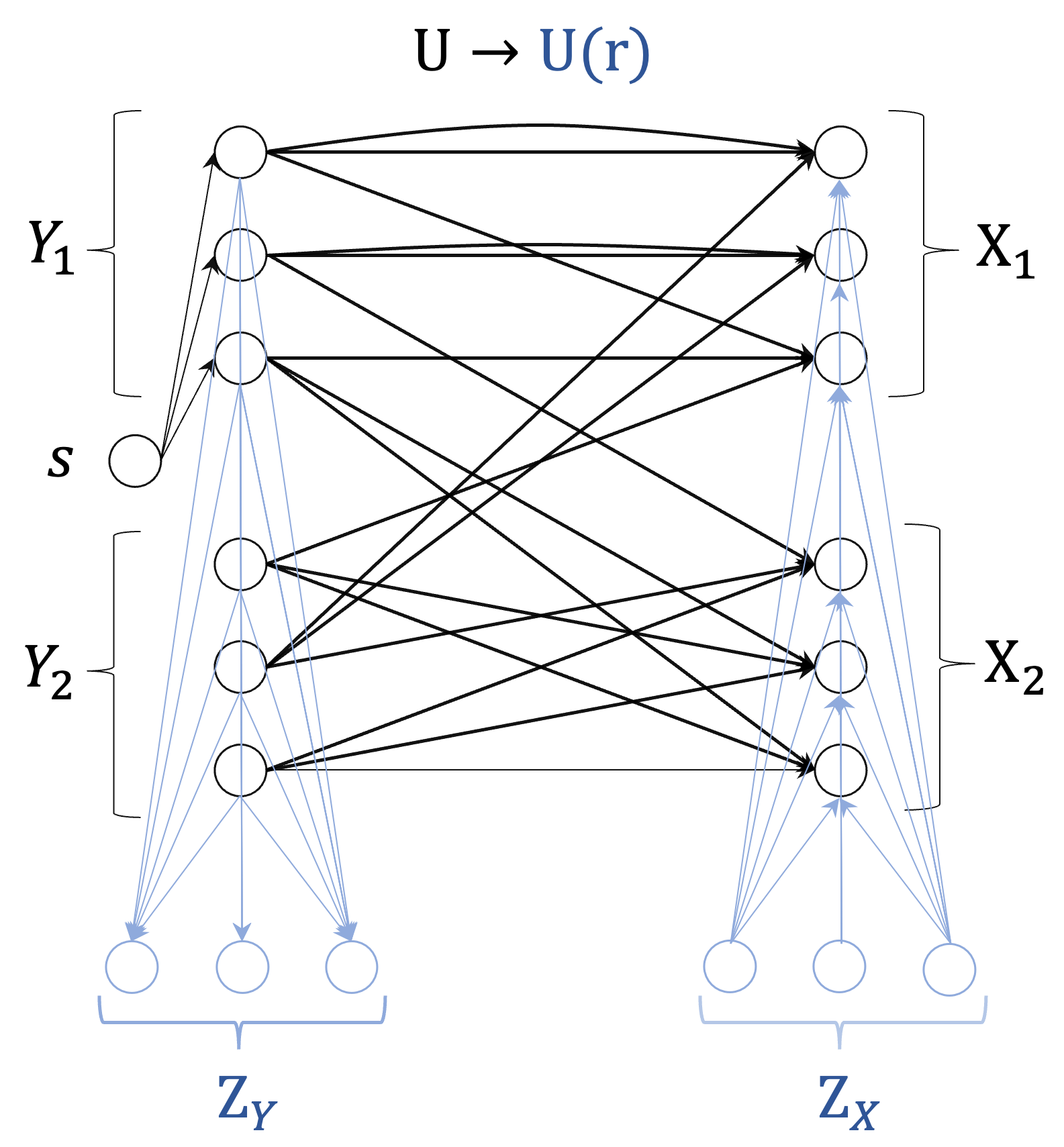}
  \vspace{-1.5\intextsep}
  \caption{Instance: $\mathrm{U}\to r$-padded $\mathrm{U}(r)$.}
  \label{fig:UUr_instance}
  \vspace{-\intextsep}
\end{wrapfigure}

\begin{figure}[!b]
    \centering
    \subcaptionbox{Instance $\mathcal{U}$\label{fig:U_instance}}%
    [0.45\linewidth]{\includegraphics[height=2.8in]{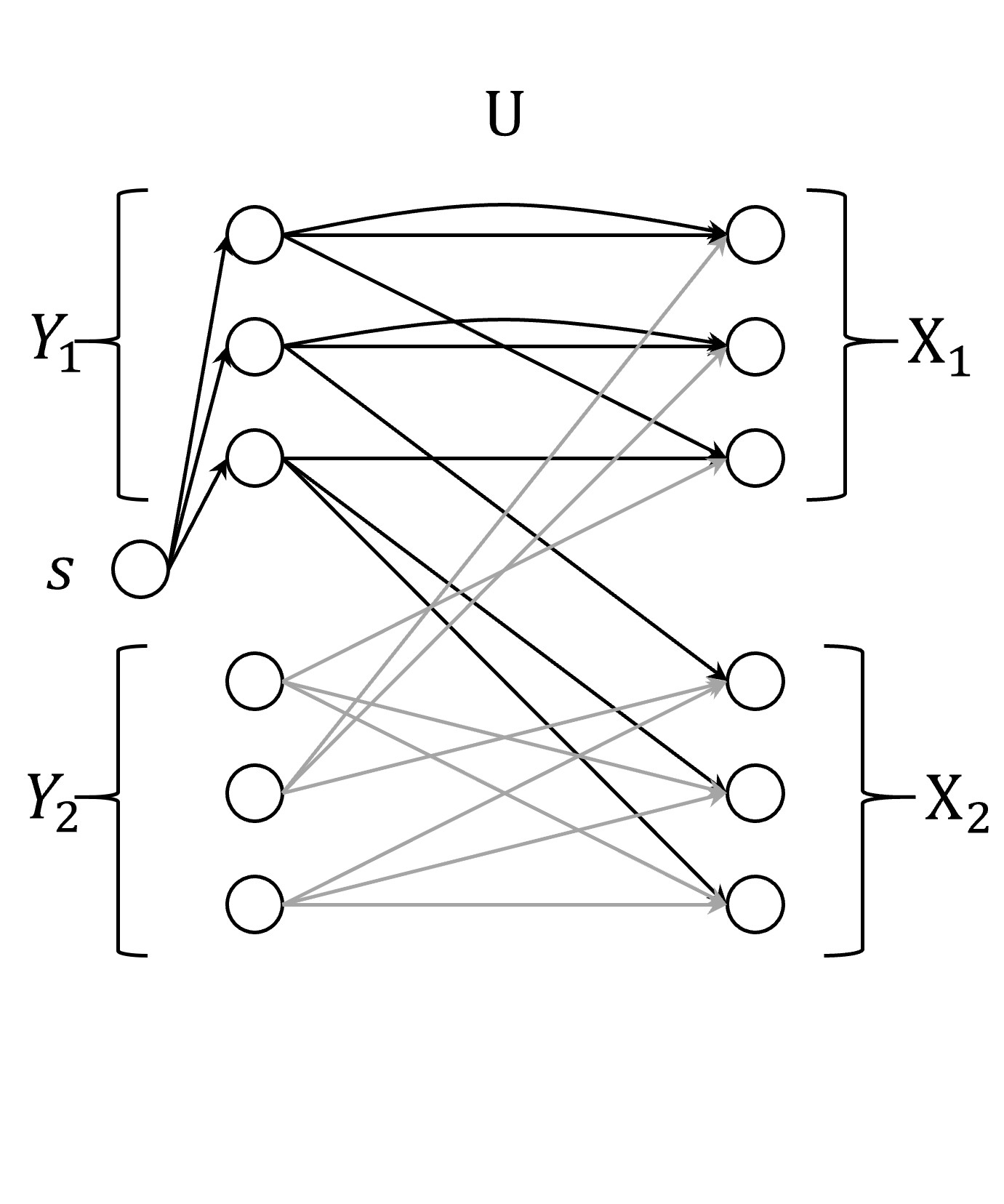}}
    \hfil
    \subcaptionbox{$r$-padded instance $\mathcal{U}(r)$\label{fig:U(r)_instance}}%
    [0.45\linewidth]{\includegraphics[height=2.8in]{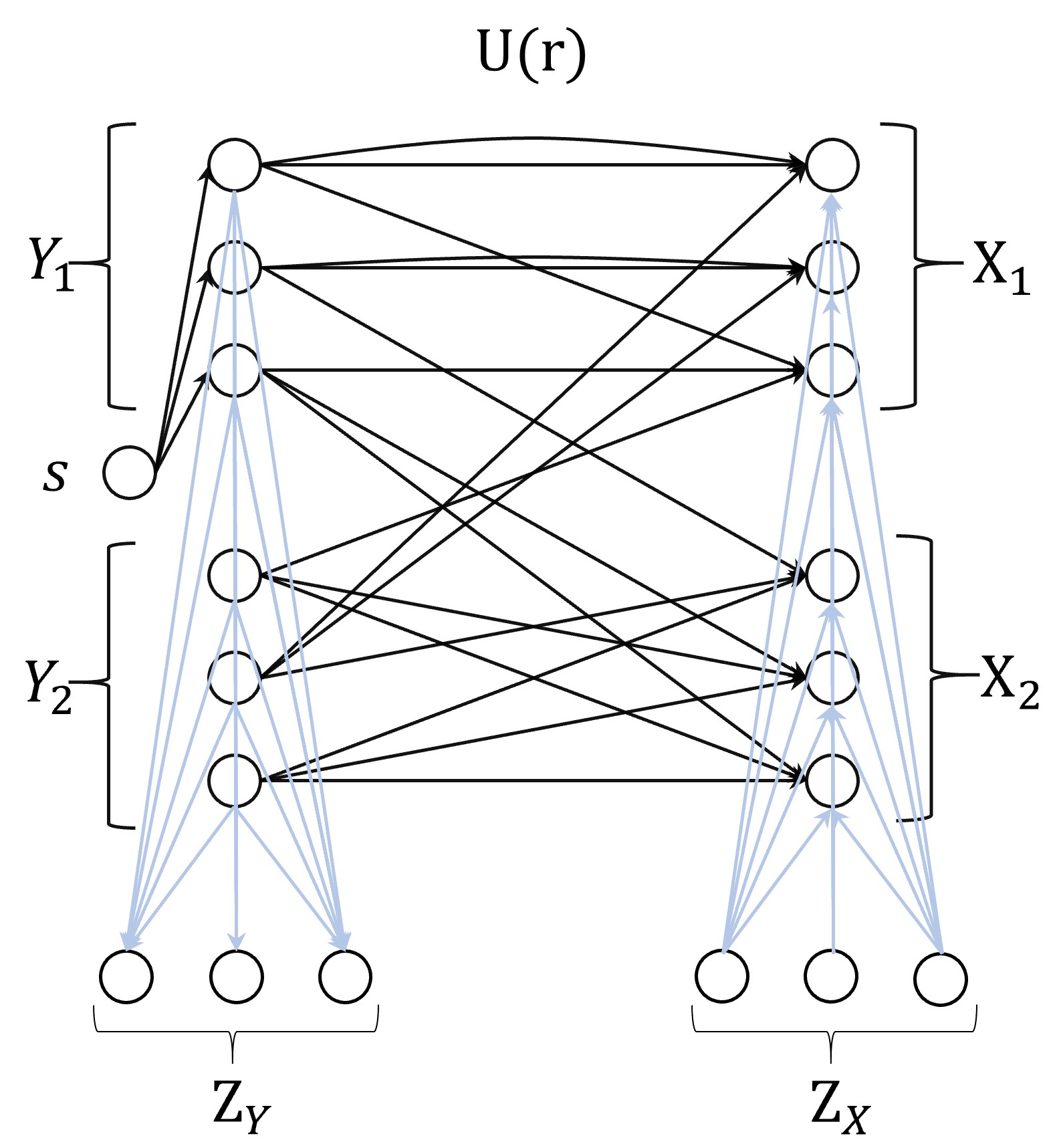}}
    %\hfil
    %\subcaptionbox{Auxiliary instance $\Tilde{U}$\label{fig:UZ_instance}}%
    %[0.3\linewidth]{\includegraphics[height=2.in]{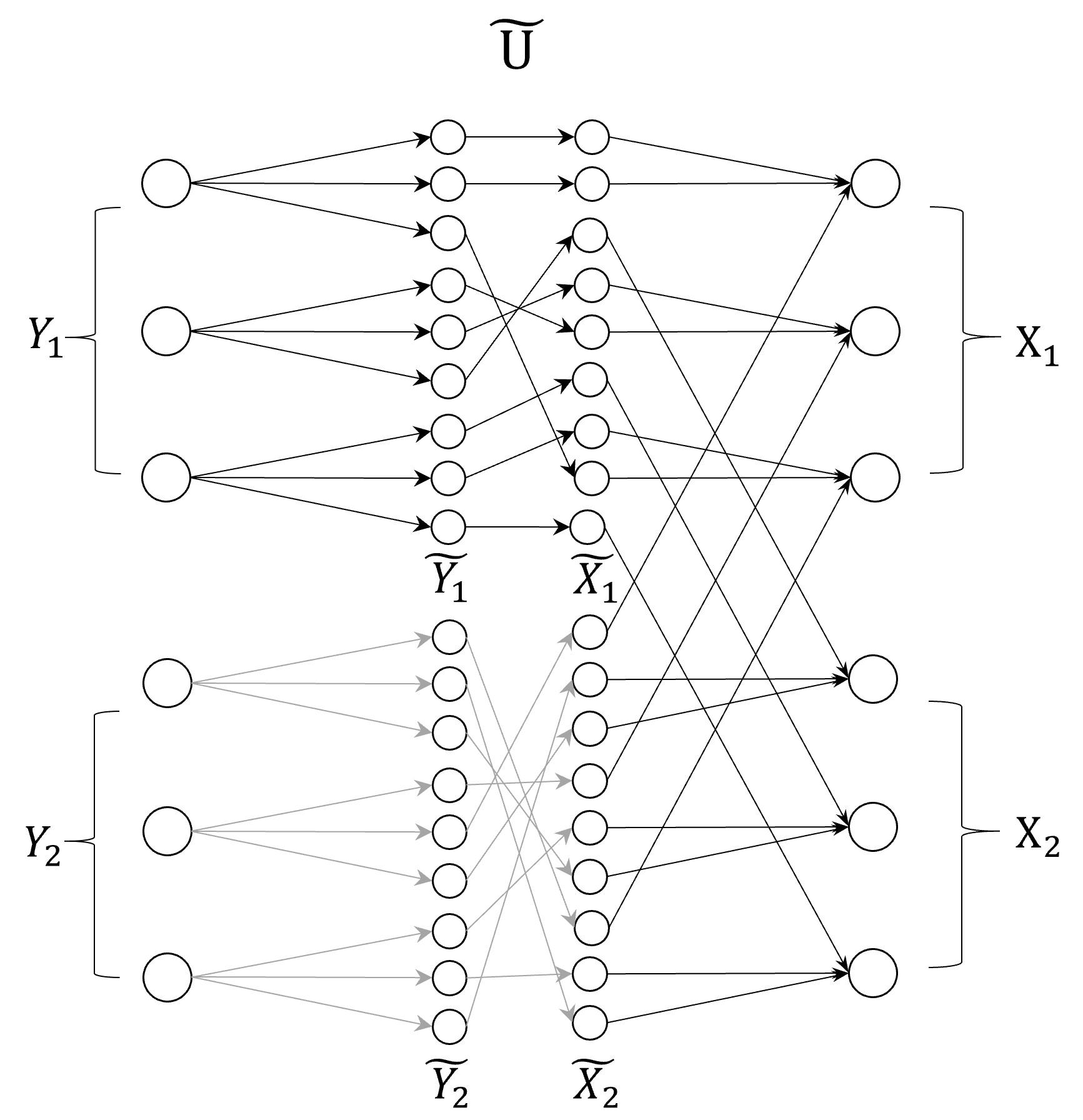}}
%\vspace{-0.5ex}
\caption{Graph instance family $\mathcal{U}(n, D, d)=\mathcal{U}(3, 2, 1)$. (a) denotes our base graph instance family $\mathcal{U}(n, D, d)$ and (b) represents the $r$-padded instance $\mathcal{U}(r)$ based on $\mathcal{U}$.}

\label{fig:instance}
%\vspace{-2.5ex}
\end{figure}
\end{comment}

The graph is constructed from five disjoint node sets: $\{s\}$, $Y_1$, $Y_2$, $X_1$, and $X_2$. For convenience, let $Y = Y_1 \cup Y_2$ and $X = X_1 \cup X_2$. The node $s$ is the designated \emph{source} node. All nodes in $X$ are treated as \emph{target} nodes, whose PPR values with respect to $s$ we aim to estimate. Each of the sets $X_1$, $X_2$, $Y_1$, and $Y_2$ contains $\Theta(n)$ nodes. We next describe how edges are constructed. Every node $y_1 \in Y_1$ has an incoming edge from the source node $s$. For each node $x_1 \in X_1$, we add $D$ incoming edges from nodes in $Y_1$ and $D-d$ incoming edges from nodes in $Y_2$. Similarly, each node $x_2 \in X_2$ receives $D$ incoming edges from $Y_2$ and $D-d$ incoming edges from $Y_1$. In addition, every node in $X$ has a self-loop. With suitable choices of the parameters, the nodes in $X_1$ and $X_2$ will have noticeably different PPR values with respect to $s$. This separation will later be formalized in Lemma~\ref{lemma:PPR_R_const}. To allow for more flexible control over the graph's parameters like node/edge count, we next introduce a modified structure, the $r$-padded instance. Intuitively, information of the graphs is encoded by edges between $Y$ and $X$. We can add some padding edges from nodes in $Y$ (or to nodes in $X$) to confuse the algorithm. \(\forall\mathrm{U} \in \mathcal{U}(n, D, d)\), we define the \emph{\(r\)-padded} instance of \(\mathrm{U}\), denoted as \(\mathrm{U}(r)=(V_{U(r)},E_{U(r)})\) in the below by involving two independent node sets $Z_X$ and $Z_Y$:
\begin{enumerate*}
    \item $V_{U(r)}=V_U\cup Z_X\cup Z_Y$, where $Z_X,Z_Y$ are two additional node sets containing exactly $r$ nodes;
    \item $E_{U(r)}=E_U \cup \{(z_x, x)\;\mid \forall x \in X, z_x \in Z_X\} \cup \{(y, z_y)\;\mid \forall y \in Y, z_y \in Z_Y\} \cup \{(z_y, z_y)\;\mid \forall  z_y \in Z_Y\}$, which incorporates extra edges coming from node set $Z_x$ to $X$ and $Y$ to $Z_Y$, along with self-loops in $Z_Y$.
\end{enumerate*}
%An illustrative example is provided in Figure~\ref{fig:UUr_instance}. 
Intuitively, when \(r\) is sufficiently large relative to \(D\), each \query{ADJ} query originating from a node in \(X\) or \(Y\) may be misdirected to the auxiliary node sets \(Z_X\) or \(Z_Y\) with non-negligible probability. This allows us to increase the hardness for dense graphs without breaking core structure of $\mathcal{U}(n, D, d)$. Then we establish the existence of such hard-instance graph family below.

\begin{lemma}[Existence of Hard-Instance Graph Family](Proof in Appendix~\ref{sec:proof:PPR_R_const})
\label{lemma:PPR_R_const}
Choose any constant \(c\in (0,\frac{1}{2}]\) and any functions \(\delta_0(n_0)\in (0,1), m_0(n_0)\in\Omega(n_0)\cap \bigo(n^{2})\). For sufficiently large $n_0$, there exist parameters \(n, D, d, r\)  (as function of \(n_0\)), such that for $\forall \mathrm{U}$ in $\mathcal{U}(n,D,d)$, its padded $\mathrm{U}(r)$ holds:%the following conditions hold for its $r-$padded instance: 
\begin{enumerate}[label=(\roman*).]
\item  The node, edge count of \(\ \mathrm{U}(r)\) is $\bigo(n_0)$ and $\bigo(m_0)$;
\item  The edge count is in $\Theta(n(r+D)) =\Theta\left(\min(m_0, {\log(\frac{1}{\delta_0})}/{\delta_0})\right)$, and the node count is in $\Theta(r+n) = \bigo(n)$. Additionally, \(D\geq \frac{3}{2}d\) and \(d\leq \log(n)\); 
\item For any $x_1,x_2\in X_1,X_2$, the PPR scores $\pi(s,x_1), \pi(s,x_2)$ are fixed values, irrelevant to the choice of nodes or the graph instance. It holds that $\pi(s,x_1), \pi(s,x_2)\geq \delta_0, \pi(s,x_1) > \frac{1}{(1-c)^2}\cdot \pi(s,x_2).$ 
\end{enumerate}
\end{lemma}

\noindent This lemma forms the basis for our proof of the lower bound by contradiction. For any SSPPR-R algorithm with $o\left(\min(m_0, \log(1/\delta_0)/\delta_0)\right)$ queries, we will prove later that on our distribution of graphs over this family  $\mathcal{U}(n,D,d)$ with $r-$padding, the algorithm must fail with high probability. 

%Specifically, the condition (ii) ensures both the count of edges corresponds to our lower bound for known edges and prevent the algorithm from a \emph{sampling} approach. The condition (iii) guarantees that an algorithm must find the exact split of $X$ into $X_1, X_2$ to solve the SSPPR-R query. 

\mypara{Distribution.}
\label{sec:distribution}
We now formalize the generation of \(\mathrm{U} \in \mathcal{U}(n, D, d)\) with node and edge labels, forming the distribution \(\Sigma(n, D, d)\) over all label permutations (see Figure~\ref{fig:generate graph}). Intuitively, \(\Sigma(n, D, d)\) is created by randomly splitting the node set \(X\) into \(X_1\) and \(X_2\), then independently sampling all edge permutations consistent with the hard-instance template. The whole procedure is as follows:

$(i).$ \emph{Splitting \(X\).}  
We first split the indices \(\{1,\dots,2n\}\) into two equal parts, which determines whether each node \(x_1,\dots,x_{2n}\) belongs to \(X_1\) or \(X_2\). Each such split corresponds to choosing \(n\) indices for \(X_1\), and all possible choices form the set \(\binom{X}{n}\). For example, in Figure~\ref{fig:generate graph}, the split \(\mathrm{SP}=\{1,3\}\) places nodes \(x_1\) and \(x_3\) in \(X_1\), while \(x_2\) and \(x_4\) belong to \(X_2\).

\begin{figure}[!t]
\centering
\includegraphics[width=0.56\textwidth]{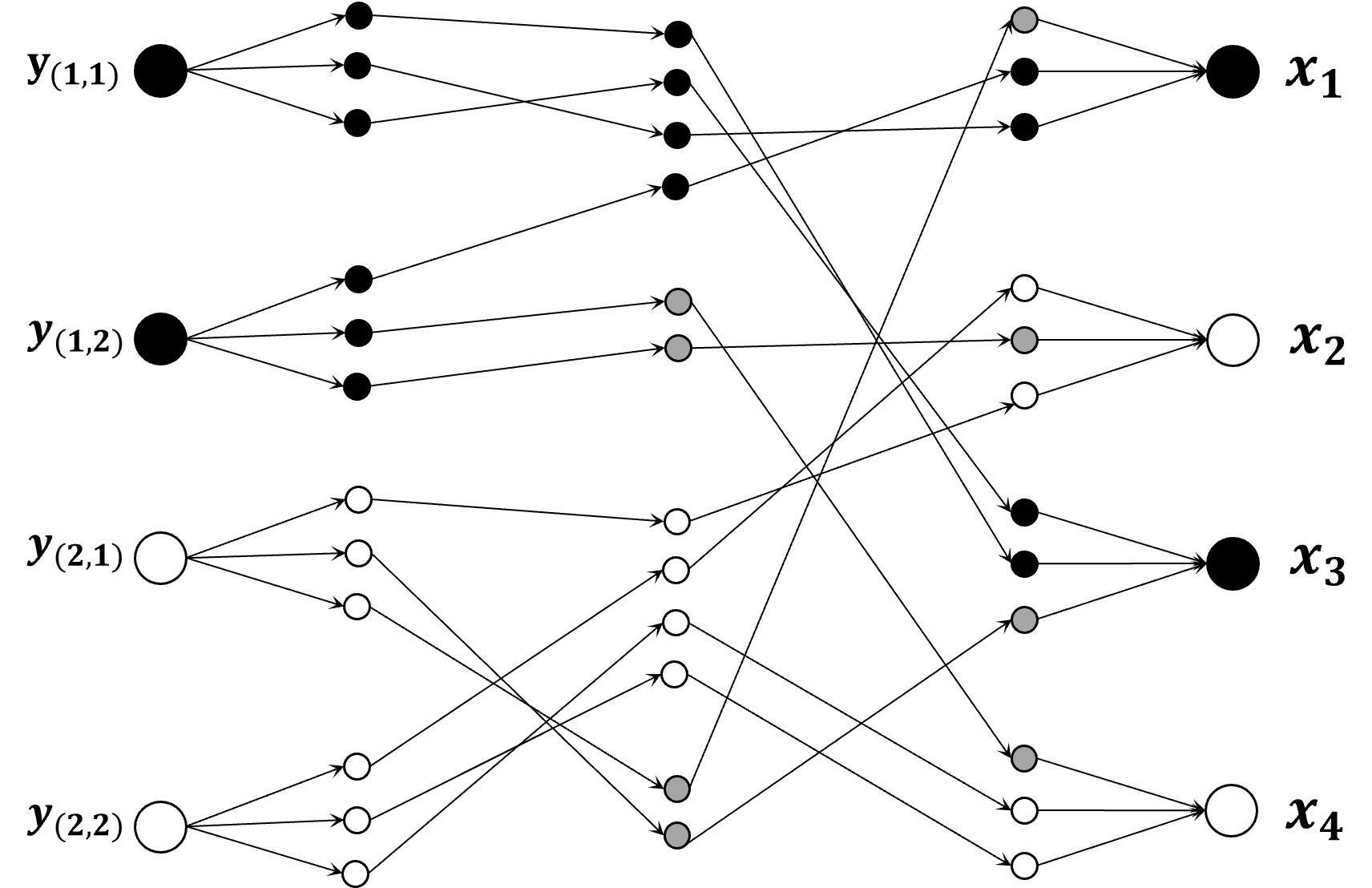}
%\vspace{1em}
\caption{An example for generating graph distribution $\Sigma(n, D, d)=\Sigma(2,2,1)$.} %\haoyu{modify the graph.}
\label{fig:generate graph}
\end{figure}

\begin{comment}
\begin{wrapfigure}{l}{0.4\textwidth} % r/l for side; set box width
  %\vspace{-0.7\intextsep} % optional: tighten vertical gap
  \centering
  \vspace{-0.7\intextsep}
  \includegraphics[width=\linewidth]{figs/fig_generate_graph.png}
  %\vspace{-1.3\intextsep}
  \caption{Output Distribution $\Sigma(2,2,1)$.}
  \label{fig:generate graph}
  \vspace{-2\intextsep}
\end{wrapfigure}
\end{comment}

$(ii).$ \emph{Permuting in-edges of \(X\).}  
Next, for each node \(x_i\in X\), we independently permute its \(2D-d\) incoming edges using a permutation \(\mathrm{P}_i^{2D-d}\in \mathrm{S}^{2D-d}\). The permuted edges are then divided into two groups, corresponding to edges coming from \(Y_1\) and \(Y_2\). If \(x_i\in X_1\), the first \(D\) edges in the permutation are assigned to \(Y_1\) and the remaining \(D-d\) edges to \(Y_2\); the assignment is reversed when \(x_i\in X_2\). We denote the collection of such permutations by \(\mathrm{S}^{2D-d}(X)\). In Figure~\ref{fig:generate graph}, applying \(\mathrm{P}_3=(1,3,2)\) assigns the first and third in-edges of \(x_2\) to \(Y_2\) (white-filled nodes) and the second to \(Y_1\) (grey-filled nodes).

$(iii).$ \emph{Permuting out-edges of \(Y_1,Y_2\).}  
We then permute the out-edges of \(Y_1\) and \(Y_2\) using two independent permutations \(\mathrm{P}_{Y_1}^{n(2D-d)}\) and \(\mathrm{P}_{Y_2}^{n(2D-d)}\). For a node \(y^{(1)}_i\in Y_1\), its \(k\)-th outgoing edge connects to a node in \(X_1\) if \(\mathrm{P}_{Y_1}^{n(2D-d)}[(2D-d)(i-1)+k]\le nD\), and to a node in \(X_2\) otherwise; the same rule applies symmetrically for nodes in \(Y_2\). Figure~\ref{fig:generate graph} illustrates this step using the permutation \(\mathrm{P}_{Y_1}^{n(2D-d)}=(1,3,2,4,5,6)\), where the first four virtual nodes in the second column correspond to edges connecting to nodes in \(X_1\).

$(iv).$ \emph{Connecting \(X\) and \(Y\).}  
Finally, we match the out-edges of nodes in \(Y\) with the in-edges of nodes in \(X\). For example, for the pair \(Y_1\) and \(X_1\), we select a random bijection \(\mathrm{S}(Y_1,X_1):\{1,\dots,nD\}\to\{1,\dots,nD\}\) that maps the out-edges of \(Y_1\) to the in-edges of \(X_1\). Similar bijections (such as \(\mathrm{S}(Y_2,X_1)\)) are used for the other pairs. Each edge is represented as a tuple \((y^{(1)}_{i_j},k_j)\) or \((x^{(1)}_{i_j'},k_j')\), and the bijection specifies how these tuples are matched. In Figure~\ref{fig:generate graph}, the bijection \(\mathrm{S}(Y_1,X_1)=(4,3,2,1)\) connects the corresponding virtual nodes (black-filled) across the columns. 

Following the above procedure, we obtain a graph instance \(\mathrm{U}\in\mathcal{U}(n,D,d)\).

\begin{lemma}
\label{lemma:legal_instance}
For all possible {\small \ \(\mathrm{SP}, \mathrm{P}_i^{2D-d}, \mathrm{P}_{Y_1}^{n(2D - d)}, \mathrm{P}_{Y_2}^{n(2D - d)}, \mathrm{P} \in \binom{X}{n}\times \mathrm{S}^{2D - d}(X)\times \mathrm{S}^{n(2D - d)}(Y_1)\times\mathrm{S}^{n(2D - d)}(Y_2)\times\Pi_{Y_a\in \{Y_1,Y_2\},X_b\in\{X_1,X_2\}}S(Y_a,X_b)\)}: the above procedure links to a legal graph instance \(\mathrm{U}\in\mathcal{U}(n, D, d)\). 
\end{lemma}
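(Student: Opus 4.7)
The plan is to verify, step-by-step, that each degree of freedom in the construction preserves exactly the structural invariants demanded by the template \(\mathcal{U}(n,D,d)\), namely: (a) node labels are fixed as in Section~\ref{sec:labels}; (b) the source node \(s\) has outgoing edges to each of the \(n\) nodes of \(Y_1\); (c) \(|X_1|=|X_2|=n\); (d) every \(x_1\in X_1\) has \(\deg_{Y_1}(x_1)=D\) and \(\deg_{Y_2}(x_1)=D-d\); (e) every \(x_2\in X_2\) has \(\deg_{Y_1}(x_2)=D-d\) and \(\deg_{Y_2}(x_2)=D\); and (f) every \(y\in Y\) has out-degree \(2D-d\) and in-degree \(1\). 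Because labels and the edges out of \(s\) are hard-wired in the preamble to the construction, properties (a) and (b) hold automatically, so I only need to audit steps (i)--(iv) of the construction against (c)--(f).

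Next, I would take each step in turn. Step (i) chooses \(\mathrm{SP}\in\binom{X}{n}\), which directly enforces (c). Step (ii) applies, for each \(x_i\in X\), a permutation \(\mathrm{P}_i^{2D-d}\in\mathrm{S}^{2D-d}\) to the \(2D-d\) in-slots of \(x_i\), and then partitions the permuted slots into a block of size \(D\) (sourced from \(Y_1\) or \(Y_2\) according to whether \(x_i\in X_1\) or \(x_i\in X_2\)) and a complementary block of size \(D-d\). Since a permutation does not change the cardinality of either block, this step secures properties (d) and (e) and fixes the in-degree of each \(x_i\) to \(2D-d\); also each \(y\) slot appears at most once in this assignment, so the in-degree clause of (f) will follow once edges are physically drawn in step (iv).

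The real content is in steps (iii) and (iv): I must check that the head counts demanded of \(Y_1\) and \(Y_2\) in step (iii) match the tail counts demanded of \(X_1\) and \(X_2\) by step (ii), so that the bijections \(\mathrm{S}(Y_a,X_b)\) in step (iv) are well-defined. For each pair \((Y_a,X_b)\) I would count on both sides: from step (ii), the total number of in-slots of \(X_1\) labelled as coming from \(Y_1\) is \(n\cdot D\) and from \(Y_2\) is \(n\cdot(D-d)\); symmetrically for \(X_2\). From step (iii), the definition of \(\mathrm{P}_{Y_1}^{n(2D-d)}\) partitions the \(n(2D-d)\) out-slots of \(Y_1\) into a prefix of length \(nD\) (destined for \(X_1\)) and a suffix of length \(n(2D-d)-nD=n(D-d)\) (destined for \(X_2\)), and analogously for \(Y_2\). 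Hence the four sizes match pairwise, so the domains \(\{1,\ldots,nD\}\) and \(\{1,\ldots,n(D-d)\}\) on which the bijections \(\mathrm{S}(Y_a,X_b)\) live are exactly the right ones, and step (iv) yields a well-defined set of directed edges. Every out-slot of every \(y\) is assigned to exactly one in-slot of exactly one \(x\), so \(d_{\mathrm{out}}(y)=2D-d\); every in-slot of every \(x\) is the image of exactly one out-slot, so the in-degree of each \(x\) is \(2D-d\) and each \(y\in Y_1\) has a unique in-edge from \(s\), closing (f).

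The only obstacle worth flagging is the bookkeeping in the previous paragraph: one must be careful that the \emph{ordered} encodings used in step (iv) (e.g.\ indexing an out-slot of \(y^{(1)}_{i_j}\) as \((2D-d)i_j+k_j\)) align with the ordering implicit in \(\mathrm{P}_{Y_1}^{n(2D-d)}\) and with the ordering of the in-slots after the permutations \(\mathrm{P}_i^{2D-d}\). Once one fixes these natural orderings, the verification reduces to a direct matching of cardinalities already performed above, and no further combinatorial subtlety arises. Consequently every choice in the stated product space produces a graph in \(\mathcal{U}(n,D,d)\), proving the lemma.
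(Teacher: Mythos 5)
Your verification is correct and matches the paper's (implicit) treatment: the paper states this lemma without proof, regarding it as immediate from the construction, and your audit—checking that the permutations preserve the block sizes $D$ and $D-d$ at each $x$, that each $\mathrm{P}_{Y_a}^{n(2D-d)}$ selects exactly $nD$ and $n(D-d)$ out-slots for the two sides, and that these cardinalities match so the bijections $\mathrm{S}(Y_a,X_b)$ are well-defined and yield the prescribed degrees—is exactly that routine check made explicit. The only nitpick is your clause that every $y\in Y$ has in-degree one: in the instance $\mathrm{U}$ only nodes of $Y_1$ receive an edge from $s$ (an ambiguity inherited from the paper's own labeling remarks), but this does not affect the substance of the argument.
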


\noindent Based on above Lemma, we can then construct our distribution over splits and bijections as:

\begin{definition}
\( \Sigma(n, D, d)\) is defined as the uniform distribution over
{\small\(\binom{X}{n}\times \mathrm{S}^{2D - d}(X)\times \mathrm{S}^{n(2D - d)}(Y_1)\times\mathrm{S}^{n(2D - d)}(Y_2)\times\Pi_{Y_i\in \{Y_1,Y_2\},X_i\in\{X_1,X_2\}}S(Y_i,X_i)\)}. For \(\sigma \sim \Sigma(n, D, d)\), we use \(\mathrm{U}_\sigma\) to denote the corresponding labeled graph instance \(\mathrm{U} \in \mathcal{U}(n, D, d)\) and \(\mathrm{U} \sim \Sigma(n, D, d)\) to denote a sample of this distribution.
\end{definition}

\mypara{Establishing the Main Theorem.}\label{sec:proof_main}
Now that the distribution $\Sigma(n, D, d)$ is formally defined, we can prove the target lower bound by contradiction: no SSPPR-R algorithm can have both a bounded query complexity and a constant success probability on $\Sigma(n, D, d)$.

We begin by addressing two technical issues.  First, standard SSPPR-R algorithms are formulated for simple graphs, while instances drawn from $\Sigma(n,D,d)$ often contain parallel edges. Second, Lemma~\ref{lemma:PPR_R_const} is stated for padded graphs, whereas our hard distribution is defined on the original, unpadded instances. To handle these issues, we use a two-step reduction. We first lift an arbitrary algorithm for simple graphs so that it operates on the padded multigraph instances $\mathrm{U}(r)$. We then restrict this lifted algorithm back to the unpadded graph, thereby obtaining the desired adapted algorithm on $\Sigma(n, D,d)$. Moreover, by construction of $\mathcal{U}(n, D, d)$, the nodes in $X_1$ and $X_2$ exhibit a significant gap in their PPR scores. Consequently, any algorithm that satisfies the relative-error guarantee for SSPPR-R must be able to recover the ground truth partition of $X$ into $X_1 \sqcup X_2$. We formalize both ingredients in the following lemma.

\begin{lemma}(Proof in Appendix~\ref{sec:proof:thm:algo_adapt_tot})
\label{thm:algo_adapt_tot}
Assume that there is an SSPPR-R algorithm $\algo$ on simple directed graphs with expected query complexity $T = o(\min(m_0, \log(1/\delta_0)/\delta_0))$ with threshold $\delta$, error parameter $c$ and failure probability bound $p_f$. Then for any constant $\gamma$,  there are sufficiently large parameters $n, D, d$ along with an algorithm $\mathcal{A}_{ad}$. Using at most $\gamma nD$ arc-centric queries on each $\mathrm{U}\sim \Sigma(n, D, d)$, $\mathcal{A}_{ad}$ will output the ground truth split of $X = X_1 \sqcup X_2$ with average success probability at least $1-2p_f$.
\end{lemma}

With this lemma in hand, we show that the probability that $\mathcal{A}_{ad}$  recovers the correct partition is sufficiently small to obtain a contradiction. To analyze the query complexity limit, we must first formally measure the information any algorithm receives. Each graph in the family $\mathcal{U}(n, D, d)$ is uniquely determined by the connections between $Y$ and $X$. Each such connecting edge has two labels: the "$k^{\text{th}}$ out-edge of $y_i$" and the "$k'^{\text{th}}$ in-edge of $x_j$". This can be viewed as connecting two half-edges. An arc-centric query reveals partial information of one such connection. Formally, a known edge $e_t = [ (y_i, k), (x_j, k')]$ is an event in $\Sigma(n,D,d)$ representing that the $k^{\text{th}}$ out-edge of $y_i$ connects to the $k'^{\text{th}}$ in-edge of $x_j$. In its execution, our algorithm $\mathcal{A}_{ad}$ sequentially observes the events, in other words, knows the edges $e_1, \dots, e_T$, and its task is to infer the most probable split based on this information. The optimal strategy for any algorithm is to output the split that maximizes the posterior probability (i.e., the MAP estimator): $$\text{argmax}_{\mathrm{SP}}\pr_{\mathrm{U} \sim \Sigma(n, D,d)}[\mathrm{SP} \text{ is the correct split of } \mathrm{U} \mid e_1,\dots, e_T].$$
Therefore, the expected success probability of any algorithm using at most $T$ queries is bounded as:

\begin{lemma}[Success Probability Expression](Proof in Appendix \ref{sec:proof:lemma:data_process})
\label{lemma:data_process}
For any algorithm $\mathcal{A}_{ad}$ using at most $T = \gamma nD$ arc-centric queries on an instance $\mathrm{U}\sim \Sigma(n, D, d)$, its expected probability of outputting the correct split of $X$ is bounded by $$\max_{e_1,\cdots, e_T}\max_{\mathrm{SP}}\pr_{\mathrm{U} \sim \Sigma(n, D,d)}[\mathrm{SP} \text{ is the correct split of} \;\;\mathrm{U} \mid e_1,\cdots, e_T].$$
\end{lemma}

To estimate $\pr_{\mathrm{U} \sim \Sigma(n, D,d)}[\mathrm{SP} \text{ is the correct split of } \mathrm{U} \mid e_1,\cdots, e_T]$, we rely on the following two observations:
\begin{enumerate*}[label=(\roman*).]
\item \textbf{Traversal is Hard}: An algorithm that tends to cover the entire graph needs to know $\Omega(nD)$ edges.
\item \textbf{Random Walk is Hard}: Consider an algorithm that classifies  each node $x_i$ into $X_1$ or $X_2$ in the ground truth split, by sampling random incident edges. Observe that nodes in $X_1$ differ from those in $X_2$ because they have $d$ more edges originating from $Y_1$ among their $2D-d$ in-edges. Then, to achieve a failure probability of  $p_i$, the algorithm must sample  $\Omega(\frac{D}{d}\log_{D/(D-d)}(1/p_i)) \subset \Omega(D)$ edges. Specifically, to guarantee a uniform error bound   $p_i \in \bigo(1/n)$, , the required number of samples becomes  $ \Omega(D)$ per node.
\end{enumerate*}

With these observations, we then prove the following lemma:

\begin{lemma}[Bounded Success Probability]
(Proof in Appendix~\ref{sec:proof:lemma:targetProb})
\label{lemma:targetProb}
There is a constant \(\gamma\) such that for sufficiently large \(n\), \(d \leq \log(n)\), \(D > \frac{3d}{2}\), and \(T \leq \gamma nD\), the following holds. Given any known edges $e_1, \dots, e_T$ and any split \(\mathrm{SP}\) of \(X\), we have:
$ \pr_{\mathrm{U}\sim \Sigma(n, D, d)}[\mathrm{SP} \text{ is the correct split of } \;\;\mathrm{U} \mid e_1, \cdots ,e_T] \leq \frac{1}{n}.$
\end{lemma}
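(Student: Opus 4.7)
I would reduce the posterior to a ratio of counts and then exhibit $\Omega(n^2)$ alternative splits whose counts are each comparable to that of $\mathrm{SP}$. Since $\Sigma(n,D,d)$ is the uniform distribution over parameter tuples and the sizes of the permutation and matching factors do not depend on which subset of $X$ plays the role of $X_1$, the prior on $\mathrm{SP}$ is uniform over $\binom{X}{n}$. Writing $N(\mathrm{SP}')$ for the number of parameter tuples with split component $\mathrm{SP}'$ that are consistent with every observed query--response pair, Bayes' rule gives
\[
\Pr_{\mathrm{U}\sim\Sigma}\!\bigl[\mathrm{SP}\text{ is correct}\mid Q_1,\ldots,Q_T;R_1,\ldots,R_T\bigr] \;=\; \frac{N(\mathrm{SP})}{\sum_{\mathrm{SP}'\in\binom{X}{n}} N(\mathrm{SP}')},
\]
so it suffices to prove $\sum_{\mathrm{SP}'} N(\mathrm{SP}') \;\geq\; \Omega(n^2)\cdot N(\mathrm{SP})$.

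Next I would carry out a budget argument. Each of the $T \leq \gamma nD$ oracle queries exposes at most one in-edge of a node in $X$, so the total number of exposed in-edges over $X$ is $O(\gamma nD)$. Call $x\in X$ \emph{light} if fewer than $D/4$ of its in-edges have been exposed. Then at most $O(\gamma n)$ nodes are heavy, and under the split $\mathrm{SP}$ at least $(1-c_1\gamma)n$ light nodes lie in each of $X_1$ and $X_2$ for some constant $c_1$. For every pair of light nodes $x_i\in X_1$ and $x_j\in X_2$, let $\mathrm{SP}'_{ij}$ denote the split obtained by swapping $x_i$ and $x_j$; these $\Omega(n^2)$ alternative splits are pairwise distinct.

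The heart of the argument is to construct, for each such pair, an injection $\Phi_{ij}$ from tuples consistent with $(\mathrm{SP};Q,R)$ to tuples consistent with $(\mathrm{SP}'_{ij};Q,R)$ whose image captures at least a constant fraction of $N(\mathrm{SP})$. The map rewires $d$ unrevealed in-edges of $x_i$ originating in $Y_1$ so that they instead originate in $Y_2$, and symmetrically rewires $d$ in-edges of $x_j$ in the reverse direction, with corresponding canonical updates to $\mathrm{P}_i^{2D-d}$, $\mathrm{P}_j^{2D-d}$, $\mathrm{P}_{Y_1}^{n(2D-d)}$, $\mathrm{P}_{Y_2}^{n(2D-d)}$, and the matchings $\mathrm{S}(Y_a,X_b)$. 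The hypothesis $D\geq\tfrac{3}{2}d$, together with lightness of $x_i$ and $x_j$, guarantees that at both endpoints more than $d$ unrevealed in-edges are available, so this surgery can be performed without touching any revealed edge, and every output tuple remains consistent with $(Q,R)$ and lies in the support of $\Sigma(n,D,d)$. Summing over the $\Omega(n^2)$ swap pairs yields $\sum_{\mathrm{SP}'} N(\mathrm{SP}')\geq \Omega(n^2)\cdot N(\mathrm{SP})$ and hence the posterior bound $O(1/n^2)\leq 1/n$ for sufficiently small $\gamma$.

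The main obstacle is the third step: tuples in $\Sigma(n,D,d)$ are parameterized by several correlated permutations and bijections, so a local swap of two $X$-nodes propagates through $\mathrm{P}_{Y_1}$, $\mathrm{P}_{Y_2}$, and the four matchings $\mathrm{S}(Y_a,X_b)$. Verifying that the canonical rewiring is well-defined, produces valid tuples, is injective, and preserves a constant fraction of the count requires delicate combinatorial bookkeeping; the rare event of rewirings creating problematic parallel edges must be absorbed using Lemma~\ref{lemma:paraBoundU}. The slack $D\geq\tfrac{3}{2}d$ built into the lemma statement is precisely what makes each individual swap feasible, and the $d\leq\log n$ bound keeps the number of rewired edges negligible compared to the per-node budget.
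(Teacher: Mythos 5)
Your overall strategy coincides with the paper's: reduce the posterior via Bayes to a ratio of likelihoods (counts), use the budget $T\leq\gamma nD$ to find $\Omega(n^2)$ pairs of lightly-explored nodes $x_i\in X_1$, $x_j\in X_2$, and compare the count for $\mathrm{SP}$ with the counts for the swapped splits $\mathrm{SP}'_{ij}$. However, your central quantitative claim — that the rewiring injection $\Phi_{ij}$ captures a \emph{constant} fraction of $N(\mathrm{SP})$, i.e.\ $N(\mathrm{SP}'_{ij})\geq \Omega(1)\cdot N(\mathrm{SP})$ — is false in the parameter regime the lemma is actually used in, and no injection argument can rescue it. A node that is light in your sense may still have $k=\Theta(D)$ revealed in-edges, all originating in $Y_1$. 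Conditioned on that transcript, the number of consistent completions at that node differs between the two splits by a factor of roughly $\prod_{j=0}^{k-1}\frac{D-j}{D-d-j}\geq (1+d/D)^{\Theta(D)}=e^{\Theta(d)}$, and since the hard instances take $d=\Theta(\log n)$ (with $D=\Theta(d)$ in the SSPPR-R construction), this ratio is $n^{\Theta(1)}$ — polynomially large, not constant. Since $N(\mathrm{SP}'_{ij})$ can thus be polynomially smaller than $N(\mathrm{SP})$, an injection from a constant fraction of the $\mathrm{SP}$-consistent tuples into the $\mathrm{SP}'_{ij}$-consistent tuples simply cannot exist for such transcripts, so the step fails, and with it your claimed $O(1/n^2)$ bound.

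The paper's proof confronts exactly this loss instead of avoiding it: it first establishes an exact conditional-probability formula for each query (its Lemma on conditional probabilities, proved by induction showing that the permutations $\mathrm{P}_i^{2D-d},\mathrm{P}_{Y_1}^{n(2D-d)},\mathrm{P}_{Y_2}^{n(2D-d)}$ and the matchings $\mathrm{S}(Y_a,X_b)$ remain independent given the transcript), then bounds the per-swap likelihood ratio by $\bigl(1+\tfrac{2}{4-3\gamma_0}\tfrac{d}{D}\bigr)^{\gamma_0 D}\leq n^{\frac{2\gamma_0}{3-4\gamma_0}}$, which is a small polynomial rather than a constant, and finally beats this loss with the $\Omega(n^2)$ swap pairs to get a posterior of at most $n^{-2+\frac{2\gamma_0}{3-4\gamma_0}}\leq 1/n$. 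To repair your argument you would need to replace the constant-fraction claim by such a quantitative comparison (which in turn requires the conditional-independence bookkeeping you deferred), and note that the appeal to the parallel-edge lemma is unnecessary here, since $\Sigma(n,D,d)$ is supported on multigraphs and rewirings that create parallel edges remain legal instances.
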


\mypara{Proof Sketch.} With $o( nD)$ known edges, for most nodes $x_i \in X$, we only know a small fraction of their in-edges, as per the ``Traversal is Hard'' intuition. For the split $\mathrm{SP}$, there exists a subset $X' = X_1' \cup X_2'$, where $X_1' \subset X_1$, $X_2' \subset X_2$, and $|X_1'|, |X_2'| \in \Omega(n)$, such that each $x_i \in X'$ appears at most $o(D)$ times in $e_1, \cdots, e_T$. We then use the ‘Random Walk is Hard’ intuition to show that, for nodes in $X'$, it is hard for the algorithm to determine their class in the split. 
For each pair $x_{i1}\in X_1', x_{i2}\in X_2'$, consider a "switched" split $\mathrm{SP}'$ created by setting $x_{i1} \in X_2$ and $x_{i2} \in X_1$, while fixing the split for all other nodes. Using a Bayesian approach, we can bound the target success probability $\pr[\dots]$ by comparing it to the probabilities of such switched splits. As split $\mathrm{SP}$ is sampled from uniform distribution, applying Bayes' rule gives $$\pr_{\mathrm{U} \sim \Sigma(n, D,d)}[\mathrm{SP} \text{ is the correct split of } \mathrm{U} \mid e_1,\cdots, e_T] = \frac{\pr_{\mathrm{U} \sim \Sigma(n, D,d)}[e_1,\cdots, e_T \mid \mathrm{SP}]}{\sum_{\mathrm{SP}'}\pr_{\mathrm{U} \sim \Sigma(n, D,d)}[e_1,\cdots, e_T \mid \mathrm{SP}']}.$$
Notice that there are $\Omega(n^2)$ such switched splits. The proof reduces to showing that for any such switched split $\mathrm{SP}'$:
$\frac{\pr_{\mathrm{U} \sim \Sigma(n, D,d)}[e_1,\cdots, e_T \mid \mathrm{SP}]}{\pr_{\mathrm{U} \sim \Sigma(n, D,d)}[e_1,\cdots, e_T \mid \mathrm{SP}']} \in \bigo(n^{1-\epsilon}).$ We can then decompose this ratio into a product of terms, one for each edge $e_t$:
$\prod_{t=1}^T \frac{\pr_{\mathrm{U} \sim \Sigma(n, D,d)}[e_t \mid \mathrm{SP}, e_1,\dots, e_{t-1}]}{\pr_{\mathrm{U} \sim \Sigma(n, D,d)}[e_t \mid \mathrm{SP}', e_1,\dots, e_{t-1}]}$.
Intuitively, when the known edge $e_t$ is not connected to the switched nodes $x_{i1}$ or $x_{i2}$, it is unrelated to the switch, and the ratio is 1. For the remaining $2\gamma D$ known edges, we can establish that each such ratio is bounded by $1+ \bigo(d/D)$. Finally, since $d \in \bigo(\log(n))$, the total ratio is bounded by $ (1+ \bigo(d/D))^{2\gamma D} \in \bigo(n^{1-\epsilon})$, which leads to the result.
By Lemma~\ref{lemma:targetProb}, we conclude that the expected success probability of $\mathcal{A}_{ad}$ is less than $\frac{1}{n}$. This contradicts the guarantee from Theorem \ref{thm:algo_adapt_tot} (which stated a success probability of at least $1-2p$). This contradiction shows that the initial assumption—the existence of an algorithm $\algo$ with $T \in o\left(\min(m_0, \log(1/\delta_0)/\delta_0)\right)$ queries—must be false. Therefore, any such algorithm must require $\Omega\left(\min(m_0, {\log(1/\delta_0)}/{\delta_0})\right)$ queries in expectation. This completes the proof of our Theorem~\ref{thm:ssppr-r}.\qed

\section{Lower Bound of SSPPR-A Query}
In this section, we present the proof of Theorem~\ref{thm:ssppr-a-v1}. In contrast to the SSPPR-R case, the hardness in the absolute-error setting stems not from fundamental sampling variance when estimating a large PPR value $\pi(s,t)$. An estimator of $\pi(s,t)$ based on $N$ random walks behaves approximately as $\mathcal{N}(\pi(s,t),\sigma^2)$ with standard deviation $\sigma=\Theta(\pi(s,t)N^{-1/2})$.  
Thus, when $\pi(s,t)=\Theta(1)$, achieving an absolute error of $\varepsilon$ with constant success probability inherently requires $N=\Omega(1/\varepsilon^2)$ walks—an unavoidable statistical barrier. Our proof embeds a hidden binary matrix into the graph and links $\pi(s,t)$ to the count of $1$ in the matrix. Since each bit is Bernoulli, the total count of ones follows a binomial law and is well approximated by a normal distribution, paralleling the distribution of the random-walk estimator. Then, any algorithm must query a large fraction of the graph to recover this count; otherwise, its estimate of $\pi(s,t)$ incurs statistical uncertainty exceeding $\varepsilon$. % establishing the desired lower bound.

\begin{figure}[!t]
    \centering
    \subcaptionbox{Distribution $\mathcal{G}(D,\mathbf{b})$\label{fig:gb_instance}}%
    [0.45\linewidth]{\includegraphics[height=2.5in]{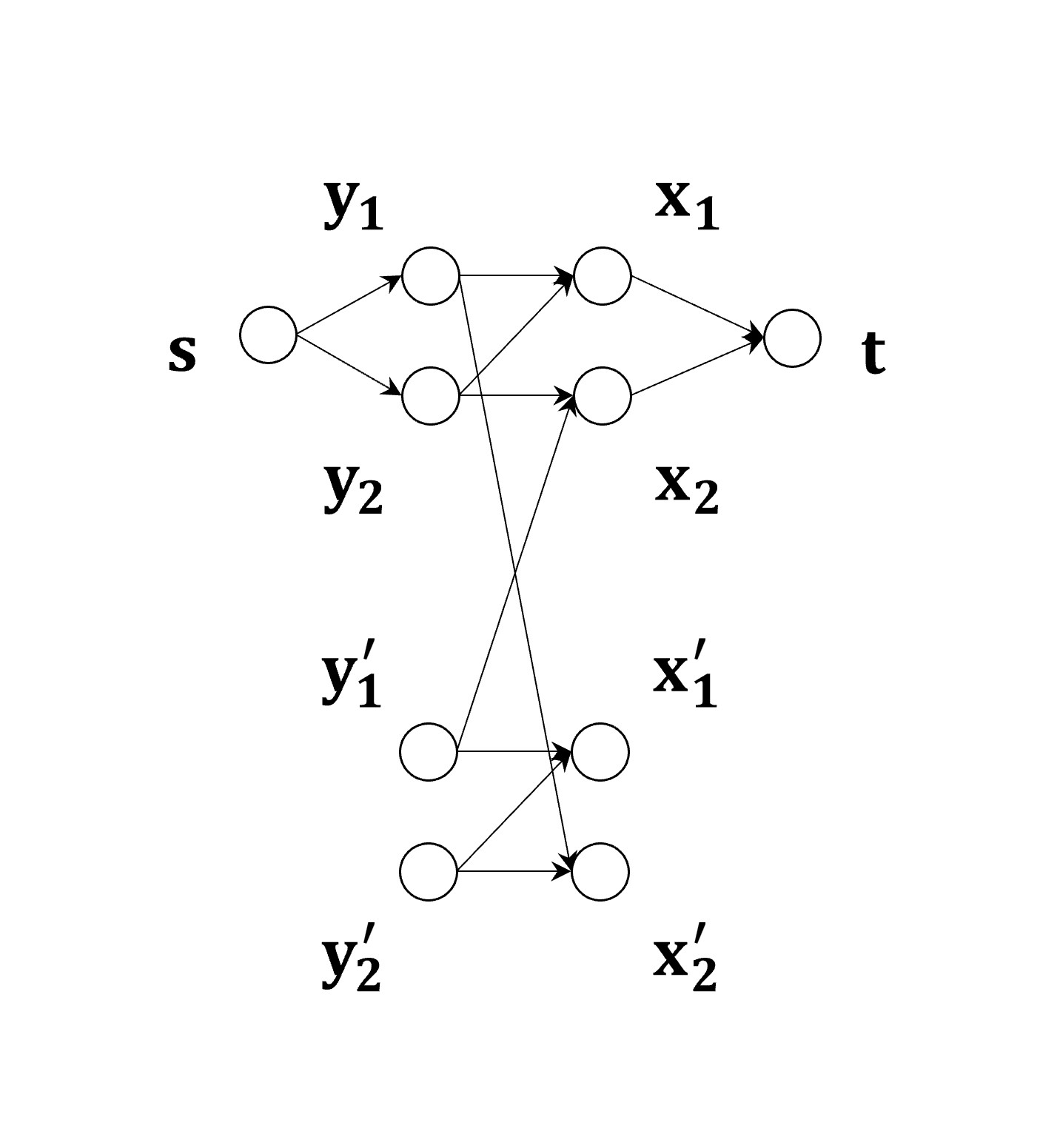}}
    \hfil
    \subcaptionbox{Instance $\mathcal{U}^{-1}(r)$\label{fig_Ur_1_instance.jpg}}%
    [0.45\linewidth]{\includegraphics[height=2.5in]{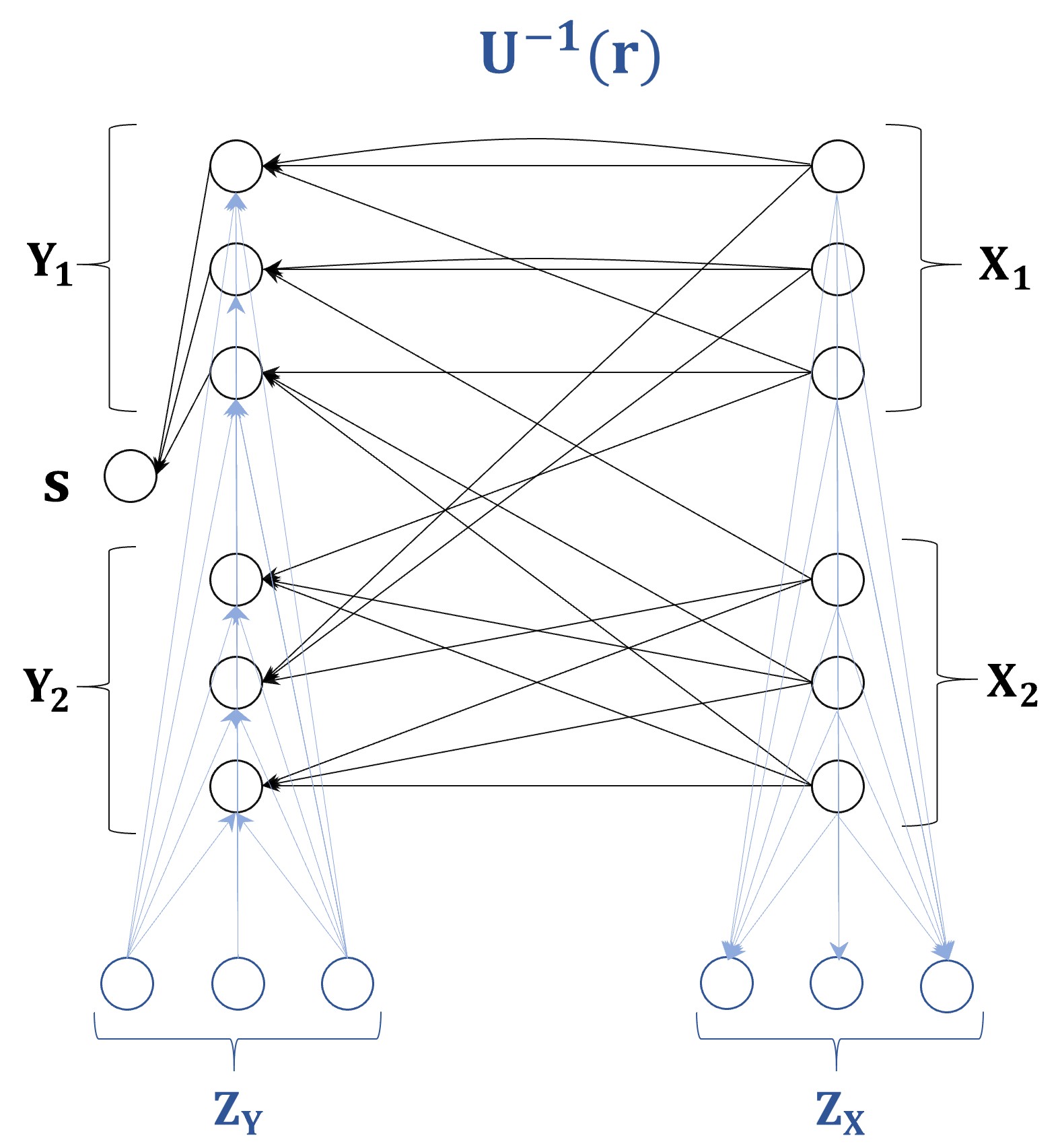}}
    %\hfil
    %\subcaptionbox{Auxiliary instance $\Tilde{U}$\label{fig:UZ_instance}}%
    %[0.3\linewidth]{\includegraphics[height=2.in]{figs/fig_UZ_instance.jpg}}
\vspace{-1ex}
\caption{Graph examples for lower bounds. (a) denotes the example of generating graph distribution $\mathcal{G}(D,\mathbf{b})$ in SSPPR-A query and (b) represents the graph instance $\mathcal{U}^{-1}(r)$ derived from $\mathcal{U}(r)$ in STPPR query.}
\end{figure}

\mypara{Graph Family and Distribution.}
We define a graph family $\mathcal{G}(D,\mathbf{b})$, where $D$ is a positive integer and $\mathbf{b}$ is a $D\times D$ binary matrix with entries $b_{ij}$.  
Each graph contains $4D+2$ vertices, partitioned into six disjoint sets $\{s\}\cup \{t\}\cup X\cup Y\cup X'\cup Y'$,
where $s$ and $t$ are the designated \emph{source} and \emph{target} nodes, respectively, and each of the sets $X,Y,X',Y'$ contains exactly $D$ nodes. Edges are added as follows. The source node $s$ connects to every node in $Y$, and every node in $X$ connects to the target node $t$. For every pair of indices $(i,j)$, the connections between the sets $(Y,Y')$ and $(X,X')$ are determined by the matrix entry $b_{ij}$:
\begin{itemize}
\item if $b_{ij}=1$, we add edges $(y_i,x_j)$ and $(y'_i,x'_j)$;
\item if $b_{ij}=0$, we add edges $(y_i,x'_j)$ and $(y'_i,x_j)$.
\end{itemize}

Thus, the matrix $\mathbf{b}$ controls how nodes in $Y$ and $Y'$ are wired to nodes in $X$ and $X'$. This $\mathbf{b}$-dependent wiring introduces structured randomness that is crucial for constructing hard instances in our lower-bound proofs. An example is illustrated in Figure~\ref{fig:gb_instance}, which shows $\mathcal{G}(D,\mathbf{b})$ with $D=2$ and 
$\mathbf{b}:=\mathbf{b}(1,\cdot)=(1,0)$ and $\mathbf{b}(2,\cdot)=(1,1)$.
  
\begin{comment}
  \begin{align*}
  \centering
  \mathbf{b}=\begin{pmatrix}
      1&0\\
      1&1
  \end{pmatrix}.
  \end{align*}
\end{comment}

\begin{lemma}[Properties of $\mathcal{G}(D, \mathbf{b})$](Proof in Appendix~\ref{sec:proof:lemma:GDB_prop}.)
\label{lemma:GDB_prop}
For any instance $\mathcal{G}(D, \mathbf{b})$, it holds:
\begin{enumerate}[label=(\roman*).]
\item The graph contains $4D+2$ nodes and $2D(D+1)$ edges.
\item The out-degree of each node in $Y$ or $Y'$ is $D$. The in-degree of each node in $X$ or $X'$ is $D$.
\item $\pi(s , t) = \alpha \left(\sum_{i,j=1}^{D}b_{ij}\right)\frac{(1-\alpha)^3}{D^2}$.
\end{enumerate}
\end{lemma}
\noindent We now define $\Sigma(D)$, a probability distribution over the graph family $\mathcal{G}(D,\mathbf{b})$, which serves as our hard distribution for the lower bound. For each fixed $D$, every entry $b_{ij}$ of the binary matrix $\mathbf{b}$ is chosen independently from a Bernoulli distribution $\operatorname{Bern}(1/2)$. Under this distribution, the total number of ones $S := \sum_{i,j=1}^{D} b_{ij}$
follows a binomial law $ \operatorname{Binomial}(D^{2}, 1/2)$. For node and edge indexing, we adopt the natural scheme:
the $i$-th outgoing edge of $s$ connects to $y_i$,
the $j$-th incoming edge of $t$ originates from $x_j$,
and for each $y_i$, its $j$-th outgoing edge goes to $x_j$ or $x_j'$ depending on the value of $b_{ij}$. Combining all the above preparations, we now proceed to prove the lower bound.  
As in the relative-error case, the proof naturally decomposes into two key lemmas:  
(i) establishing the existence of an appropriate hard distribution and its parameters, and  
(ii) analyzing the conditional distribution of outcomes under a bounded number of oracle queries. 

\begin{lemma}
(Proof in Appendix~\ref{sec:proof:lemma:SSPPR-A1}.)
\label{lemma:SSPPR-A1}
For any constants $\alpha \in (0,1)$, $p \in (0,1)$, and \(m_0 \in \Omega(n_0), m_0 \leq \binom{n_0}{2}, \varepsilon_0 \in (0,1))\) as function of \(n_0\) and assuming \(\varepsilon_0\) is sufficiently small when \(n_0\) is sufficiently large, then there exists constants $\gamma > 0$ such that for any sufficiently small absolute error tolerance $\varepsilon > 0$, we can choose an integer $D(n_0)$  satisfying:
\begin{enumerate}[label=(\roman*).]
    \item The graph $\mathcal{G}(D, \mathbf{b})$ has no more than $n_0$ nodes and no more then $m_0$ edges.
    \item If two matrices $\mathbf{b}$ and $\mathbf{b}'$ have sums $S = \sum b_{ij}$ and $S' = \sum b'_{ij}$ such that $|S - S'| \geq \gamma D$, then the corresponding PPR values satisfy $|\pi(s,t) - \pi'(s,t)| > 2\varepsilon$. 
    \item \( D^2 = \Omega(\min(1/\varepsilon_0^2, m_0))\). 
\end{enumerate}
\end{lemma}

\noindent Then we develop the next lemma to formalizes the statistical difficulty of estimating the sum $S$ with the precision required for our overall lower bound.

\begin{lemma}
(Proof in Appendix~\ref{sec:proof:lemma:statistical_hardness}.)
\label{lemma:statistical_hardness}
For any constant $p \in (0,1)$, there exists a constant $\gamma' > 0$ such that for a sufficiently large $D$, any algorithm making at most $T \leq D^2/2$ queries to a graph drawn from $\Sigma(D)$ will fail to estimate $S = \sum_{i,j=1}^{D^2} b_{ij}$ with an error less than $\gamma' D$. Specifically, if $\hat{S}$ is the algorithm's estimate, then it holds: $$\ \pr_{\mathbf{b} \sim \Sigma(D)}[|S - \hat{S}| > \gamma' D] > p.$$
\end{lemma}

\noindent Combining the constructed \(D\) in Lemma \ref{lemma:SSPPR-A1}, we conclude that using only \(o(\min(1/\varepsilon_0^2, m_0))\) queries will cause larger failure probability than \(p\) in evaluating \(\pi(s, t)\) with absolute error guarantee. \qed

\section{Lower Bound of STPPR Query} 
In this section, we present the proof of Theorem~\ref{thm:stppr}. The overall argument follows the same structure as the proof of Theorem~\ref{thm:ssppr-r}.  
The key difference lies in the construction of the hard-instance graph family:  
to create the required discrepancy in single target PPR values, we reverse all edge directions in the base graph~$\mathrm{U}$,  
obtaining its inverse family $\mathrm{U}^{-1}$. % as depicted in Figure~\ref{fig:u-1_instance}.  
The details of this construction and its properties are formalized in Lemma~\ref{lemma:TPPR}:

\begin{lemma}[STPPR]
(Proof in Appendix~\ref{sec:proof:lemma:TPPR}.)
\label{lemma:TPPR}
Let $\mathrm{U}^{-1}$ denote the inverse of a graph $\mathrm{U}$ obtained by reversing the direction of each edge.  
Choose any decay factor $\alpha\in (0,1)$, error parameter $c\leq \tfrac{1}{2}$, and functions $\delta_0(n_0)\in \bigl(0,\tfrac{(1-\alpha)^2}{16}\bigr)$ and $m_0(n_0)\in \Omega(n_0)\cap \bigo(n_0^2)$.  
For sufficiently large $n_0$, there exists graph parameters $n,D,d$ (as functions of $n_0$) such that for $\forall\mathrm{U}\in\mathcal{U}(n,D,d)$, the following properties hold for the inverse $\ \mathrm{U}^{-1}$ that:
\begin{enumerate}[label=(\roman*).]
    \item The node, edge count of $\ \mathrm{U}^{-1}$ is $\bigo(n_0)$, $\bigo(m_0)$.
    \item $n^{1/2}\geq D\geq 2d, d\leq \log n$ and $Dn = \Omega\!\left(\min\!\left(m_0, n_0\log n_0 \cdot \tfrac{1}{\delta_0}\right)\right)$.
    \item For all $x_1, x_2\in X_1,X_2$ satisfying $\pi(x_1,s), \pi(x_2,s) \geq \delta_0$, we have
    $ \pi(x_1,s)\;\geq\; \frac{1}{(1-c)^2}\,\pi(x_2,s).$
\end{enumerate}
\end{lemma}

\noindent As in the proof of Theorem~\ref{thm:ssppr-r}, estimating $\pi(x_i,s)$ necessarily requires identifying the correct partition of $X$ into $X_1$ and $X_2$.  
In the arc-centric query model, a directed graph and its edge-reversed version are indistinguishable; hence we may directly apply Lemma~\ref{lemma:targetProb} to the inverse family $\mathcal{U}^{-1}(n,D,d)$.  
It follows that any algorithm for STPPR must perform enough queries to recover the hidden partition, implying a lower bound of  
{\footnotesize $\Omega\!\left(\min\!\left(m_0,\; n_0 \log n_0 \cdot \tfrac{1}{\delta_0}\right)\right)$}, establishing Theorem~\ref{thm:stppr}. \qed

\section{Conclusion}
In this work, we significantly narrow or even close the gaps between the upper and lower bounds for the computational complexity of SSPPR queries. On the upper bound side, we tighten the complexities of SSPPR-A and SSPPR-R to {\scriptsize$\bigo(1/\epsilon^2)$} and {\scriptsize$\bigo\!\left(\min\left({\log(1/\delta)}/{\delta},\,m+n\log n\,\log(\tfrac{\log n}{m\delta})\right)\right)$}, respectively. These results improve the best known bounds by factors of {\scriptsize$\log(1/\epsilon)$} and {\scriptsize$\log(\tfrac{\log n}{m\delta})$}. On the lower bound side, under the standard arc-centric model, we prove stronger bounds of {\scriptsize$\Omega(\min(m,1/\epsilon^2))$} for SSPPR-A and {\scriptsize$\Omega(\min(m,{\log(1/\delta)}/{\delta}))$} for SSPPR-R. These improve previous results by factors of $m/n$ or $1/\epsilon$, and $m/n$ or $\log(1/\delta)$, respectively. Together, these results substantially strengthen the theoretical foundations of PPR computation. For SSPPR-R, the upper and lower bounds coincide for graphs with $m\in\Omega(n\log^2 n)$ and thresholds satisfying $1/\delta\in\bigo(\mathrm{poly}(n))$, establishing theoretical optimality under most graph regimes. For SSPPR-A, partial optimality is obtained for coarse absolute error requirements (that is, larger $\epsilon$), where the upper bound reduces to {\scriptsize$\bigo(1/\epsilon^2)$}, matching the lower bound. To the best of our knowledge, this is the first work establishing optimal algorithmic results for SSPPR queries, establishing a new theoretical foundation for the study of PPR computation. Meanwhile, we also generalize our techniques to the \textit{Single-Target Personalized PageRank} (STPPR) query, improving its lower bound from {\scriptsize$\Omega(\min(n,1/\delta))$} to {\scriptsize$\Omega(\min(m,\tfrac{n}{\delta}\cdot\log n))$} under standard arc-centric model. This new result also matches the upper bound established in~\cite{rbs}, revealing the \textit{optimality} and underscoring its theoretical versatility.
\begin{acks}
%\haoyu{tbd.}
This research / project is supported by the National Research Foundation, Singapore, under its AI Singapore Programme (AISG Award No: AISG3-RP-2024-034), and under its Frontier CRP Grant (NRF-F-CRP-2024-0005). This research is also supported by NTU SUG-NAP, and supported by the Ministry of Education, Singapore, under an AcRF Tier 2 Grant (MOE-000761-01). Any opinions, findings, and conclusions or recommendations expressed in this material are those of the author(s) and do not reflect the views of the National Research Foundation, Singapore. 
\end{acks}
\bibliographystyle{ACM-Reference-Format}
\bibliography{main, add}

%%% -*-BibTeX-*-
%%% Do NOT edit. File created by BibTeX with style
%%% ACM-Reference-Format-Journals [18-Jan-2012].

\begin{thebibliography}{52}

%%% ====================================================================
%%% NOTE TO THE USER: you can override these defaults by providing
%%% customized versions of any of these macros before the \bibliography
%%% command.  Each of them MUST provide its own final punctuation,
%%% except for \shownote{}, \showDOI{}, and \showURL{}.  The latter two
%%% do not use final punctuation, in order to avoid confusing it with
%%% the Web address.
%%%
%%% To suppress output of a particular field, define its macro to expand
%%% to an empty string, or better, \unskip, like this:
%%%
%%% \newcommand{\showDOI}[1]{\unskip}   % LaTeX syntax
%%%
%%% \def \showDOI #1{\unskip}           % plain TeX syntax
%%%
%%% ====================================================================

\ifx \showCODEN    \undefined \def \showCODEN     #1{\unskip}     \fi
\ifx \showDOI      \undefined \def \showDOI       #1{#1}\fi
\ifx \showISBNx    \undefined \def \showISBNx     #1{\unskip}     \fi
\ifx \showISBNxiii \undefined \def \showISBNxiii  #1{\unskip}     \fi
\ifx \showISSN     \undefined \def \showISSN      #1{\unskip}     \fi
\ifx \showLCCN     \undefined \def \showLCCN      #1{\unskip}     \fi
\ifx \shownote     \undefined \def \shownote      #1{#1}          \fi
\ifx \showarticletitle \undefined \def \showarticletitle #1{#1}   \fi
\ifx \showURL      \undefined \def \showURL       {\relax}        \fi
% The following commands are used for tagged output and should be
% invisible to TeX
\providecommand\bibfield[2]{#2}
\providecommand\bibinfo[2]{#2}
\providecommand\natexlab[1]{#1}
\providecommand\showeprint[2][]{arXiv:#2}

\bibitem[Avrachenkov et~al\mbox{.}(2007)]%
        {avrachenkov2007monte}
\bibfield{author}{\bibinfo{person}{Konstantin Avrachenkov}, \bibinfo{person}{Nelly Litvak}, \bibinfo{person}{Danil Nemirovsky}, {and} \bibinfo{person}{Natalia Osipova}.} \bibinfo{year}{2007}\natexlab{}.
\newblock \showarticletitle{Monte Carlo methods in PageRank computation: When one iteration is sufficient}.
\newblock \bibinfo{journal}{\emph{SIAM J. Numer. Anal.}} \bibinfo{volume}{45}, \bibinfo{number}{2} (\bibinfo{year}{2007}), \bibinfo{pages}{890--904}.
\newblock


\bibitem[Bertram et~al\mbox{.}(2025)]%
        {bertram2025estimating}
\bibfield{author}{\bibinfo{person}{Christian Bertram}, \bibinfo{person}{Mads~Vestergaard Jensen}, \bibinfo{person}{Mikkel Thorup}, \bibinfo{person}{Hanzhi Wang}, {and} \bibinfo{person}{Shuyi Yan}.} \bibinfo{year}{2025}\natexlab{}.
\newblock \showarticletitle{Estimating Random-Walk Probabilities in Directed Graphs}.
\newblock \bibinfo{journal}{\emph{arXiv preprint arXiv:2504.16481}} (\bibinfo{year}{2025}).
\newblock


\bibitem[Bojchevski et~al\mbox{.}(2020)]%
        {pprgo}
\bibfield{author}{\bibinfo{person}{Aleksandar Bojchevski}, \bibinfo{person}{Johannes Gasteiger}, \bibinfo{person}{Bryan Perozzi}, \bibinfo{person}{Amol Kapoor}, \bibinfo{person}{Martin Blais}, \bibinfo{person}{Benedek R{\'o}zemberczki}, \bibinfo{person}{Michal Lukasik}, {and} \bibinfo{person}{Stephan G{\"u}nnemann}.} \bibinfo{year}{2020}\natexlab{}.
\newblock \showarticletitle{Scaling graph neural networks with approximate pagerank}. In \bibinfo{booktitle}{\emph{Proceedings of the 26th ACM SIGKDD International Conference on Knowledge Discovery \& Data Mining}}. \bibinfo{pages}{2464--2473}.
\newblock


\bibitem[Bressan et~al\mbox{.}(2023)]%
        {bressan2023sublinear}
\bibfield{author}{\bibinfo{person}{Marco Bressan}, \bibinfo{person}{Enoch Peserico}, {and} \bibinfo{person}{Luca Pretto}.} \bibinfo{year}{2023}\natexlab{}.
\newblock \showarticletitle{Sublinear algorithms for local graph-centrality estimation}.
\newblock \bibinfo{journal}{\emph{SIAM J. Comput.}} \bibinfo{volume}{52}, \bibinfo{number}{4} (\bibinfo{year}{2023}), \bibinfo{pages}{968--1008}.
\newblock


\bibitem[Brin and Page(1998)]%
        {brin1998anatomy}
\bibfield{author}{\bibinfo{person}{Sergey Brin} {and} \bibinfo{person}{Lawrence Page}.} \bibinfo{year}{1998}\natexlab{}.
\newblock \showarticletitle{The anatomy of a large-scale hypertextual web search engine}.
\newblock \bibinfo{journal}{\emph{Computer networks and ISDN systems}} \bibinfo{volume}{30}, \bibinfo{number}{1-7} (\bibinfo{year}{1998}), \bibinfo{pages}{107--117}.
\newblock


\bibitem[Drossman et~al\mbox{.}(2021)]%
        {drossman2021review}
\bibfield{author}{\bibinfo{person}{Douglas~A Drossman}, \bibinfo{person}{Lin Chang}, \bibinfo{person}{Jill~K Deutsch}, \bibinfo{person}{Alexander~C Ford}, \bibinfo{person}{Albena Halpert}, \bibinfo{person}{Kurt Kroenke}, \bibinfo{person}{Samuel Nurko}, \bibinfo{person}{Johannah Ruddy}, \bibinfo{person}{Julie Snyder}, {and} \bibinfo{person}{Ami Sperber}.} \bibinfo{year}{2021}\natexlab{}.
\newblock \showarticletitle{A review of the evidence and recommendations on communication skills and the patient--provider relationship: a Rome foundation working team report}.
\newblock \bibinfo{journal}{\emph{Gastroenterology}} \bibinfo{volume}{161}, \bibinfo{number}{5} (\bibinfo{year}{2021}), \bibinfo{pages}{1670--1688}.
\newblock


\bibitem[Fogaras et~al\mbox{.}(2005)]%
        {mc}
\bibfield{author}{\bibinfo{person}{D{\'a}niel Fogaras}, \bibinfo{person}{Bal{\'a}zs R{\'a}cz}, \bibinfo{person}{K{\'a}roly Csalog{\'a}ny}, {and} \bibinfo{person}{Tam{\'a}s Sarl{\'o}s}.} \bibinfo{year}{2005}\natexlab{}.
\newblock \showarticletitle{Towards scaling fully personalized pagerank: Algorithms, lower bounds, and experiments}.
\newblock \bibinfo{journal}{\emph{Internet Mathematics}} \bibinfo{volume}{2}, \bibinfo{number}{3} (\bibinfo{year}{2005}), \bibinfo{pages}{333--358}.
\newblock


\bibitem[Goldreich(1999)]%
        {goldreich1999combinatorial}
\bibfield{author}{\bibinfo{person}{Oded Goldreich}.} \bibinfo{year}{1999}\natexlab{}.
\newblock \showarticletitle{Combinatorial property testing (a survey)}.
\newblock \bibinfo{journal}{\emph{Randomization Methods in Algorithm Design}}  \bibinfo{volume}{43} (\bibinfo{year}{1999}), \bibinfo{pages}{45--59}.
\newblock


\bibitem[Goldreich(2008)]%
        {goldreich2008computational}
\bibfield{author}{\bibinfo{person}{Oded Goldreich}.} \bibinfo{year}{2008}\natexlab{}.
\newblock \showarticletitle{Computational complexity: a conceptual perspective}.
\newblock \bibinfo{journal}{\emph{ACM Sigact News}} \bibinfo{volume}{39}, \bibinfo{number}{3} (\bibinfo{year}{2008}), \bibinfo{pages}{35--39}.
\newblock


\bibitem[Goldreich(2017)]%
        {goldreich2017introduction}
\bibfield{author}{\bibinfo{person}{Oded Goldreich}.} \bibinfo{year}{2017}\natexlab{}.
\newblock \bibinfo{booktitle}{\emph{Introduction to property testing}}.
\newblock \bibinfo{publisher}{Cambridge University Press}.
\newblock


\bibitem[Goldreich et~al\mbox{.}(1998)]%
        {goldreich1998property}
\bibfield{author}{\bibinfo{person}{Oded Goldreich}, \bibinfo{person}{Shari Goldwasser}, {and} \bibinfo{person}{Dana Ron}.} \bibinfo{year}{1998}\natexlab{}.
\newblock \showarticletitle{Property testing and its connection to learning and approximation}.
\newblock \bibinfo{journal}{\emph{Journal of the ACM (JACM)}} \bibinfo{volume}{45}, \bibinfo{number}{4} (\bibinfo{year}{1998}), \bibinfo{pages}{653--750}.
\newblock


\bibitem[Goldreich and Ron(1997)]%
        {goldreich1997property}
\bibfield{author}{\bibinfo{person}{Oded Goldreich} {and} \bibinfo{person}{Dana Ron}.} \bibinfo{year}{1997}\natexlab{}.
\newblock \showarticletitle{Property testing in bounded degree graphs}. In \bibinfo{booktitle}{\emph{Proceedings of the twenty-ninth annual ACM symposium on Theory of computing (STOC)}}. \bibinfo{pages}{406--415}.
\newblock


\bibitem[Haveliwala(2002)]%
        {haveliwala2002topic}
\bibfield{author}{\bibinfo{person}{Taher~H Haveliwala}.} \bibinfo{year}{2002}\natexlab{}.
\newblock \showarticletitle{Topic-sensitive pagerank}. In \bibinfo{booktitle}{\emph{Proceedings of the 11th international conference on World Wide Web}}. \bibinfo{pages}{517--526}.
\newblock


\bibitem[Hou et~al\mbox{.}(2021)]%
        {hou2021massively}
\bibfield{author}{\bibinfo{person}{Guanhao Hou}, \bibinfo{person}{Xingguang Chen}, \bibinfo{person}{Sibo Wang}, {and} \bibinfo{person}{Zhewei Wei}.} \bibinfo{year}{2021}\natexlab{}.
\newblock \showarticletitle{Massively parallel algorithms for personalized pagerank}.
\newblock \bibinfo{journal}{\emph{Proceedings of the VLDB Endowment (PVLDB)}} \bibinfo{volume}{14}, \bibinfo{number}{9} (\bibinfo{year}{2021}), \bibinfo{pages}{1668--1680}.
\newblock


\bibitem[Hou et~al\mbox{.}(2023)]%
        {hou2023personalized}
\bibfield{author}{\bibinfo{person}{Guanhao Hou}, \bibinfo{person}{Qintian Guo}, \bibinfo{person}{Fangyuan Zhang}, \bibinfo{person}{Sibo Wang}, {and} \bibinfo{person}{Zhewei Wei}.} \bibinfo{year}{2023}\natexlab{}.
\newblock \showarticletitle{Personalized PageRank on Evolving Graphs with an Incremental Index-Update Scheme}.
\newblock \bibinfo{journal}{\emph{Proceedings of the ACM on Management of Data (SIGMOD)}} \bibinfo{volume}{1}, \bibinfo{number}{1} (\bibinfo{year}{2023}), \bibinfo{pages}{1--26}.
\newblock


\bibitem[Jeh and Widom(2003)]%
        {pprscaling}
\bibfield{author}{\bibinfo{person}{Glen Jeh} {and} \bibinfo{person}{Jennifer Widom}.} \bibinfo{year}{2003}\natexlab{}.
\newblock \showarticletitle{Scaling personalized web search}. In \bibinfo{booktitle}{\emph{Proceedings of the 12th international conference on World Wide Web}}. \bibinfo{pages}{271--279}.
\newblock


\bibitem[Klicpera et~al\mbox{.}(2018)]%
        {appnp}
\bibfield{author}{\bibinfo{person}{Johannes Klicpera}, \bibinfo{person}{Aleksandar Bojchevski}, {and} \bibinfo{person}{Stephan G{\"u}nnemann}.} \bibinfo{year}{2018}\natexlab{}.
\newblock \showarticletitle{Predict then Propagate: Graph Neural Networks meet Personalized PageRank}.
\newblock \bibinfo{journal}{\emph{International Conference on Learning Representations}} (\bibinfo{year}{2018}).
\newblock


\bibitem[Liao et~al\mbox{.}(2023a)]%
        {lrh2023efficient}
\bibfield{author}{\bibinfo{person}{Meihao Liao}, \bibinfo{person}{Rong-Hua Li}, \bibinfo{person}{Qiangqiang Dai}, \bibinfo{person}{Hongyang Chen}, \bibinfo{person}{Hongchao Qin}, {and} \bibinfo{person}{Guoren Wang}.} \bibinfo{year}{2023}\natexlab{a}.
\newblock \showarticletitle{Efficient Personalized PageRank Computation: The Power of Variance-Reduced Monte Carlo Approaches}.
\newblock \bibinfo{journal}{\emph{Proceedings of the ACM on Management of Data (SIGMOD)}} \bibinfo{volume}{1}, \bibinfo{number}{2} (\bibinfo{year}{2023}), \bibinfo{pages}{1--26}.
\newblock


\bibitem[Liao et~al\mbox{.}(2023b)]%
        {liao2023efficient}
\bibfield{author}{\bibinfo{person}{Meihao Liao}, \bibinfo{person}{Rong-Hua Li}, \bibinfo{person}{Qiangqiang Dai}, \bibinfo{person}{Hongyang Chen}, \bibinfo{person}{Hongchao Qin}, {and} \bibinfo{person}{Guoren Wang}.} \bibinfo{year}{2023}\natexlab{b}.
\newblock \showarticletitle{Efficient personalized pagerank computation: The power of variance-reduced monte carlo approaches}.
\newblock \bibinfo{journal}{\emph{Proceedings of the ACM on Management of Data (SIGMOD)}} \bibinfo{volume}{1}, \bibinfo{number}{2} (\bibinfo{year}{2023}), \bibinfo{pages}{1--26}.
\newblock


\bibitem[Liao et~al\mbox{.}(2022a)]%
        {lrh2022efficient}
\bibfield{author}{\bibinfo{person}{Meihao Liao}, \bibinfo{person}{Rong-Hua Li}, \bibinfo{person}{Qiangqiang Dai}, {and} \bibinfo{person}{Guoren Wang}.} \bibinfo{year}{2022}\natexlab{a}.
\newblock \showarticletitle{Efficient personalized PageRank computation: A spanning forests sampling based approach}. In \bibinfo{booktitle}{\emph{Proceedings of the 2022 International Conference on Management of Data (SIGMOD)}}. \bibinfo{pages}{2048--2061}.
\newblock


\bibitem[Liao et~al\mbox{.}(2022b)]%
        {liao2022efficient}
\bibfield{author}{\bibinfo{person}{Meihao Liao}, \bibinfo{person}{Rong-Hua Li}, \bibinfo{person}{Qiangqiang Dai}, {and} \bibinfo{person}{Guoren Wang}.} \bibinfo{year}{2022}\natexlab{b}.
\newblock \showarticletitle{Efficient personalized pagerank computation: A spanning forests sampling based approach}. In \bibinfo{booktitle}{\emph{Proceedings of the 2022 International Conference on Management of Data (SIGMOD)}}. \bibinfo{pages}{2048--2061}.
\newblock


\bibitem[Liao et~al\mbox{.}(2022c)]%
        {scara}
\bibfield{author}{\bibinfo{person}{Ningyi Liao}, \bibinfo{person}{Dingheng Mo}, \bibinfo{person}{Siqiang Luo}, \bibinfo{person}{Xiang Li}, {and} \bibinfo{person}{Pengcheng Yin}.} \bibinfo{year}{2022}\natexlab{c}.
\newblock \showarticletitle{{SCARA}: Scalable Graph Neural Networks with Feature-Oriented Optimization}. In \bibinfo{booktitle}{\emph{Proceedings of the {VLDB} Endowment}} (Sydney, Australia), Vol.~\bibinfo{volume}{15}. \bibinfo{publisher}{VLDB Endowment}, \bibinfo{pages}{3240--3248}.
\newblock
\showISSN{2150-8097}
\urldef\tempurl%
\url{https://doi.org/10.14778/3551793.3551866}
\showDOI{\tempurl}


\bibitem[Liao et~al\mbox{.}(2024)]%
        {scara_j}
\bibfield{author}{\bibinfo{person}{Ningyi Liao}, \bibinfo{person}{Dingheng Mo}, \bibinfo{person}{Siqiang Luo}, \bibinfo{person}{Xiang Li}, {and} \bibinfo{person}{Pengcheng Yin}.} \bibinfo{year}{2024}\natexlab{}.
\newblock \showarticletitle{Scalable decoupling graph neural network with feature-oriented optimization}.
\newblock \bibinfo{journal}{\emph{The VLDB Journal}} \bibinfo{volume}{33}, \bibinfo{number}{3} (\bibinfo{year}{2024}), \bibinfo{pages}{667--683}.
\newblock


\bibitem[Liu and Luo(2024)]%
        {liu2024bird}
\bibfield{author}{\bibinfo{person}{Haoyu Liu} {and} \bibinfo{person}{Siqiang Luo}.} \bibinfo{year}{2024}\natexlab{}.
\newblock \showarticletitle{BIRD: Efficient Approximation of Bidirectional Hidden Personalized PageRank}.
\newblock \bibinfo{journal}{\emph{Proceedings of the VLDB Endowment (PVLDB)}} \bibinfo{volume}{17}, \bibinfo{number}{9} (\bibinfo{year}{2024}), \bibinfo{pages}{2255--2268}.
\newblock


\bibitem[Liu et~al\mbox{.}(2016)]%
        {liu2016powerwalk}
\bibfield{author}{\bibinfo{person}{Qin Liu}, \bibinfo{person}{Zhenguo Li}, \bibinfo{person}{John~CS Lui}, {and} \bibinfo{person}{Jiefeng Cheng}.} \bibinfo{year}{2016}\natexlab{}.
\newblock \showarticletitle{Powerwalk: Scalable personalized pagerank via random walks with vertex-centric decomposition}. In \bibinfo{booktitle}{\emph{Proceedings of the 25th ACM International on Conference on Information and Knowledge Management}}. \bibinfo{pages}{195--204}.
\newblock


\bibitem[Lofgren et~al\mbox{.}(2015)]%
        {bippr}
\bibfield{author}{\bibinfo{person}{Peter Lofgren}, \bibinfo{person}{Siddhartha Banerjee}, {and} \bibinfo{person}{Ashish Goel}.} \bibinfo{year}{2015}\natexlab{}.
\newblock \showarticletitle{Bidirectional pagerank estimation: From average-case to worst-case}. In \bibinfo{booktitle}{\emph{Algorithms and Models for the Web Graph: 12th International Workshop, WAW 2015, Eindhoven, The Netherlands, December 10-11, 2015, Proceedings 12}}. Springer, \bibinfo{pages}{164--176}.
\newblock


\bibitem[Lofgren et~al\mbox{.}(2014)]%
        {lofgren2014fast}
\bibfield{author}{\bibinfo{person}{Peter~A Lofgren}, \bibinfo{person}{Siddhartha Banerjee}, \bibinfo{person}{Ashish Goel}, {and} \bibinfo{person}{Comandur Seshadhri}.} \bibinfo{year}{2014}\natexlab{}.
\newblock \showarticletitle{Fast-ppr: Scaling personalized pagerank estimation for large graphs}. In \bibinfo{booktitle}{\emph{Proceedings of the 20th ACM SIGKDD international conference on Knowledge discovery and data mining}}. \bibinfo{pages}{1436--1445}.
\newblock


\bibitem[Lu et~al\mbox{.}(2022)]%
        {lu2022optimal}
\bibfield{author}{\bibinfo{person}{Shangqi Lu}, \bibinfo{person}{Wim Martens}, \bibinfo{person}{Matthias Niewerth}, {and} \bibinfo{person}{Yufei Tao}.} \bibinfo{year}{2022}\natexlab{}.
\newblock \showarticletitle{Optimal algorithms for multiway search on partial orders}. In \bibinfo{booktitle}{\emph{Proceedings of the 41st ACM SIGMOD-SIGACT-SIGAI Symposium on Principles of Database Systems}}. \bibinfo{pages}{175--187}.
\newblock


\bibitem[Luo et~al\mbox{.}(2019)]%
        {luo2019baton}
\bibfield{author}{\bibinfo{person}{Siqiang Luo}, \bibinfo{person}{Xiaokui Xiao}, \bibinfo{person}{Wenqing Lin}, {and} \bibinfo{person}{Ben Kao}.} \bibinfo{year}{2019}\natexlab{}.
\newblock \showarticletitle{BATON: Batch one-hop personalized PageRanks with efficiency and accuracy}.
\newblock \bibinfo{journal}{\emph{IEEE Transactions on Knowledge and Data Engineering (TKDE)}} \bibinfo{volume}{32}, \bibinfo{number}{10} (\bibinfo{year}{2019}), \bibinfo{pages}{1897--1908}.
\newblock


\bibitem[Martens and Popp(2022)]%
        {martens2022complexity}
\bibfield{author}{\bibinfo{person}{Wim Martens} {and} \bibinfo{person}{Tina Popp}.} \bibinfo{year}{2022}\natexlab{}.
\newblock \showarticletitle{The complexity of regular trail and simple path queries on undirected graphs}. In \bibinfo{booktitle}{\emph{Proceedings of the 41st ACM SIGMOD-SIGACT-SIGAI Symposium on Principles of Database Systems}}. \bibinfo{pages}{165--174}.
\newblock


\bibitem[Nguyen et~al\mbox{.}(2015)]%
        {nguyen2015evaluation}
\bibfield{author}{\bibinfo{person}{Phuong Nguyen}, \bibinfo{person}{Paolo Tomeo}, \bibinfo{person}{Tommaso Di~Noia}, {and} \bibinfo{person}{Eugenio Di~Sciascio}.} \bibinfo{year}{2015}\natexlab{}.
\newblock \showarticletitle{An evaluation of simrank and personalized pagerank to build a recommender system for the web of data}. In \bibinfo{booktitle}{\emph{Proceedings of the 24th international conference on world wide web}}. \bibinfo{pages}{1477--1482}.
\newblock


\bibitem[Shamir(2011)]%
        {shamir2011variant}
\bibfield{author}{\bibinfo{person}{Ohad Shamir}.} \bibinfo{year}{2011}\natexlab{}.
\newblock \showarticletitle{A variant of azuma's inequality for martingales with subgaussian tails}.
\newblock \bibinfo{journal}{\emph{arXiv preprint arXiv:1110.2392}} (\bibinfo{year}{2011}).
\newblock


\bibitem[Tveit(2003)]%
        {tveit2003complexity}
\bibfield{author}{\bibinfo{person}{Amund Tveit}.} \bibinfo{year}{2003}\natexlab{}.
\newblock \showarticletitle{On the complexity of matrix inversion}.
\newblock \bibinfo{journal}{\emph{Mathematical Note}}  \bibinfo{volume}{1} (\bibinfo{year}{2003}).
\newblock


\bibitem[Vattani et~al\mbox{.}(2011)]%
        {vattani2011preserving}
\bibfield{author}{\bibinfo{person}{Andrea Vattani}, \bibinfo{person}{Deepayan Chakrabarti}, {and} \bibinfo{person}{Maxim Gurevich}.} \bibinfo{year}{2011}\natexlab{}.
\newblock \showarticletitle{Preserving personalized pagerank in subgraphs}. In \bibinfo{booktitle}{\emph{Proceedings of the 28th international conference on machine learning (ICML-11)}}. Citeseer, \bibinfo{pages}{793--800}.
\newblock


\bibitem[Vershynin(2018)]%
        {vershynin2018high}
\bibfield{author}{\bibinfo{person}{Roman Vershynin}.} \bibinfo{year}{2018}\natexlab{}.
\newblock \bibinfo{booktitle}{\emph{High-dimensional probability: An introduction with applications in data science}}. Vol.~\bibinfo{volume}{47}.
\newblock \bibinfo{publisher}{Cambridge university press}.
\newblock


\bibitem[Wang et~al\mbox{.}(2021)]%
        {agp}
\bibfield{author}{\bibinfo{person}{Hanzhi Wang}, \bibinfo{person}{Mingguo He}, \bibinfo{person}{Zhewei Wei}, \bibinfo{person}{Sibo Wang}, \bibinfo{person}{Ye Yuan}, \bibinfo{person}{Xiaoyong Du}, {and} \bibinfo{person}{Ji-Rong Wen}.} \bibinfo{year}{2021}\natexlab{}.
\newblock \showarticletitle{Approximate graph propagation}. In \bibinfo{booktitle}{\emph{Proceedings of the 27th ACM SIGKDD Conference on Knowledge Discovery \& Data Mining}}. \bibinfo{pages}{1686--1696}.
\newblock


\bibitem[Wang and Wei(2023)]%
        {setpush}
\bibfield{author}{\bibinfo{person}{Hanzhi Wang} {and} \bibinfo{person}{Zhewei Wei}.} \bibinfo{year}{2023}\natexlab{}.
\newblock \showarticletitle{Estimating Single-Node PageRank in $ \Tilde{\mathcal{O}}(min\{d_t,\sqrt{m}\})$ Time}.
\newblock \bibinfo{journal}{\emph{Proceedings of the VLDB Endowment (PVLDB)}} \bibinfo{volume}{16}, \bibinfo{number}{11} (\bibinfo{year}{2023}), \bibinfo{pages}{2949--2961}.
\newblock


\bibitem[Wang et~al\mbox{.}(2020)]%
        {rbs}
\bibfield{author}{\bibinfo{person}{Hanzhi Wang}, \bibinfo{person}{Zhewei Wei}, \bibinfo{person}{Junhao Gan}, \bibinfo{person}{Sibo Wang}, {and} \bibinfo{person}{Zengfeng Huang}.} \bibinfo{year}{2020}\natexlab{}.
\newblock \showarticletitle{Personalized pagerank to a target node, revisited}. In \bibinfo{booktitle}{\emph{Proceedings of the 26th ACM SIGKDD International Conference on Knowledge Discovery \& Data Mining}}. \bibinfo{pages}{657--667}.
\newblock


\bibitem[Wang et~al\mbox{.}(2024)]%
        {wang2024revisiting}
\bibfield{author}{\bibinfo{person}{Hanzhi Wang}, \bibinfo{person}{Zhewei Wei}, \bibinfo{person}{Ji-Rong Wen}, {and} \bibinfo{person}{Mingji Yang}.} \bibinfo{year}{2024}\natexlab{}.
\newblock \showarticletitle{Revisiting Local Computation of PageRank: Simple and Optimal}. In \bibinfo{booktitle}{\emph{Proceedings of the 56th Annual ACM symposium on Theory of computing (STOC)}}. \bibinfo{pages}{911--922}.
\newblock


\bibitem[Wang et~al\mbox{.}(2019a)]%
        {wang2019parallelizing}
\bibfield{author}{\bibinfo{person}{Runhui Wang}, \bibinfo{person}{Sibo Wang}, {and} \bibinfo{person}{Xiaofang Zhou}.} \bibinfo{year}{2019}\natexlab{a}.
\newblock \showarticletitle{Parallelizing approximate single-source personalized PageRank queries on shared memory}.
\newblock \bibinfo{journal}{\emph{The VLDB Journal}} \bibinfo{volume}{28}, \bibinfo{number}{6} (\bibinfo{year}{2019}), \bibinfo{pages}{923--940}.
\newblock


\bibitem[Wang et~al\mbox{.}(2019b)]%
        {fora_j}
\bibfield{author}{\bibinfo{person}{Sibo Wang}, \bibinfo{person}{Renchi Yang}, \bibinfo{person}{Runhui Wang}, \bibinfo{person}{Xiaokui Xiao}, \bibinfo{person}{Zhewei Wei}, \bibinfo{person}{Wenqing Lin}, \bibinfo{person}{Yin Yang}, {and} \bibinfo{person}{Nan Tang}.} \bibinfo{year}{2019}\natexlab{b}.
\newblock \showarticletitle{Efficient algorithms for approximate single-source personalized pagerank queries}.
\newblock \bibinfo{journal}{\emph{ACM Transactions on Database Systems (TODS)}} \bibinfo{volume}{44}, \bibinfo{number}{4} (\bibinfo{year}{2019}), \bibinfo{pages}{1--37}.
\newblock


\bibitem[Wang et~al\mbox{.}(2017)]%
        {fora}
\bibfield{author}{\bibinfo{person}{Sibo Wang}, \bibinfo{person}{Renchi Yang}, \bibinfo{person}{Xiaokui Xiao}, \bibinfo{person}{Zhewei Wei}, {and} \bibinfo{person}{Yin Yang}.} \bibinfo{year}{2017}\natexlab{}.
\newblock \showarticletitle{FORA: simple and effective approximate single-source personalized pagerank}. In \bibinfo{booktitle}{\emph{Proceedings of the 23rd ACM SIGKDD International Conference on Knowledge Discovery and Data Mining}}. \bibinfo{pages}{505--514}.
\newblock


\bibitem[Wei et~al\mbox{.}(2018)]%
        {topppr}
\bibfield{author}{\bibinfo{person}{Zhewei Wei}, \bibinfo{person}{Xiaodong He}, \bibinfo{person}{Xiaokui Xiao}, \bibinfo{person}{Sibo Wang}, \bibinfo{person}{Shuo Shang}, {and} \bibinfo{person}{Ji-Rong Wen}.} \bibinfo{year}{2018}\natexlab{}.
\newblock \showarticletitle{Topppr: top-k personalized pagerank queries with precision guarantees on large graphs}. In \bibinfo{booktitle}{\emph{Proceedings of the 2018 International Conference on Management of Data (SIGMOD)}}. \bibinfo{pages}{441--456}.
\newblock


\bibitem[Wei et~al\mbox{.}(2024)]%
        {wei2024approximating}
\bibfield{author}{\bibinfo{person}{Zhewei Wei}, \bibinfo{person}{Ji{-}Rong Wen}, {and} \bibinfo{person}{Mingji Yang}.} \bibinfo{year}{2024}\natexlab{}.
\newblock \showarticletitle{Approximating single-source {Personalized} {PageRank} with absolute error guarantees}. In \bibinfo{booktitle}{\emph{Proceedings of the 27th International Conference on Database Theory}}, Vol.~\bibinfo{volume}{290}. \bibinfo{pages}{9:1--9:19}.
\newblock
\urldef\tempurl%
\url{https://doi.org/10.4230/LIPIcs.ICDT.2024.9}
\showDOI{\tempurl}


\bibitem[Wu et~al\mbox{.}(2021a)]%
        {speedppr}
\bibfield{author}{\bibinfo{person}{Hao Wu}, \bibinfo{person}{Junhao Gan}, \bibinfo{person}{Zhewei Wei}, {and} \bibinfo{person}{Rui Zhang}.} \bibinfo{year}{2021}\natexlab{a}.
\newblock \showarticletitle{Unifying the global and local approaches: an efficient power iteration with forward push}. In \bibinfo{booktitle}{\emph{Proceedings of the 2021 International Conference on Management of Data}}. \bibinfo{pages}{1996--2008}.
\newblock


\bibitem[Wu et~al\mbox{.}(2021b)]%
        {wu2021unifying}
\bibfield{author}{\bibinfo{person}{Hao Wu}, \bibinfo{person}{Junhao Gan}, \bibinfo{person}{Zhewei Wei}, {and} \bibinfo{person}{Rui Zhang}.} \bibinfo{year}{2021}\natexlab{b}.
\newblock \showarticletitle{Unifying the global and local approaches: an efficient power iteration with forward push}. In \bibinfo{booktitle}{\emph{Proceedings of the 2021 International Conference on Management of Data (SIGMOD)}}. \bibinfo{pages}{1996--2008}.
\newblock


\bibitem[Wu et~al\mbox{.}(2008)]%
        {wu2008top}
\bibfield{author}{\bibinfo{person}{Xindong Wu}, \bibinfo{person}{Vipin Kumar}, \bibinfo{person}{J Ross~Quinlan}, \bibinfo{person}{Joydeep Ghosh}, \bibinfo{person}{Qiang Yang}, \bibinfo{person}{Hiroshi Motoda}, \bibinfo{person}{Geoffrey~J McLachlan}, \bibinfo{person}{Angus Ng}, \bibinfo{person}{Bing Liu}, \bibinfo{person}{Philip~S Yu}, {et~al\mbox{.}}} \bibinfo{year}{2008}\natexlab{}.
\newblock \showarticletitle{Top 10 algorithms in data mining}.
\newblock \bibinfo{journal}{\emph{Knowledge and information systems}}  \bibinfo{volume}{14} (\bibinfo{year}{2008}), \bibinfo{pages}{1--37}.
\newblock


\bibitem[Yang(2022)]%
        {bhpp}
\bibfield{author}{\bibinfo{person}{Renchi Yang}.} \bibinfo{year}{2022}\natexlab{}.
\newblock \showarticletitle{Efficient and Effective Similarity Search over Bipartite Graphs}. In \bibinfo{booktitle}{\emph{Proceedings of the ACM Web Conference (WWW) 2022}}. \bibinfo{pages}{308--318}.
\newblock


\bibitem[Yang et~al\mbox{.}(2020)]%
        {yang2019homogeneous}
\bibfield{author}{\bibinfo{person}{Renchi Yang}, \bibinfo{person}{Jieming Shi}, \bibinfo{person}{Xiaokui Xiao}, \bibinfo{person}{Yin Yang}, {and} \bibinfo{person}{Sourav~S Bhowmick}.} \bibinfo{year}{2020}\natexlab{}.
\newblock \showarticletitle{Homogeneous network embedding for massive graphs via reweighted personalized PageRank}.
\newblock \bibinfo{journal}{\emph{Proceedings of the VLDB Endowment (PVLDB)}} \bibinfo{volume}{13}, \bibinfo{number}{5} (\bibinfo{year}{2020}), \bibinfo{pages}{670--683}.
\newblock


\bibitem[Yoon et~al\mbox{.}(2018)]%
        {yoon2018tpa}
\bibfield{author}{\bibinfo{person}{Minji Yoon}, \bibinfo{person}{Jinhong Jung}, {and} \bibinfo{person}{U Kang}.} \bibinfo{year}{2018}\natexlab{}.
\newblock \showarticletitle{Tpa: Fast, scalable, and accurate method for approximate random walk with restart on billion scale graphs}. In \bibinfo{booktitle}{\emph{2018 IEEE 34th International Conference on Data Engineering (ICDE)}}. IEEE, \bibinfo{pages}{1132--1143}.
\newblock


\bibitem[Zhou et~al\mbox{.}(2025)]%
        {zhou2025one}
\bibfield{author}{\bibinfo{person}{Junjie Zhou}, \bibinfo{person}{Meihao Liao}, \bibinfo{person}{Rong-Hua Li}, \bibinfo{person}{Longlong Lin}, {and} \bibinfo{person}{Guoren Wang}.} \bibinfo{year}{2025}\natexlab{}.
\newblock \showarticletitle{One Index for All: Towards Efficient Personalized PageRank Computation for Every Damping Factor}.
\newblock \bibinfo{journal}{\emph{Proceedings of the ACM on Management of Data (SIGMOD)}} \bibinfo{volume}{3}, \bibinfo{number}{4} (\bibinfo{year}{2025}), \bibinfo{pages}{1--26}.
\newblock


\bibitem[Zhu et~al\mbox{.}(2024)]%
        {zhu2024personalized}
\bibfield{author}{\bibinfo{person}{Zulun Zhu}, \bibinfo{person}{Sibo Wang}, \bibinfo{person}{Siqiang Luo}, \bibinfo{person}{Dingheng Mo}, \bibinfo{person}{Wenqing Lin}, {and} \bibinfo{person}{Chunbo Li}.} \bibinfo{year}{2024}\natexlab{}.
\newblock \showarticletitle{Personalized PageRanks over Dynamic Graphs--The Case for Optimizing Quality of Service}. In \bibinfo{booktitle}{\emph{Proceedings of the 2024 IEEE 40th International Conference on Data Engineering (ICDE)}}.
\newblock


\end{thebibliography}
\appendix
\newpage
%\section{Appendix}
%\label{sec:appendix}

%\section{Table of Notations}
%The following table summarizes the frequently used notations in this paper, organized by sections.
\begin{table}[H]
  \centering
  \caption{Frequently used notations in this paper.}
  \label{tab:notations}
  \vspace{-1em}
  % --- CONTROL 1: ROW SPACING ---
  \renewcommand{\arraystretch}{1.1}
  % --- CONTROL 2: COLUMN WIDTH ---
  \begin{tabularx}{\textwidth}{p{2.8cm} X}
  \toprule
  \textbf{Notation} & \textbf{Description} \\
  \midrule
  
  % --- SECTION 1 ---
  $G = (V_G, E_E)$ & Underlying directed graph with node set $V$, edge set $E$; or simply $(V, E)$ \\
  
  $n, m$ & Number of nodes, edges in $G$ \\
  
  $\alpha$ & Constant decay factor defining PageRank and PPR; $\alpha \in (0, 1)$\\

  $p$ & Constant failure probability bound\\
  
  $\delta, c$ & Relative error setting threshold; constant relative error parameter \\
  
  $\epsilon$ & Absolute error setting parameter \\
  
  $\pi(s, t),\pi_G(s,t)$ & Personalized PageRank scores between nodes $s, t$ in $G$\\

  \query{ADJ}, \query{DEG}, \query{JUMP} & Queries in arc-centric model\\

  \midrule
  
  $e = (u, v)$ & Directed edge from node $u$ to $v$ \\
  
  $d_{out}^G(u), d_{in}^G(u)$ & Out-degree, in-degree of node $u$ in $G$ \\

  $V_{\ge \delta}, X_{\geq \delta}$ & Subset of $V$ with PPR scores $\ge \delta$\\

  $V_{\mathrm{out}}, X_{\mathrm{out}}$ & Candidate node set output by algorithm \algoa\\

  $\overleftrightarrow{p_G}(u,v)$ & Set of all directed paths between node pair $(u,v)$ in $G$\\
  
  $\footnotesize\overleftrightarrow{p}, w_G({\footnotesize\overleftrightarrow{p}})$ & Path in $G$; path weight regarding $\alpha$-decay random walk\\

  $\boundary{X}$ & Nodes in $V \setminus X$ having out-edge to set $X$\\

  \midrule

  $\mathbf{{S}_r}(u)$ & Weight in Lemma \ref{lemma:newdecom}; $\mathbf{{S}_r}(u)=\frac{1}{\alpha}{{\pi}_G(s,u)d_{\mathrm{out}}^{G_X}(u)}/{d_{\mathrm{out}}^{G}(u)}$\\

  $\hat\pi_G(s,v)$ & Estimated PPR scores\\

  $\mathbf{{\hat S}_r}(u)$ & Empirical weight in Lemma \ref{lemma:newdecom}; $\mathbf{{\hat S}_r}(u)= \frac{1}{\alpha}{\hat{\pi}_G(s,u)d_{\mathrm{out}}^{G_X}(u)}/{d_{\mathrm{out}}^{G}(u)}$\\

  $\texttt{SUM}(\mathbf{\hat{S}_r}) $ & Total score summation; $\texttt{SUM}(\mathbf{\hat{S}_r}) = \sum_{u\in\boundary{X}}\mathbf{{\hat S}_r}(u) $\\

  $\mathbf{\hat{\bar{S}}_r}$ & Normalized weight distribution; $\mathbf{\hat{\bar{S}}_r}:=\mathbf{\hat{S}_r}/\texttt{SUM}(\mathbf{\hat{S}_r})$\\

  $c_r, p_r,\delta_r$ & Parameters in Algorithm \algorr\\

  $N_r, H[v]$ & Total walk count in \algorr; walks ending at node $v$\\

  $X_r$ & Remaining target (non-estimated) subset after round $r$\\
  
  $V_r$ & Newly estimated subset in round $r$; $V_r = X_{r-1} \setminus X_r$\\

  $R$ & Round bound in \algor; last execution round if context clear\\

  $\Delta$ & Constant shrink factor in \algor; $\Delta>1$\\

  \midrule

  $ \mathcal{U}(n, D, d)$ & Hard-instance multigraph family with parameters $D, d$\\

  $\mathrm{U}, \mathrm{U}(r)$ & (Multi)graph in family $ \mathcal{U}(n, D, d)$; its $r$-padded instance\\

  $\Sigma(n, D, d)$ & Distribution over family $ \mathcal{U}$ \\

  $\mathrm{SP}\in \binom{X}{2}$ & Split of set $X$ into $X_1, X_2$; generator part for $\Sigma(n, D, d)$\\

  $\mathrm{MUL}(\mathrm{U})$ & Edge multiplicity of multigraph $\mathrm{U}$ (Def. \ref{def:multigraph})\\

  $\mathcal{A}_{(\cdot)}$ & Algorithms in Lemmas \ref{lemma:PPR_lift}, \ref{thm:algo_adapt_tot} \\

  $e_t$ & Event in $\Sigma(n, D, d)$: $k^{\text{th}}$ out-edge of $y_i$ connects to $k'^{\text{th}}$ in-edge of $x_j$\\

  \midrule
% --- APPENDIX / ADVANCED NOTATION ---
  $\tau$ & Constant for geometric sequence $\delta_0, \delta_0\tau, \delta_0\tau^2, \dots$ (Appendix ~\ref{sec:adaptive_thre})\\

  $K_{\mathrm{RW}}, K_{\mathrm{err}}, K_{\mathrm{prob}}$ & Constants in \newalgor{} related to $N_r, c_r, p_r$\\

  $\mathbf{1}_A$ & Indicator function of event $A$\\
  
  $\mathrm{All}_{r-1}$ & Random variable for information available before round~$r$ estimation\\
  
  $ \hat\pi_G^{(r)}(s,t)$ & Estimate of $\pi_G(s,t)$ based on prior estimates (Eq. ~\eqref{eq:def_coherent_estimation})\\

  $D_r(t)$ & Martingale-difference sequence of estimation error at node $t$(Eq. ~\eqref{eq:def_diff_free})\\
  
  \bottomrule
  
  \end{tabularx}
\end{table}

\section{Proof of Lemma~\ref{lemma:discover}}
\label{sec:proof:lemma:discover}
\begin{fact}[Chernoff bound]
\label{fact:chernoff}
Let \(X_1,\ldots,X_n\) be independent random variables taking values in \([0,1]\). Let \(X := \sum_{i=1}^n X_i\) and \(\mu := \mathbb{E}[X]\). Then, for any \(c>0\), \begin{enumerate*} \item if \(c\in[0,1]\), then \(\pr[X > (1+c)\mu] \le \exp\!\big(-\tfrac{c^2}{3}\mu\big)\); \item if \(c\in[0,1]\), then \(\pr[X < (1-c)\mu] \le \exp\!\big(-\tfrac{c^2}{2}\mu\big)\); \item if \(c\ge 1\), then \(\pr[X > (1+c)\mu] \le \exp\!\big(-\tfrac{c}{3}\mu\big)\).
\end{enumerate*}
\end{fact}
\vspace{-1em}
\begin{proof}
Let \(L := \ln \tfrac{1}{\delta} + \ln \tfrac{2}{p}\), and set
\(T_1 = L \lceil 1/\delta \rceil\), \(T_2 = 20L \lceil 1/\delta \rceil\), and \(\theta = 10L\).
We allocate failure budgets of \(p_f/2\), \(p_f/4\), and \(p_f/4\) for three independent failure events: 
\begin{enumerate}[label=(\roman*)]
\item No false negatives in discovery round.
After \(T_1\) independent walks, the miss probability
$$
\pr[u \notin V^{\rm raw}_{\text{out}}] = (1 - \pi(s,u))^{T_1}
\le (1 - \delta)^{T_1}
\le e^{-\delta T_1}
\le e^{-L}
= \delta \tfrac{p_f}{2}.
$$
By union over node set \(|V_{\ge\delta}|\), we have $\pr[V_{\ge\delta}\not\subseteq V^{\rm raw}_{\text{out}}] 
\le \tfrac{1}{\delta} \cdot \delta \tfrac{p_f}{2} = \tfrac{p_f}{2};$ \item No false negatives in estimation round. For \(u \in V_{\ge \delta}\), the counter \(H[u]\) after \(T_2\) walks satisfies
$$\mu_u := \mathbb{E}[H[u]] = T_2 \pi(s,u) \ge 20L.$$ With threshold at \(\theta = 10L = \mu_u/2\), we have $$
\pr[H[u] < \theta] \le \exp(-\mu_u/8)
\le e^{-20L/8}
= e^{-2.5L}
= \left(\delta \tfrac{p}{2}\right)^{2.5}
$$ by Chernoff bound. Again by union over node set \(V_{\ge \delta}\), we have that $$\pr[\exists u \in V_{\ge \delta} : H[u] < \theta]
\le \tfrac{1}{\delta} \left(\delta \tfrac{p_f}{2}\right)^{2.5}
= \delta^{1.5} \left(\tfrac{p_f}{2}\right)^{2.5}
< \tfrac{p_f}{4}$$ for \(\delta \le 1/2\) and any sufficiently small probability \(p_f\); 
\item No heavy false positives in estimation round. For any node \(u\) with \(\pi(s,u) < \delta/(5e)\), it holds $$\mu_u = T_2 \pi(s,u)
< 20L \Big\lceil \tfrac{1}{\delta} \Big\rceil \tfrac{\delta}{5e}
= \tfrac{4L}{e} \delta \Big\lceil \tfrac{1}{\delta} \Big\rceil
< \tfrac{4L}{e}(1+\delta)
\le \tfrac{6L}{e} (\delta \le \tfrac{1}{2}).$$
With \(\theta = 10L\), we then have $$
\pr[H[u] \ge \theta] \le \exp\!\left(-\tfrac{\theta - \mu_u}{3}\right)
\le \exp\!\left(-\tfrac{L(10 - 6/e)}{3}\right)
< e^{-2.5L}
= \left(\delta \tfrac{p}{2}\right)^{2.5}.
$$ Since discovery round generates at most \(T_1\) raw candidates, another union bound gives $$\pr[\exists u \notin V_{\ge \delta/(5e)} : u \in V_{\mathrm{out}}]
\le T_1 \cdot e^{-2.5L}
\le L\!\left(\tfrac{1}{\delta}+1\right)\! \left(\delta \tfrac{p_f}{2}\right)^{2.5}
< \tfrac{p_f}{4}$$
for \(\delta \le 1/2\) and sufficiently small probability \(p\). 
\end{enumerate}
All together yields total failure probability bounded by \(p_f\), which completes the proof.
\end{proof}

\section{SSPPR-R Upper Bound Proof}
\mypara{Model Setting.} 
By default, we adopt the following settings in handling sampling complexity. 

We adopt the standard word-RAM model in which each machine word contains \(\Theta(\log n)\) bits. Basic arithmetic operations on \(\bigo(1)\)-word numbers—such as addition, subtraction, multiplication or division by a constant, and comparisons—take \(\bigo(1)\) time. We can generate a uniform random integer in \(\{1, \dots, N\}\) in \(\bigo(1)\) time for \(n = \bigo(N)\). Probabilities are assumed to be representable with \(\bigo(\log n)\)-bit precision (e.g., all positive values are at least \(n^{-\bigo(1)}\)), which is standard in prior work. We further assume \(1/\delta = \bigo(\mathrm{poly}(n))\), implying that any number in the interval \((\delta, 1)\) can be stored in \(\bigo(1)\) words, common and well-aligned with existing PPR settings.

\subsection{Proof of Lemma~\ref{lemma:newdecom}}
\label{sec:proof:lemma:newdecom}
\begin{proof}
The decomposition in Equation~(\ref{Eq:decom1}) states $\pi_G(s,t) = \sum_{u\in \boundary{X}}\pi_G(s,u)\sum_{{\footnotesize\overleftrightarrow{p_2}} \in \overleftrightarrow{p_G}(u, t)} w_G(p_2)$. Hence, by comparing this equation with the one to be proved in Lemma~\ref{lemma:newdecom}, we only need to show 
\begin{align}\label{eq:weightdecomp}
\begin{matrix}
\sum_{{\footnotesize\overleftrightarrow{p_2}} \in \overleftrightarrow{p_G}(u, t)} w_G({\footnotesize\overleftrightarrow{p_2}})={d_{\mathrm{out}}^{G_X}(u)\pi_{G_X}(u,t)}/{ d_{\mathrm{out}}^{G}(u)}    
\end{matrix}
\end{align}
To prove this equation, we first decompose $\pi_{G_X}(u,t)$ as $\alpha \sum_{{\footnotesize\overleftrightarrow{p}} \in \overleftrightarrow{p_{G_X}}(u,t)} w_{G_X}({\footnotesize\overleftrightarrow{p}})$ by definition of expressing PPR in terms of the sum of path weights. Hence, the right hand side of Equation~\ref{eq:weightdecomp} Becomes $(d_{\mathrm{out}}^{G_X}(u)/{d_{\mathrm{out}}^{G}(u)})\sum_{{\footnotesize\overleftrightarrow{p}} \in \overleftrightarrow{p_{G_X}}(u,t)} w_{G_X}({\footnotesize\overleftrightarrow{p}})$. 
Since in $G_X$ there are no edges leading back to $\boundary{X}$, the path set $\overleftrightarrow{p_{G_X}}(u,t)$ is identical to $\overleftrightarrow{p_G}(u, t)$ by construction. Moreover, all out-degrees within $X$ remain unchanged between $G$ and $G_X$. The only discrepancy arises from the first edge of each path, which is exactly balanced by the factor ${d_{\mathrm{out}}^{G_X}(u)}/{ d_{\mathrm{out}}^{G}(u)}$. Therefore, Equation~(\ref{eq:decomp-pi}) holds. 

We further proceed to establish another inequality yet to be used:
\begin{align}
\begin{matrix}
\sum_{x\in X} \pi_G(s, x)
 &\leq (1-\alpha)\sum_{u \in \boundary{X}}  \frac{1}{\alpha}{d_{out}^{G_X}(u)\pi_G(s,u)}/{d_{out}^{G}(u)} \leq \frac{1 }{\alpha} \sum_{x\in X} \pi_G(s, x).
\end{matrix}\label{eq:decom_bound}
\end{align}
By construction of $G_X$, all edges from $u$ ends in $X$. Furthermore, since all edges from $X$ to $G\backslash X$ are redirected to the padding node $v_X$, the subset $X$ together with $v_X$ is closed with respect to out-edges. Then, the potential endpoints of a random walk 
originating from $u \in \boundary{X}$ lie within $X \cup \{u, v_X\}$, and there is no loop containing $u$. Consequently, we have $\pi_{G_X}(u,u) = \alpha$, $\sum_{t\in X \cup \{v_X\}}\pi_{G_X}(u,t)=1-\alpha$,
\begin{align*}
 \begin{matrix}
 \sum_{t\in X\cup \{v_X\}} \pi_G(s,t)    
 \end{matrix}
&=  \begin{matrix}\sum_{t\in X\cup \{v_X\}}\left (\sum_{u \in \boundary{X}}{\frac{1}{\alpha} d_{out}^{G_X}(u)\pi_G(s,u)\pi_{G_X}(u, t)}/{d_{out}^{G}(u)}    
 \right )\end{matrix}\\
&= \begin{matrix}\frac{1}{\alpha}\sum_{u \in \boundary{X}}{d_{out}^{G_X}(u)\pi_G(s,u)}/{d_{out}^{G}(u)}\sum_{t\in X\cup \{v_X\}}\pi_{G_X}(u, t).    
 \end{matrix}\\
\end{align*}
Combined with our result $\sum_{t\in X \cup \{v_X\}}\pi_{G_X}(u,t)=1-\alpha$, we then bound $\sum_{x\in X} \pi_G(s, x)$ with
\begin{align*}
   &\quad\ \begin{matrix}\sum_{x\in X} \pi_G(s, x) \end{matrix}\\
   &\leq \begin{matrix}
       \sum_{t\in X\cup \{v_X\}} \pi_G(s,t)
   \end{matrix} \\
   &= \begin{matrix}
       \frac{1}{\alpha}\sum_{u \in \boundary{X}}  {d_{out}^{G_X}(u)\pi_G(s,u)}/{d_{out}^{G}(u)}\sum_{t\in X\cup \{v_X\}}\pi_{G_X}(u, t)
   \end{matrix} \\
   &= \frac{1}{\alpha}\begin{matrix}
       \sum_{u \in \boundary{X}}{(1-\alpha)d_{out}^{G_X}(u)\pi_G(s,u)}/{d_{out}^{G}(u)}.
   \end{matrix}
\end{align*}
Hence the left side of Inequality~(\ref{eq:decom_bound}) holds. Next we establish the right side of the inequality. We will first estimate $\sum_{x\in X}\pi_{G_X}(u,x)$ for $u \in \boundary{X}$ from a lower bound perspective. Specifically, we will first prove that $\sum_{x\in X}\pi_{G_X}(u,x)\ge (1-\alpha)\alpha$. Observe that in $G_X$, every outgoing edge from $u \in \boundary{X}$ leads to a node in $X$. An $\alpha$-decay random walk starting at $u$ transitions to a neighbor $v \in X$ with probability $(1-\alpha)$. Upon reaching $v$, the walk terminates with probability $\alpha$. Thus, the probability that the walk terminates exactly at the second step (which lies in $X$) is $\alpha(1-\alpha)$. This yields the lower bound $\sum_{x\in X}\pi_{G_X}(u,x)\ge (1-\alpha)\alpha$. Be Equation~(\ref{eq:decomp-pi}), we can then express the term $\sum_{x\in X}\pi_G(s,x)$ as
$$
\begin{matrix}
\sum_{x\in X}\pi_G(s,x) = \frac{1}{\alpha}\sum_{u \in \boundary{X}}\frac{d_{\mathrm{out}}^{G_X}(u)\pi_G(s,u)}{d_{\mathrm{out}}^{G}(u)}\sum_{x\in X}\pi_{G_X}(u, x).
\end{matrix}
$$
Substituting $\sum_{x\in X}\pi_{G_X}(u,x) \ge \alpha(1-\alpha)$ into this expression, it yields
$$
\begin{matrix}
\sum_{x\in X}\pi_G(s,x) \geq (1-\alpha)\sum_{u \in \boundary{X}}\frac{d_{\mathrm{out}}^{G_X}(u)\pi_G(s,u)}{d_{\mathrm{out}}^{G}(u)},
\end{matrix}
$$
which completes the proof.

\end{proof}

\subsection{Proof of Lemma~\ref{lemma:one_round}}
\label{sec:proof:lemma:one_round}

\begin{proof}
Recall that in \algorr{}, our estimator consists of two parts: an initial accurate estimation over \boundary{X}: $\{\hat{\pi}_G(s,u)\mid u\in \boundary{X}\}$ and source distributed $\alpha-$RW probability based on it: 
\begin{align}
\label{eq:app:5}
\begin{matrix}
P_r(t)=\sum_{u\in\boundary{X}}\mathbf{\hat{\bar{S}}_r}(u)\cdot\pi_{G_X}(u,t)
\end{matrix}
\end{align}
 over graph $G_X$ targeting $\forall t\in X$, where $\mathbf{\hat{\bar{S}}_r}$ is normalized probability over value $$\begin{matrix}
\mathbf{\hat{S}_r}(u):=\frac{1}{\alpha}{\hat{\pi}_G(s,u)d_{\mathrm{out}}^{G_X}(u)}/{d_{\mathrm{out}}^{G}(u)}.
 \end{matrix}$$ This probability is later approximated by simulating sourced $\alpha-$RWs. First, since we invoke \algoa{} to fetch $X_{out}$ in $X$, we have that with probability at least \(1-p_r\), \(X_{out}\) contains all the nodes $t$ satisfying 
$$P_r(t) \geq \alpha (1-\alpha)\delta_r/4 $$
and each node $t$ in \(X_{out}\) satisfies 
$$P_r(t) \geq \alpha (1-\alpha)\delta_r/(20e)$$
by Lemma~\ref{lemma:discover} (guarantee of \algoa{}). As the final estimator in each round is multiplied with factor $\texttt{SUM}(\mathbf{\hat{S}_r})$ (Line 12 in Algorithm~\ref{algo:dist_walk}), result $(i)$ naturally holds. Meanwhile, by simulating $N_r={256\alpha\log(n/p_r)}/\left({c_r^2\delta_r}\right)$ $\alpha-$RWs to estimate $P_r(t)$ as $\hat{P}_r(t)$ (Lines 7-11 in Algorithm~\ref{algo:dist_walk}), it holds with probability $\geq 1-p_r/n$ that for $\forall t \in X$: 
\begin{align}
\label{eq:app:6}
\hat{P}_r(t)/P(t) \in [1-c_r, 1+c_r]
\end{align}
by Chernoff bound. Denote $\forall u\in\boundary{X},\mathbf{{S}_r}(u):=\frac{1}{\alpha}{{\pi}_G(s,u)d_{\mathrm{out}}^{G_X}(u)}/{d_{\mathrm{out}}^{G}(u)}$, and recall that $\texttt{SUM}(\mathbf{{S}_r})$ equals to the summation over all values in vector $\mathbf{{S}_r}$, it then holds that $\text{for }\forall t\in X,$
\begin{align}
    \frac{\hat\pi_G(s, t)}{\pi_G(s, t)}&=\frac{\hat{P}_r(t)\cdot\texttt{SUM}(\mathbf{\hat{S}_r})}{\pi_G(s, t)}\\
    &=\frac{\hat{P}_r(t)\cdot\texttt{SUM}(\mathbf{\hat{S}_r})}{P_r(t)\cdot\texttt{SUM}(\mathbf{\hat{S}_r})} \cdot \frac{P_r(t)\cdot\texttt{SUM}(\mathbf{\hat{S}_r})}{\pi_G(s, t)}\\
    &=\frac{\hat{P}_r(t)}{P_r(t)}\cdot\frac{P_r(t)\cdot\texttt{SUM}(\mathbf{\hat{S}_r})}{\pi_G(s, t)}\\
\text{{(By Equation~(\ref{eq:app:5}))}}&=\frac{\hat{P}_r(t)}{P_r(t)}\cdot\frac{\sum_{u\in\boundary{X}}\mathbf{\hat{S}_r}(u)\cdot\pi_{G_X}(u,t)}{\pi_G(s, t)}\\
    \text{{(By Equation~(\ref{eq:decomp-pi}))}}&=\frac{\hat{P}_r(t)}{P_r(t)} \cdot \frac{\sum_{u\in\boundary{X}}\mathbf{\hat{S}_r}(u)\cdot\pi_{G_X}(u,t)}{\sum_{u\in\boundary{X}}\mathbf{S_r}(u)\cdot\pi_{G_X}(u,t)}.\label{eq:app:8}
\end{align}
By requirement, for $\forall u\in \boundary{X}$, $\hat{\pi}_G(s, u)/\pi_G(s, u)$ is bounded within $[1-c, 1+c]$. Thus we have
\begin{align}
\label{eq:app:9}
    \frac{\mathbf{\hat{S}_r}(u)}{\mathbf{S_r}(u)}=\frac{\frac{1}{\alpha}{{\hat{\pi}}_G(s,u)d_{\mathrm{out}}^{G_X}(u)}/{d_{\mathrm{out}}^{G}(u)}}{\frac{1}{\alpha}{{\pi}_G(s,u)d_{\mathrm{out}}^{G_X}(u)}/{d_{\mathrm{out}}^{G}(u)}}=\frac{{\hat{\pi}}_G^0s,u)}{{\hat{\pi}}_G(s,u)} \in [1-c, 1+c].
\end{align}
Combining Equations~\ref{eq:app:5},~\ref{eq:app:8} and~\ref{eq:app:9}, the result $(ii)$ then follows. Thus, we complete the proof.
\end{proof}

\subsection{Proof of Lemma~\ref{lemma:all_rounds}}
\label{sec:proof:lemma:all_rounds}

\begin{lemmanrestate}[\algor{} Guarantee]
Algorithm~\ref{algo:dist_walks} correctly solves the SSPPR-R query: with probability at least $1-p_f$, it returns $\hat{\pi}(s,u)$ satisfying $|\hat{\pi}(s,u)-\pi(s,u)| \le c\cdot \pi(s,u)$ for $\forall u \in V$ with $\pi(s,u)\ge \delta$.
\end{lemmanrestate}
\begin{proof}
We first show that our final estimator of \algor{} fits the relative error requirement of SSPPR-R. In each round $r$, the algorithm succeeds with probability at least $1-p_r$ as guaranteed by \algorr{}, and the relative error is incurred by at most $c_r$ factor. As such, the total relative error across all $R$ rounds accumulates at most by $\prod_{r=0}^R (1-c_r)$ and $\prod_{r=0}^R (1+c_r)$, respectively. Recall that we set $c_r\leq\frac{c}{2R}$ in each round $r$, we can then bound the factor with: 
\begin{align}
&\quad \begin{matrix}
    \prod_{r=0}^R (1-c_r)
\end{matrix} \\
 (\text{By } x\geq\ln(1+x))&\geq \begin{matrix}
    1 + \sum_{r=0}^R \ln(1-c_r)
 \end{matrix} \\
 (\text{By }\ln(1-x) \geq -2x)&\geq \begin{matrix}
      1 - 2\sum_{r=0}^R c_r \geq\; 1-c.
 \end{matrix}
\end{align}
Similarly, we have $\prod_{r=0}^R (1+c_r)\leq1+c$. Meanwhile,  the total failure probability over all rounds is bounded as $\sum_{r=1}^{R} p_r = p_f/2 < p_f$ with setting $p_r=\frac{p_f}{2n}$ in each round by union bound. Thus, any node $u\in V$ with PPR value $\pi_G(s,u)\geq\delta$ are detected and estimated to satisfy $$|\pi_G(s,u)-\hat{\pi}_G(s,u)|\leq c\cdot\pi(s,u)$$ with probability at least $1-p_f$. Next, we show that $\algor$ can output all nodes surpassing the desired threshold $\delta$. Recall that by Lemma~\ref{lemma:one_round}, the last round $R$ has estimated all nodes with PPR values larger than $\delta_R\cdot\texttt{SUM}(\mathbf{\hat{S}_R})$. Thus, we then proceed to show that $\texttt{SUM}(\mathbf{\hat{S}_R})\leq\frac{\delta}{\delta_R}$. 
By the inequality in Lemma~\ref{lemma:newdecom}, we can first conclude that with relative error guarantee satisfied, it holds
\begin{align}
\texttt{SUM}(\mathbf{\hat{S}_R})&=\begin{matrix}
\sum_{u \in \boundary{X_R}}  \frac{1}{\alpha}{d_{out}^{G_X}(u)\hat{\pi}_G(s,u)}/{d_{out}^{G}(u)}
\end{matrix}\\
\text{ (By relative guarantee)}&\leq \begin{matrix}
\sum_{u \in \boundary{X_R}} \frac{1+c}{\alpha}{d_{out}^{G_X}(u){\pi}_G(s,u)}/{d_{out}^{G}(u)}
\end{matrix}\\
\text{ (By Inequality~(\ref{eq:decom_bound}))}&\leq\begin{matrix}
    \sum_{t\in X_R}\frac{1+c}{\alpha} \pi_G(s,t).
\end{matrix} \label{eq:app:12}
\end{align}
Then, we only need to show that \begin{align}
\label{eq:app:13}
\begin{matrix}
\sum_{t\in X_R}\pi_G(s,t)\leq\frac{\alpha}{1+c}\cdot\frac{\delta}{\delta_R}.
\end{matrix}
\end{align}
We will prove this by an induction on $\sum_{t\in X_r}\pi_G(s,t)\leq\frac{1}{\Delta^{r-1}}$ for each round $r\in\{1,2,...,R\}$. When $r=1$, $\sum_{t\in X_r}\pi_G(s,t)\leq 1=\frac{1}{\Delta^0}$ naturally holds. Suppose it holds for $r-1$ round with $\sum_{t\in X_{r-1}}\pi_G(s,t)\leq\frac{1}{\Delta^{r-2}}$. Then in round $r$, as all nodes with PPR value larger than $\delta_r\cdot\texttt{SUM}(\mathbf{\hat{S}_r})$ have been estimated, the remaining nodes has PPR maximum values of 
\begin{align}
&\quad\delta_r\cdot\texttt{SUM}(\mathbf{\hat{S}_r})
\\
\text{ (by Inequality (\ref{eq:app:12}))}&\begin{matrix}
\leq \delta_r\cdot\sum_{t\in X_r}\frac{1+c}{\alpha} \pi_G(s,t)
\end{matrix}\\
& \begin{matrix}
\leq \delta_r \cdot \frac{1+c}{\alpha}\cdot\frac{1}{\Delta^{r-1}}
\end{matrix} \\
& = \begin{matrix}
\frac{\alpha}{(1+c)n\Delta} \cdot \frac{1+c}{\alpha}\cdot\frac{1}{\Delta^{r-1}}
\end{matrix} \\
& = \begin{matrix}
\frac{1}{n}\cdot\frac{1}{\Delta^r}.
\end{matrix}
\end{align}
Therefore, the sum of all these PPR scores is less than $\frac{1}{\Delta^r}$ and we complete the induction. When round $r=R=\log_\Delta(\frac{2\delta_0}{\alpha\delta})=\log_\Delta(\frac{2\delta_R}{\alpha\delta})\geq  \log_\Delta(\frac{\alpha}{1+c}\frac{\delta_R}{\delta})$, Equation~(\ref{eq:app:13}) directly holds and we complete the proof.
\end{proof}

\section{Proof Detail in SSPPR-R Lower Bound}
\subsection{Proof of Lemma \ref{lemma:PPR_R_const}} \label{sec:proof:PPR_R_const}

\begin{lemmanrestate}[Existence of Hard-Instance Graph Family]
Choose any constant \(c\in (0,\frac{1}{2}]\) and any functions \(\delta_0(n_0)\in (0,1), m_0(n_0)\in\Omega(n_0)\cap \bigo(n^{2})\). For sufficiently large $n_0$, there exist parameters \(n, D, d, r\)  (as function of \(n_0\)), such that for $\forall \mathrm{U}$ in $\mathcal{U}(n,D,d)$, its $\mathrm{U}(r)$ holds:%the following conditions hold for its $r-$padded instance: 
\begin{enumerate}[label=(\roman*).]
\item  The node, edge count of \(\ \mathrm{U}(r)\) is $\bigo(n_0)$ and $\bigo(m_0)$;
\item  The edge count is in $\Theta(n(r+D)) =\Theta\left(\min(m_0, {\log(\frac{1}{\delta_0})}/{\delta_0})\right)$, and the node count is in $\Theta(r+n) = \bigo(n)$. Additionally, \(D\geq \frac{3}{2}d\) and \(d\leq \log(n)\); 
\item For any $x_1,x_2\in X_1,X_2$, the PPR scores $\pi(s,x_1), \pi(s,x_2)$ are fixed values, irrelevant to the choice of nodes or the graph instance. It holds that $\pi(s,x_1), \pi(s,x_2)\geq \delta_0, \pi(s,x_1) > \frac{1}{(1-c)^2}\cdot \pi(s,x_2).$ 
\end{enumerate}
\end{lemmanrestate}

Before the proof of this lemma, let us calculate the basic property of a $r-$padded instance $\mathrm{U}(r)$, which is formalized as the following lemma:

\begin{lemma}
\label{lemma:PPR_Cal}
For any instance \(\mathrm{U} \in \mathcal{U}(n, D, d)\), the following properties hold:
\begin{enumerate}[label=(\roman*).]
    \item The graph \(\mathrm{U}\) contains \(4n + 1\) nodes and \(2n(2D - d) + n\) edges;
    \item For each node \(x_1 \in X_1\) and \(x_2 \in X_2\), the Personalized PageRank values with respect to the source node \(s\) are
    \(\pi(s,x_1) = (1 - \alpha)^2 \cdot \frac{D}{(2D - d)n}\) and 
    \(\pi(s,x_2) = (1 - \alpha)^2 \cdot \frac{D - d}{(2D - d)n}\).
\end{enumerate}

Moreover, for the \(r\)-padded instance \(\mathrm{U}(r)\) with \(r \geq D\), we have:
\begin{enumerate}[label=(\roman*).]
    \item The graph \(\mathrm{U}(r)\) contains \(4n + 2r + 1\) nodes and \(2n(2D - d + 2r) + n\) edges;
    \item For each node \(x_1 \in X_1\) and \(x_2 \in X_2\), the PPR values are
    \(\pi(s,x_1) = (1 - \alpha)^2 \cdot \frac{D}{(2D - d + r)n}\) and 
    \(\pi(s,x_2) = (1 - \alpha)^2 \cdot \frac{D - d}{(2D - d + r)n}\), 
    and in particular, 
    \(\pi(s,x_2) = \frac{D - d}{D} \cdot \pi(s,x_1)\).
\end{enumerate}
\end{lemma}

\begin{proof}
The number of nodes and edges in both instances $\mathrm{U}$ and $\mathrm{U}(r)$ follows directly from their definitions. We now compute the PPR values from the source node $s$ on the instance $\mathrm{U}$. First, since each node $y \in Y_1$ receives one incoming edge from $s$, and there are $n$ such nodes, it holds that $\pi(s,y) = \frac{1 - \alpha}{n}$. Next, consider nodes $x_1 \in X_1$ and $x_2 \in X_2$. Each node in $X_1$ receives $D$ in-edges from $Y_1$, and each node in $X_2$ receives $D - d$ in-edges from $Y_1$. Since every $y \in Y_1$ has out-degree $2D - d$, the probability of transitioning from a given $y$ to any of its children is $\frac{1}{2D - d}$. By aggregating over all paths from $s$ to $x_1$ via $Y_1$, we obtain
$\pi(s,x_1) = (1 - \alpha)^2 \cdot \frac{D}{(2D - d)n}$ and
$\pi(s,x_2) = (1 - \alpha)^2 \cdot \frac{D - d}{(2D - d)n}$. The difference is then
$\pi(s,x_1) - \pi(s,x_2) = (1 - \alpha)^2 \cdot \frac{d}{(2D - d)n}$. For the padded instance $\mathrm{U}(r)$, the analysis is similar. Each node in $Y$ now has out-degree $2D - d + r$ due to the extra outgoing edges to $Z_Y$. Thus, the transition probabilities are updated accordingly, and we obtain
$\pi(s,x_1) = (1 - \alpha)^2 \cdot \frac{D}{(2D - d + r)n}$ and
$\pi(s,x_2) = (1 - \alpha)^2 \cdot \frac{D - d}{(2D - d + r)n}$. In particular, this yields the ratio $\pi(s,x_2) = \frac{D - d}{D} \cdot \pi(s,x_1)$. This completes the proof.
\end{proof}

Now we can carefully choose the required $n, D, d, r$ for Lemma \ref{lemma:PPR_R_const}.
\begin{proof}
    We consider two cases with an overall setting of $\delta$ such that 
$\frac{\log\frac{1}{\delta}}{\delta} = \min(c m_0, \frac{\log\frac{1}{\delta_0}}{\delta_0}).$
\paragraph{Case 1:} $
n_0 \leq \frac{1}{32} (1-\alpha)^2 \frac{1}{\delta}.$ And we here set $n = n_0, \quad d = \lfloor \log(n_0)\rfloor, \quad D = \lfloor\left(1 + \frac{1}{4c}\right) d \rfloor+ 1$ and $ r =  1+\lfloor \frac{(1-\alpha)^2d}{16 c n_0 \delta}\rfloor > 0.$
Under these parameters, conditions (i), and $d<log(n), D>3/2d$ in condition $(ii)$ naturally hold when $n$ is sufficiently large. Moreover for the evaluation of $n(r+D)$ $(ii)$, we have:
\begin{align}
n (r+D) \in \Omega\left((1-\alpha)^2 d/\delta c\right) = \Omega\left( \min\left(m_0, \frac{\log\frac{1}{\delta_0}}{\delta_0}\right)\right).
\end{align}

For the bound of $r$, we have $r \in \bigo(\log(n_0)/(\delta n_0)) \subset \bigo(cm_0/n_0) \subset \bigo(n_0)$. By Lemma~\ref{lemma:PPR_Cal}, the PPR value for any $x_2 \in X_2$ satisfies
\begin{align}
\pi(s,x_2) = (1-\alpha)^2 \frac{D - d}{(2D - d + r) n_0} 
\geq \frac{(1-\alpha)^2}{(4c + 2) n_0 + \frac{(1-\alpha)^2}{4 \delta}} \geq \delta \geq \delta_0.
\end{align}
\paragraph{Case 2:}  $n_0 > \frac{1}{32} (1-\alpha)^2 \frac{1}{\delta}.$ Under this setting, we set $r = 0$ since 
no padded instance is required here. Further, we give $n = \lfloor \frac{1}{32} (1-\alpha)^2 \frac{1}{\delta}\rfloor$ and $ D = 1+\lfloor\left(1 + \frac{1}{4c}\right) d\rfloor.$ Similarly, for sufficiently large $n$, conditions (i), (ii), and (iii) hold. Additionally, it holds that 
\begin{align}
n D &= n d \left(1 + \frac{1}{4c}\right)
\\ &= \Omega\left(\frac{\log(\frac{1}{\delta})}{\delta} \right)
\\&=\Omega\left( \min\left(m_0, \frac{\log\frac{1}{\delta_0}}{\delta_0}\right)\right),\\
&\text{and} \nonumber\\
\pi(s,x_2) &= \frac{(1-\alpha)^2}{n (4c + 2)} \geq \delta \geq \delta_0.
\end{align}
These ensure that condition $(ii)$ and condition $(iii)$ hold and thus we complete the proof. 
\end{proof}

\subsection{Proof of Lemma \ref{thm:algo_adapt_tot}}\label{sec:proof:thm:algo_adapt_tot}. We begin with the lifting part. To allow for generalized algorithms designed for graphs with parallel edges, we first prove that an instance from our distribution has bounded multiplicity w.h.p:

\begin{lemma}\label{lemma:paraBoundU}
 \(\pr_{\mathrm{U}(r)\sim \Sigma(n, D,d)}[ \mathrm{MUL}(\mathrm{U}(r)) \geq  L] \leq 4n^2 (\frac{4D}{n})^{L}\), for \(\forall L\leq D\).
\end{lemma}

\begin{proof}
Fix an arbitrary realization of \(\left(\mathrm{SP},\, \mathrm{P}^{2D-d}_1,\dots, \mathrm{P}^{2D-d}_{2n},\, \mathrm{P}^{n(2D-d)}_{Y_1},\, \mathrm{P}^{n(2D-d)}_{Y_2}\right)\), i.e., the split of \(X\) and all in-/out-edge permutations chosen in Steps~(i)–(iii) as described in Section~\ref{sec:distribution}. These choices determine the following graph information:
\begin{enumerate}[label=(\roman*).]
    \item The set of labels for out-edges starting from nodes in \(Y_1\) and pointing to nodes in \(X_1\);
    \item For every \(x_i \in X_1\), the set of labels of its in-edges originating from \(Y_1\).
\end{enumerate}

The only remaining randomness that may affect parallel edge multiplicity arises from the uniform bijection \(\mathrm{S}(Y_1, X_1)\). As described in Step~(iv), this bijection maps the set of all out-edge labels from nodes in \(Y_1\) directed toward \(X_1\) to the set of all in-edge labels of nodes in \(X_1\) originating from \(Y_1\), thereby determining the actual edge connections. Fix a pair of nodes \(x_i \in X_1\) and \(y_j \in Y_1\) (an analogous argument applies to the cases \((X_1, Y_2)\), \((X_2, Y_1)\), and \((X_2, Y_2)\)). Let \(\mathrm{LABEL}_{\text{out}}(y_j, X_1)\) denote the set of at most \(2D - d\) out-edge labels of \(y_j\) that are directed toward \(X_1\), and let \(\mathrm{LABEL}_{\text{in}}(Y_1, x_i)\) denote the set of \(D\) in-edge labels of \(x_i\) originating from nodes in \(Y_1\). We also define \(\mathrm{LABEL}_{\text{out}}(Y_1, X_1)\) as the complete set of out-edge labels from all nodes in \(Y_1\) targeting \(X_1\), where each label includes both the source node \(y_{j'}^{(1)}\) and its local out-edge label. Since \(\mathrm{S}(Y_1, X_1)\) is a uniformly random bijection over two sets of size \(nD\), for any given \(L\)-subset of labels in \(\mathrm{LABEL}_{\text{out}}(y_j, X_1)\) (i.e., any collection of \(L\) out-edges from node \(y_j\)), its image under \(\mathrm{S}^{-1}(Y_1, X_1)\) is distributed uniformly over all possible \(L\)-subsets of \(\mathrm{LABEL}_{\text{out}}(Y_1, X_1)\). Therefore, for any particular \(L\)-subset \(F\subset \mathrm{LABEL}_{in}(Y_1, x_i)\), we have that 
\begin{align}
  \pr\bigl[\mathrm{S}^{-1}(Y_1, X_1)(F)\subseteq \mathrm{LABEL}_{out}(y_j, X_1)\bigr]
  \;=\;
  \frac{\binom{D}{L}}{\binom{nD}{L}}.
\end{align}

There are at most \(\binom{2D - d}{L}\) ways to choose such an \(F\). A union bound therefore gives  
$$\pr\bigl[|\{\text{edges between } x_i, y_j \}|\ge L \,\bigm|\, \mathrm{P}^{2D-d}_1,\dots, \mathrm{P}^{2D-d}_{2n},\, \mathrm{P}^{n(2D-d)}_{Y_1},\, \mathrm{P}^{n(2D-d)}_{Y_2}\bigr] \le \binom{2D}{L} \cdot \frac{\binom{D}{L}}{\binom{nD}{L}}.$$
When \(n \ge 4\) (so that \(nD \ge 4D \ge 2L\)), we have the bounds \(\binom{D}{L} \le \frac{D^{L}}{L!}\), \(\binom{2D}{L} \le \frac{(2D)^{L}}{L!}\), and \(\binom{nD}{L} \ge \frac{(nD/2)^{L}}{L!}\). These imply
\(\binom{2D}{L} \cdot \frac{\binom{D}{L}}{\binom{nD}{L}} \le \left(\frac{2D}{n}\right)^L\). The same inequality holds for all other combinations of \((x_i, y_j)\), including \(x_i \in X_2\) or \(y_j \in Y_2\). Since there are only \(2n\) nodes in \(X\) and \(2n\) nodes in \(Y\), we conclude that
\begin{align}
\pr_{\mathrm{U}(r) \sim \Sigma(n, D, d)}\left[\mathrm{MUL}(\mathrm{U}) \geq L\right] \le 4n^2 \left(\frac{4D}{n}\right)^L,
\end{align}
which completes the proof.

\end{proof}

Following Lemma~\ref{lemma:paraBoundU}, the presence of parallel edges in our distribution is inconsequential, provided that we allow the algorithm to fail on high-multiplicity instances. Under this allowance, any algorithm designed for simple graphs can be generalized to our constructed graph family. This is achieved by ``blowing up’’ each node into \(L\)-parallel copies.

\begin{lemma}
[Handling Parallel Edge]\label{thm:lift_algo}
Choose constant parallel multiplicity bound $L$, decay factor $\alpha$, error parameter $c$, failure probability bound $p$, and query complexity bound $T(n)$. Suppose there exists an algorithm $\algo$ that for any simple directed graph (with no dangling nodes) with $\bigo(n)$ nodes and $ \bigo(m)$ edges, it solves SSPPR-R problem with decay factor $(1-(1-\alpha)^{1/3})$ and threshold $\delta(1-(1-\alpha)^{1/3})/\alpha$ with probability $1-p$ in expected query complexity $\bigo(T)$ within the arc-centric model, then there exists an algorithm $\mathcal{A}_L$ that solves SSPPR-R problem for any directed multigraph with edge multiplicity bounded by $L$ (with no dangling nodes), containing $\bigo(n)$ nodes, and $ \bigo(m)$ edges. Specifically, $\mathcal{A}_L$ solves SSPPR-R with decay factor $\alpha$ and threshold $\delta$ with probability $1-p$ and expected query complexity $\bigo(T+n)$ in the arc-centric model.
\end{lemma}

For generality, we we fix an arbitrary multigraph $\mathrm{G}$ (in our application, \(\mathrm{G} = \mathrm{U}(r)\)). Our proof begins by considering how an algorithm \(\mathcal{A}\) can operate on a multigraph $\mathrm{G}$, whose edge multiplicity is bounded by \(L\). Intuitively, we construct an \emph{\(L\)-lift} (or a lift when there is no confusion in $L$) of \(\mathrm{G}\) by creating \(L\) parallel copies of every node and edge, thereby transforming \(\mathrm{G}\) into a simple graph. This lifted graph allows \(\mathcal{A}\) to run as if it were operating on a simple graph, while its queries can be simulated through queries on the original multigraph \(\mathrm{G}\). However, this construction presents subtle challenges: since we are restricted to local queries on \(\mathrm{G}\), we might not even know the number of parallel edges between two already explored nodes. To address this, we introduce independent randomness at each local structure to probabilistically simulate a consistent global lift. 
\mypara{Lift of graph.}
\begin{figure}[!t]
\includegraphics[width=0.78\textwidth]{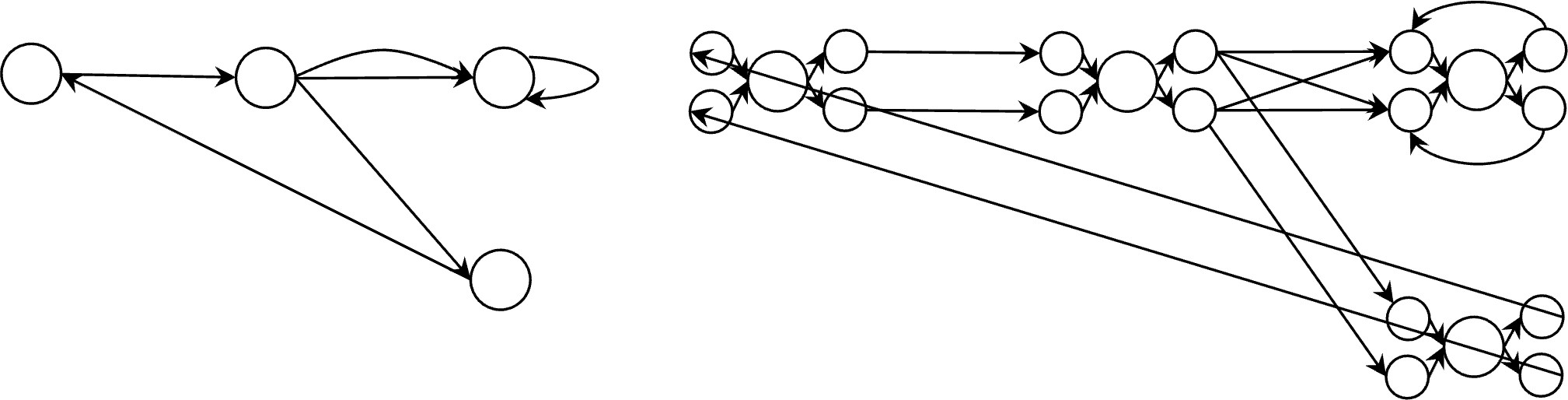}
\centering
\caption{Illustration of a \(2\)-Lift.} 
\label{fig:multiple_edges}
%\vspace{-0.5ex}
\end{figure}
We now formalize this idea by describing the \emph{\(L\)-lift} of a multigraph, which encompasses the family of lifted simple graphs on which \(\mathcal{A}\) will be executed, and which can be generated based on local information. Let $\mathrm{G}=(V_{\mathrm{G}},E_{\mathrm{G}})$ be a multigraph with
$|V_{\mathrm{G}}|\leq n$, $|E_{\mathrm{G}}|\leq m$, and edge multiplicity $\mathrm{MUL}(V_{\mathrm{G}})\leq L$, We then describe a \(L\)-lift of \(\mathrm{G}, \mathcal{L}_{\rho}(\mathrm{G}) = (\widehat{V}_U, \widehat{E}_U)\) with new node set $\widehat{V}_\mathrm{G}$, edge set $\widehat{E}_\mathrm{G})$ and \(\rho = (\rho_{(w,v)})_{(w,v)}\) is a sequence of attributes, which we will define later.
Concerning $\widehat{V}_\mathrm{G}$, we incorporate $2L$ extra nodes for every $v\in V_{\mathrm{G}}$, : \(  v^{\mathrm{in}} _1,\dots ,v^{\mathrm{in}} _L\) ,  \(v^{\mathrm{out}}_1,\dots ,v^{\mathrm{out}}_L\), which results in $\widehat{V}_{\mathrm{G}} \;=\; V \;\cup\; \{v^{\mathrm{in}} _1,\dots ,v^{\mathrm{in}} _L, v^{\mathrm{out}}_1,\dots , v^{\mathrm{out}}_L\mid \forall v \in V_{\mathrm{G}}\}.$ For $\widehat{E}_\mathrm{G}$, We further construct it by two parts:
\begin{enumerate}[label=(\roman*).]
    \item \emph{New edges}. Add the edge $(v,\,v^{\mathrm{out}}_i)$ and $(v^{\mathrm{in}}_i,\,v)$ and $(v^{\mathrm{in}}_i,\,v)$  for every $v\in V$ and $i\in\{1,\dots,L\}$.
    \item \emph{Copy of original edges}. For any node pair $(w,v)$, and choose an arbitrary injection $\rho_{(w,v)}:\{\text{edges from \(w\) to \(v\)}\}\to\{1,\dots ,L\}$, for each edge \(e\) add below edges to \(\widehat{E}_\mathrm{G}\) for $i=1,\dots,L$:
    $$ \bigl(w^{\mathrm{out}}_{i},\;
        v^{\mathrm{in}}_{(i+\rho_{(w,v)}[e]-1)\bmod L}\bigr).$$
\end{enumerate}
Later, we use \(\mathrm{INJ}_L(w,v)\) to denote all possible \(\rho_{(v,w)}\), and define \(\mathrm{INJ}_L(\mathrm{G}) := \prod_{w,v \in V_{\mathrm{G}}} \mathrm{INJ}_L(w,v)\) as the product over all node pairs in \(\mathrm{G}\). Since by construction there are at most \(L\) edges from \(w\) to \(v\), the lifted graph \(\mathcal{L}_{\rho}(\mathrm{G})\) is guaranteed to be a \emph{simple} graph—i.e., one without parallel edges. 

This construction accounts for the magic number like $1-(1-\alpha)^{1/3}/\alpha$ in our target Lemma. Specifically, there is a correspondence of PPR scores between $\mathrm{G}$ and $\mathcal{L}_{\rho}(\mathrm{G}) =  (\widehat{V}_\mathrm{G}, \widehat{E}_\mathrm{G})$:\begin{lemma}\label{lemma:PPR_lift}
For each multigraph $\mathrm{G}$ with no dangling nodes and parallel multiplicity at most $L$, decay factor $\alpha$, and 
$\rho = (\rho_{(w,v)}) \in \mathrm{INJ}_L(\mathrm{G})$, let $\mathcal{L}_{\rho}(\mathrm{G})$ denote the corresponding  lift. 
Then $\mathcal{L}_{\rho}(\mathrm{G})$ satisfies:
\begin{enumerate*}[label=(\roman*)]
\item $|\widehat{V}_G| = (2L+1)\,|V_G|$ and $|\widehat{E}_\mathrm{G}| = L\,|E_G| + 2L\,|V_G|$;
\item For each $u \in V_G$, its out degree is the same as the out degree of $u^{out}_i$;
\item for every pair of nodes $s,t$ in $\mathrm{G}$, their personalized PageRank (PPR) scores 
$\pi_{(\mathrm{G},\alpha)}(s,t)$ in $\mathrm{G}$ with decay factor $\alpha$ and 
$\pi_{(\mathcal{L}_{\rho}(\mathrm{G}),\,1-(1-\alpha)^{1/3})}(s,t)$ in $\mathcal{L}_{\rho}(\mathrm{G})$ with decay factor $1-(1-\alpha)^{1/3}$ satisfy:
$\frac{\pi_{(\mathcal{L}_{\rho}(\mathrm{G}),\,1-(1-\alpha)^{1/3})}(s,t)}{1-(1-\alpha)^{1/3}}
=
\frac{\pi_{(\mathrm{G},\alpha)}(s,t)}{\alpha}.
$\end{enumerate*}\end{lemma}
\begin{proof}
Property (i) and (ii) follow directly from the definition of the lift. For (iii), recall the path decomposition: $\pi(s,t)
= \text{Decay Factor}\times \sum_{\overleftrightarrow{p} \in \overleftrightarrow{p_H}(s,t)}
w(\overleftrightarrow{p}),$ so that $\pi(s,t) / \text{Decay Factor}$ is exactly the total probability
of a decayed random walk going through $\overleftrightarrow{p}$.

Fix an edge $(v,u)$ in $\mathrm{G}$. In $\mathcal{L}_{\rho}(\mathrm{G})$ this edge is
replaced by $L$ different length-$3$ paths
$v \to v^{\mathrm{out}}_i \to u^{\mathrm{in}}_{\ast} \to u
\quad (i = 1,\dots,L).$ Under decay factor $1 - (1-\alpha)^{1/3}$, the probability of
traversing such a path is $\frac{1-\alpha}{L\,d_{\mathrm{out}}(v)}.$ Summing over the $L$ gadgets gives total probability $\frac{1-\alpha}{d_{\mathrm{out}}(v)}$,which is exactly the per-edge factor for $(v,u)$ in $\mathrm{G}$ under decay factor $\alpha$. Thus, for any path $p$ in $\mathrm{G}$, the total probability (without
teleport) of all lifted paths in $\mathcal{L}_{\rho}(\mathrm{G})$ that
project to $p$ is equal to the probability of $p$ in $\mathrm{G}$ as claimed.
\end{proof}

\mypara{Construction of the algorithm.}
Now that we have defined \(L\)-lift of a graph, then we can dive into the details of $\mathcal{A}_L$. In this part, we will preprocess $\mathcal{A}$ for a simple choice of label, define the inner state of $\mathcal{A}_L$ and then explain $\mathcal{A}_L$'s action corresponding to each action of $\mathcal{A}$.

We now construct $\mathcal{A}_L$ and describe how it simulates $\mathcal{A}$. To simplify the handling of edge labels, we may without loss of generality assume that whenever $\mathcal{A}$ executes a query $\query{ADJ}$ at a node $v$, it chooses an uncovered in- or out-edge of $v$ uniformly at random.
Indeed, $\mathcal{A}$ is required to work for every labeling of the edges, with uniform bounds on complexity and failure probability. Randomizing the labels and then forcing $\mathcal{A}$ to pick uniform uncovered edges simply means that these guarantees hold on average over the choice of labels. This is also the default convention in previous work on unlabeled graphs
(e.g.,~\cite{bressan2023sublinear,wang2024revisiting}).

Because $\mathcal{A}_L$ only has oracle access to the original multigraph 
$\mathrm{G}$, it must maintain both (i) a record of all information revealed 
so far about $\mathrm{G}$ and (ii) a partial description of the lift 
parameters $\rho$. Concretely, $\mathcal{A}_L$ runs $\mathcal{A}$ as an 
inner algorithm and maintains an internal state:\(< \mathrm{\mathrm{G}}_0,  \rho^0>\). We next describe the two components $\mathrm{\mathrm{G}}_0$ and $\rho^0$.
\begin{enumerate}[label=(\roman*).]
  \item $\mathrm{G}_0$ is the covered subgraph of $\mathrm{G}$ revealed by 
        the queries that $\mathcal{A}_L$ has executed on $\mathrm{G}$, 
        together with degree information. It contains all nodes of 
        $\mathrm{G}$, the degrees of all nodes whose degree has been queried,
        and all covered edges with their original labels. For example, if the 
        first query $\query{ADJ}_{\mathrm{in}}(x_i,k)$ returns $(y_j,k')$, 
        then we add an edge between $x_i$ and $y_j$ to $\mathrm{G}_0$ and 
        record that this edge is the $k$th in-edge of $x_i$ and the $k'$th 
        out-edge of $y_j$, even though there is only one edge between $x_i$ 
        and $y_j$ in $\mathrm{G}_0$. Specifically, $\mathrm{G_0}$ may contain only part of an edge's label (e.g. $e = (u,v)$as $k^{th}$ out-edge of $u$ but does not know its index at $v$).
  \item $\rho^{0} = (\rho^{0}_{(w,v)})_{(w,v)}$ is an element of 
        $\mathrm{INJ}_L(\mathrm{G}_0)$. For each ordered pair $(w,v)$ in 
        $\mathrm{G}$ it is an injection from$\{\text{covered edges from $w$ to $v$}\}
        \quad\text{to}\quad
        \{1,\dots,L\}.$
\end{enumerate}

To simplify notation, we claim that an \(\rho \in \mathrm{INJ}(\mathrm{G})\) is over \( \rho^{0}\) if and only if \(\mathcal{L}_{\rho_0}(\mathrm{\mathrm{G}_0}) \subset \mathcal{L}_{\rho}(\mathrm{G})\); Equivalently, this means \(\rho_{(w,v)}(e) = \rho^{0}_{(w,v)}(e)\), for arbitrary node pairs \(w, v \) in \(\mathrm{G}\) and \(e \) from \(w\) to \(v\) in \(\mathrm{\mathrm{G}}_0\). Intuitively, this means that the inner \(\mathcal{A}\) knows it is possible on \( \mathcal{L}_{\rho}(\mathrm{G})\). With this information stored, \(\mathcal{A}_L\) is now able to simulate \(\mathcal{A}\). Specifically, when \(\mathcal{A}\) tries to execute a query \(Q\), \(\mathcal{A}_L\) processes it according to the strategies outlined below:
\begin{enumerate}[label=(\roman*).]
    \item Whenever \(\mathcal{A}_L\) discovers a new node in via some query on $\mathrm{G}$, it immediately issues degree queries for this node and records the answers  to $\mathrm{G}_0$.
    \item \(\mathcal{A}_L\) needs not to execute any query, and can give \(\mathcal{A}\) with an inferred return, because the return value of \( Q\) is identical on \(\mathcal{L}_{\sigma}(\mathcal{\mathrm{G}})\) for all \(\mathrm{\mathrm{G}},\rho\) over \(<\mathrm{\mathrm{G}}_0, \rho^{0}>\).  This means that no matter how we modify \(\rho^{0}\) later, the returning value of that query is the same, and then \(\mathcal{A}_L\) need not do actual queries. The following queries fall into this case:
    \begin{itemize}
        \item \(Q=\query{INDEG}\) or \(\query{OUTDEG}\). This is because whatever \(\rho\), the degree is fixed for each node, the copied nodes inherit those degrees as in definition, and \(\mathcal{A}_L\) stores all the degree queries for covered nodes ;
        \item \(Q\) is an \(\query{ADJ}_{in/out}\) query with an index \(k\) exceeding the corresponding degree; its answer is likewise predetermined;
        \item \(Q = \query{ADJ}_{in/out}(v, k)\). The answer is simply the node \(v^{in/out}_k\) or "None";
        \item  \(Q=\query{ADJ}_{out}(w_i^{out}, k)\) (the case \(Q=\query{ADJ}_{in}(w_{i}^{in}, k)\) is analogous), where the label "\(k^{th}\) out-edge" of \(w\) (or \(k^{th}\) in-edge of \(w\) ) corresponds to an edge \(e\) in \(\mathrm{\mathrm{G}}_0\) with known label at $w$. Denote the other node of that edge \(e\) is \(v\) and \(e\) is \(v'\)s \(k^{'th}\)  in-edge. Following the construction of lift, we have the target label \(j = ((i - 1 + \rho^0_{w,v}(e)) \bmod L) + 1\), then the inferred answer is \(v^{in}_j\).
    \end{itemize}
    \item \(Q=\query{JUMP}\)(): just execute \(\query{JUMP}\)() on \(\mathrm{G}\) and return \(w\). Then we choose an answer for \(\mathcal{A}\) uniformly random in \(\{w,w^{in}_1,\cdots,w^{in}_L,w^{out}_1,w^{out}_L\}\).
    \item \(Q=\query{ADJ}_{out}(w_i^{out}, k)\) or \(\query{ADJ}_{in}(w_i^{in}, k)\), where the label "\(k^{th}\) out-edge" of \(w\) (or \(k^{th}\) in-edge of \(w\) ) has not appear in \(\mathrm{\mathrm{G}}_0\). In this case, we can execute query \(\query{ADJ}_{out}(w, k)\) and then it turns to Case \((i)\) in the above. 

\end{enumerate} 
Up to now, we finish the construction of \(\mathcal{A}_L\) and then we will reveal next its properties.

\mypara{Properties of the algorithm.}
Now it is time to verify the property of our constructed algorithm \(\mathcal{A}_L\).  The requirements in the theorem can be specified as \begin{enumerate*}[label=(\roman*).]
    \item $\mathcal{A}_L$ can be applied on multigraphs like $\mathrm{G}$;
\item $\mathcal{A}_L$ has expected query complexity $\bigo(T+n)$ in arc-centric model;
\item $\mathcal{A}_L$ can output the PPR scores with relative error guarantee, and has failure probability bounded by $1-p$.
\end{enumerate*}
Requirement $(i)$ follows directly from the construction. For requirements $(ii)/(iii)$, this is intuitively because $\mathcal{A}$ satisfies these property on each $\mathcal{L}_{\rho}(\mathrm{G})$ due to Lemma \ref{lemma:PPR_lift}, and $\mathcal{A}_L$ works as if $\mathcal{A}$ is applied on a $\mathcal{L}_{\rho}(\mathrm{G})$ for a uniformly random sample of $\rho = (\rho_{(w,v)}) \in \mathrm{INJ}_L(\mathrm{G})$. 
We can formally define these two execution of $\mathcal{A}$:
\begin{enumerate}[label=(\roman*).]
\item Run $\mathcal{A}_L$ on the original graph $\mathrm{G}$, $\mathcal{A}$ is executed as an inner algorithm;
\item Sample a $\rho = (\rho_{(w,v)}) $ uniformly at random in $\mathrm{INJ}_L(\mathrm{G})$, and run $\mathcal{A}$ $\mathcal{L}_{\rho}(\mathrm{G})$.
\end{enumerate}
We now argue that these two executions generate the same joint distribution
over the transcript of $\mathcal{A}$ (its sequence of queries and answers).
We prove this by induction on the number of queries of $\mathcal{A}$,
viewing $\mathcal{A}$ itself as a randomized process whose next query may
depend on all previous answers.

The inductive hypothesis is that after $t$ queries the partial mapping
$\rho^0$ maintained by $\mathcal{A}_L$ has the same distribution as the
restriction of a uniformly random $\rho\in\mathrm{INJ}_L(\mathrm{G})$ to the
edges in $\mathrm{G}_0$. This is clearly true at $t=0$.

Consider the $(t+1)$-st query $Q$ of $\mathcal{A}$. If $Q$ is of a type for
which $\mathcal{A}_L$ answers without querying $\mathrm{G}$ (the cases listed
in (ii) above), then the answer is a deterministic function of 
$\langle \mathrm{G}_0,\rho^0\rangle$, and hence has the same distribution in
both executions. For degree queries or out-of-range adjacency queries, the
answer does not depend on $\rho$ at all.

If $Q$ is an adjacency query that reveals a new edge, then $\mathcal{A}_L$
chooses a fresh label in $\{1,\dots,L\}$ uniformly among those that have not
yet been used for the corresponding ordered pair $(w,v)$. This is exactly
how a uniformly random $\rho\in\mathrm{INJ}_L(\mathrm{G})$ behaves when
restricted to new edges: conditioned on the labels already assigned to
covered edges, the label of a new edge from $w$ to $v$ is uniform over the
remaining unused labels. Thus the distribution of $\rho^0$ after processing
$Q$ still matches the restriction of a random $\rho$, which maintains the
inductive invariant.

\mypara{Establishing Lemma \ref{thm:algo_adapt_tot}}
Together, Lemma~\ref{lemma:paraBoundU} and \ref{thm:lift_algo} yields a "lifted" algorithm  $\mathcal{A}_L$ on padded instances for $\mathrm{U} \in \mathcal{U}(n, D, d)$. Our goal requires us to restrict the $\mathcal{A}_L$ to $\mathrm{U}$. Intuitively, if we randomly permute the  in/out-edge labels in $\mathrm{U}(r)$, an $\query{ADJ}$ in $\mathrm{U}(r)$ can be translated to a query in $\mathrm{U}(r)$ with probability $(2D-d)/(r+2D-d)$. In this way, the query complexity is reduced by factor of $\bigo(D/(r+D))$ in our main lemma in this section.

\begin{lemmanrestate}
Fix parameters $0<d<D, r\in \bigo(n)$. Suppose any algorithm $\mathcal{A}_L$ solves SSPPR-R problem for any directed multigraph with edge multiplicity bounded by $L$ (with no dangling nodes), containing $\bigo(n)$ nodes,  $ n(D+r)$ edges, with decay factor $\alpha$ and threshold $\delta$, probability $1-p$ and expected query complexity $\bigo(T)$ in the arc-centric model. Then, there is an algorithm $\mathcal{A}^{res}_L$ on $\mathrm{U} \sim \Sigma(n, D, d)$ with average probability at least $1-2p$ solves SSPPR-R problem with at most $\bigo(n + TD/(p(r+D)))$ queries.
\end{lemmanrestate}
\begin{proof}
Without loss of generality, we may assume that whenever $\mathcal{A}_L$ executes a query $\query{ADJ}$ at a node $v$, it chooses an uncovered in- or out-edge of $v$ uniformly at random similarly.  Due to edge and node counting in Lemma \ref{lemma:PPR_Cal} and multiplicity bound in Lemma \ref{lemma:paraBoundU}, the hypothesis in our lemma is for $\mathcal{A}_L$'s performance on $\mathrm{U}(r), \mathrm{U}\sim \Sigma(n, D, d,r)$. 

Recall that we have not fixed a specific labeling scheme for \(\mathrm{U}(r)\) in earlier steps; instead, we simply assigned labels to the padding nodes \(Z_X\) and \(Z_Y\) as \(4n + 2,\dots,4n + 2r + 1\). The out-edges from each node \(y_i \in Y\) to \(Z_Y\) are assigned labels \(2D - d + 1,\dots,2D - d + r\), and the in-edges from \(Z_X\) to each \(x_i \in X\) are labeled analogously. Given this fixed labeling, all information regarding \(Z_X\) and \(Z_Y\) is known in advance from the perspective of \(\mathcal{A}_L\). Since \(\mathcal{A}_L\) operates only on \(\mathrm{U}(r)\), these edges can be hard-coded into the algorithm. We initialize a subgraph \(G_0\) internal to \(\mathcal{A}_L\) that stores all such predetermined edges in \(\mathrm{U}(r)\)—specifically, all the nodes \(s\), \(X\), \(Y\), \(Z_X\), and \(Z_Y\), along with the edges from \(Z_X\) to \(X\) and from \(Y\) to \(Z_Y\).

We then define the modified algorithm \(\mathcal{A}_L^{res}\) as follows. The core idea is to simulate \(\mathcal{A}_L\) on the original graph \(\mathrm{U}\) by internally emulating its behavior on the padded graph \(\mathrm{U}(r)\). Whenever \(\mathcal{A}_L\) issues a degree query or an \(\mathsf{Adj}\) query involving a node in the padding sets \(Z_X \cup Z_Y\), \(\mathcal{A}_L^{res}\) returns the predetermined response encoded in \(G_0\), which is consistent across all \(\mathrm{U}(r)\) constructed from some \(\mathrm{U} \in \mathcal{U}(n, D, d)\). In the case where \(\mathcal{A}_L\) issues a query of the form \(\mathsf{Adj}^{\mathrm{in}}(x_i, k)\), we proceed as follows: if \(k > 2D - d\), the query refers to a synthetic edge from \(Z_X\), so we return the \((k - 2D + d + 1)\)-st node in \(Z_X\); otherwise, we forward the actual query \(\mathsf{Adj}^{\mathrm{in}}(x_i, k)\) to the graph \(\mathrm{U}\), and add the resulting edge to the internal subgraph \(G_0\). In this way, \(\mathcal{A}_L^{res}\) simulates the interaction between \(\mathcal{A}_L\) and \(\mathrm{U}(r)\) using queries exclusively on \(\mathrm{U}\). Therefore, with success probability at least \(p\), \(\mathcal{A}_L^{res}\) correctly outputs the split \((X_1, X_2)\).

We now bound the query complexity of \(\mathcal{A}_L^{res}\). As assumed, the algorithm \(\mathcal{A}_L\) issues queries of the form \(\query{ADJ}_{\mathrm{out}/\mathrm{in}}(v, k)\) by selecting \(k\) uniformly at random from the un-queried incident edge labels. For any node \(v \in \mathrm{U}\) where fewer than \((D + r)/2\) incident labels have been examined, the probability that a new query returns a \emph{real} (non-padding) edge is at most \((2D - d)/(2D - d + r - (D + r)/2)\). Consequently, the expected number of genuine adjacency queries issued by \(\mathcal{A}_L^{res}\) can be bounded by the sum of two terms: the first is \((2D - d)\cdot \mathbb{E}[|\{v \in \mathrm{U} \mid \text{at least } (D + r)/2 \text{ incident labels examined}\}|]\); the second is \((2D - d)/(2D - d + r - (D + r)/2)\cdot o((D + r)n)\). Since both terms are asymptotically \(o(nD)\), the overall expected query complexity of \(\mathcal{A}_L^{res}\) is \(\bigo(TD/(r+D))\).

For the exact query complexity bound in our lemma, we can use Markov bound for an additional failure probability of $p/2$. With a $2/p$ additional multiplicity factor of the original cost.
\end{proof}

\begin{lemmanrestate}
Assume there is a SSPPR-R algorithm $\algo$ on simple directed graphs with expected query complexity $T = o(\min(m_0), \log(1/\delta_0)/\delta_0))$ with threshold $\delta$, error parameter $c$ and failure probability bound $p_f$. Then for any constant $\gamma$,  there are sufficiently large parameters $n, D, d$ along with an algorithm $\mathcal{A}_{ad}$. Using at most $\gamma nD$ arc-centric queries on each $\mathrm{U}\sim \Sigma(n, D, d)$, $\mathcal{A}_{ad}$ will output the ground truth split of $X = X_1 \sqcup X_2$ with average success probability at least $1-2p_f$.
\end{lemmanrestate}
\begin{proof}From Lemma \ref{lemma:PPR_R_const}, we choose sufficiently large $n, D, d, r$. The algorithm $\algo$ is adapted to $\mathcal{A}_L$ (Theorem \ref{thm:lift_algo}) and then to $\mathcal{A}^{res}_L$ on $\mathrm{U} \sim \Sigma(n, D, d)$. Condition (iv) of Lemma \ref{lemma:PPR_R_const} present that for $x_{i_1}\in X_1, x_{i_2}\in X_2$, $\pi(s, x_{i_1}), \pi(s, x_{i_2})$ are fixed values with $\pi(s, x_{i_1})> \pi(s, x_{i_2})/(1-c)^2$. Then, we add a post-processing step to $\mathcal{A}^{res}_L$: using its predicted PPR value $\hat\pi(s, x_i)$, we predict the split by assigning $x_i$ to $X_1$ if $\hat\pi(s, x_i) \geq \sqrt{\pi(s, x_{i_1}), \pi(s, x_{i_2})}$ (per  condition (iv) of Lemma \ref{lemma:PPR_R_const}). This composite algorithm is $\mathcal{A}_{ad}$. With average probability $1-2p_f$, $\mathcal{A}^{res}_L$ satisfies the SSPPR-R error guarantee. If it succeeds, for any $x_i \in X_1$, $\hat\pi(s, x_i) \geq (1-c) \pi(s, x_i) \geq \sqrt{\pi(s, x_{i_1}), \pi(s, x_{i_2})}$, ensuring correct assignment. Similarly, nodes in $X_2$ are assigned correctly. This completes the proof of correctively and now we can conclude that the complexity is $\bigo(n + TD/(p(r+D)))$. Given $T = o(\min(m_0), \log(1/\delta_0)/\delta_0))$, this is $o(n + \min(m_0, \log(1/\delta_0)/\delta_0) \cdot D/(r+D))$. By condition (ii) of Lemma \ref{lemma:PPR_R_const}, the second term is $o(nD)$. The total complexity is therefore $o(nD)$.
\end{proof}

\subsection{Proof of Lemma \ref{lemma:data_process}}\label{sec:proof:lemma:data_process}
\begin{lemmanrestate}[Formularize Success Probability]
For any algorithm $\mathcal{A}_{ad}$ using at most $T = \gamma nD$ arc-centric queries on an instance $\mathrm{U}\sim \Sigma(n, D, d)$, its expected probability of outputting the correct split of $X$ is bounded by: $\max_{e_1,\cdots, e_T}\max_{\mathrm{SP}}\pr_{\mathrm{U} \sim \Sigma(n, D,d)}[\mathrm{SP} \text{ is the correct split of} \;\;\mathrm{U} \mid e_1,\cdots, e_T].$
\end{lemmanrestate}
\begin{proof}
Let $\mathrm{U} \sim \Sigma(n,D,d)$ and let $\mathrm{SP}$ denote the (random) true split of $X$ induced by $\mathrm{U}$.  
Fix any algorithm $\mathcal{A}_{ad}$ that uses at most $T=\gamma nD$ arc–centric queries.  Naturally, we decompose \(\algo\) into two components: the first part, denoted \(\mathcal{A}_1\), is an \emph{exploration} procedure that performs queries on the graph; the second part, \(\mathcal{A}_2\), processes the information collected and outputs a split of \(X\) into subsets \(X_1\) and \(X_2\). The algorithm \(\algo\) succeeds if and only if \(\mathcal{A}_2\) outputs a correct split. Since we are working under the RAM model, we can assume that \(\mathcal{A}_1\) and \(\mathcal{A}_2\) operate independently—that is, \(\mathcal{A}_2\) is a separate algorithm whose input is the data written into memory by \(\mathcal{A}_1\), and whose output is the inferred split of \(X\).Given that RAM is finite (though arbitrarily large) and that the only inputs to \(\mathcal{A}_1\) are the query responses, which are part of the known edges. Thus, we may treat the final RAM state as a function of the known edges sequence $(e_1,\cdots, e_T)$, along with any internal randomness, where $e_t$ is the $t$-th revealed edge (we can pad with dummy symbols if the algorithm stops early).
The expected success probability of \(\algo\) can thus be formulated as:

\begin{align}
\nonumber
&\mathbb{E}_{\mathrm{U} \sim  \Sigma(n, D, d)}\bigl[\pr[\algo_{ad} \text{ succeeds on } U]\bigr]\leq \sum_{e_1, \cdots, e_T} \pr_{\algo, {\mathrm{U}\sim \Sigma(n, D, d)}}[e_1, \cdots, e_T] \\
&\times \max_{\mathrm{\mathrm{SP}} \in \binom{X}{n}} \pr_{\mathcal{U}\sim \Sigma(n, D, d)}[\mathrm{SP} \text{ is the correct split of }  \mathrm{U} \mid e_1, \cdots, e_T]\\
&\leq \max_{e_1,\cdots, e_T}\max_{\mathrm{SP}}\pr_{\mathrm{U} \sim \Sigma(n, D,d)}[\mathrm{SP} \text{ is the correct split of} \;\;\mathrm{U} \mid e_1,\cdots, e_T].
\end{align}
This achieves our previous claim.
\end{proof}

\subsection{Proof of Lemma~\ref{lemma:targetProb}}
\label{sec:proof:lemma:targetProb}

\begin{lemmanrestate}
%\label{lemma:targetProb}
There is a constant \(\gamma\), such that: For \(n\) sufficiently large, \(d \leq log(n)\), \(D>\frac{3d}{2}\),  \(T\leq \gamma nD\), and given any $T$ known edges $e_1, \cdots, e_T$, and a split of \(X\), \(\mathrm{SP}\), we have:
$$\pr_{\mathrm{U}\sim \Sigma(n, D, d)}[\mathrm{SP} \text{ is the correct split of }  \mathrm{U} \mid e_1, \cdots, e_T] \leq \frac{1}{n}.$$
\end{lemmanrestate}

\begin{proof}
Our goal is to bound
\begin{align}
\pr_{\mathrm{U}\sim\Sigma(n,D,d)}
\!\Bigl[\,\mathrm{SP}\text{ is the true split}\;\Bigm|\;e_1, \cdots, e_T\Bigr].
\end{align}
The distribution \(\Sigma(n,D,d)\) is generated in two independent steps:
\begin{enumerate}[label=(\roman*).]
  \item \emph{Choose the split.}\;
        Pick \(\mathrm{SP}\in\binom{X}{n}\) \emph{uniformly at random}.
  \item \emph{Choose all permutations.} We choose all $\bigl(\mathrm{P}_1^{2D-d},\cdots,\mathrm{P}_{2n}^{2D-d},\,\mathrm{P}_{Y_1}^{n(2D-d)},\,\mathrm{P}_{Y_2}^{n(2D-d)},\,\mathrm{S}(Y_a, X_b)\bigr)$, where \(\mathrm{S}(Y_a, X_b)\in\{\mathrm{S}(Y_{1}, X_{1}),\mathrm{S}(Y_{1}, X_{2}),\mathrm{S}(Y_{2}, X_{1}),\mathrm{S}(Y_{2}, X_{2})\}\), randomly and independently.
\end{enumerate}
By Bayes’ rule, the posterior probability mass on the event:
\(\,\mathrm{SP}\text{ is the true split}\,\), equals to
\begin{equation}
\label{eq:bayes}
  \frac{\pr_{\mathrm{U}\sim \Sigma(n,D,d)|_{\mathrm{SP}}}\!\bigl[e_1,\cdots, e_T\bigr]}
{\displaystyle\sum_{\mathrm{SP}^{'}\in \binom{X}{n}}
        \pr_{\mathrm{U}\sim \Sigma(n,D,d)|_{\mathrm{SP}^{'}}}\!\bigl[e_1,\cdots, e_T\bigr]},
\end{equation}
where the sum runs over any
\(\sigma=(\mathrm{SP},\mathrm{P}_1^{2D-d},\cdots,\mathrm{P}_{2n}^{2D-d},\mathrm{P}_{Y_1}^{n(2D-d)},\mathrm{P}_{Y_2}^{n(2D-d)},\mathrm{S}(Y_a, X_b)), \forall a,b\in\{1,2\}\).
Because all splits are a-prior equiprobable, the posterior on any
particular \(\mathrm{SP}\) is proportional to the likelihood of the observed
transcript. 
Now we will give a systematic evaluation of each term in this fraction. Since queries from \(s\) or to \(s\) is the same on all the \(\mathrm{U}(r)\in\mathcal{U}(n, D, d)\), we assume all the queries \(Q_t, R_t\) is only between \(X\) and \(Y\). We now derive Lemma~\ref{lemma:condPQR} to compute the overall probability. For ease of representation, we include the following additional notations:
\begin{enumerate}[label=(\roman*).]
    \item  $e_t = [(y_{i_t},k_t), (x_{j_t},l_t)]$ ,$1\leq t \leq T_0$.  represents an edges as the $k_t^{th}$ out-edge of  $y_{i_t} \in Y$ and  the $l_t^{th}$ in-edge of  $x_{j_t} \in X$.
    \item For node sets (or single node labels) \(V_1, V_2\), we use \(E_t(V_1\rightarrow V_2)\) for the count of \(\{1\leq t'< t, e_{t'} \text{ originates from } V_1, \text{pointing to } V_2\}\), e.g.,  \(E_{t}(Y_1, x_{i_t})\) is the count of edges from \(Y_1\) to \(x_{i_t}\)among know edges.
    %\item Since we will no longer use the threshold in this section, we use the standard notation \(\delta_{ab} = \begin{cases}1, \; a=b\\ 0, \; a\neq b\end{cases}\) for the constant for \(a,b \in \{1,2\}\).
\end{enumerate}

%\begin{definition}
%We here introduce a sequence of notations for simplification in follow content (\(1\leq t \leq T_0\) for integer $t$):
%\end{definition}
We now formulate the calculation in the following lemma.

%\mathbf{1}_{a=b}=\begin{cases}1, a=b\\0, a\neq b\end{cases}
\begin{lemma}(Conditional probability of queries)\label{lemma:condPQR} For any $\; x_{i_{T_0 + 1}} \in X_b,\; y_{j_{T_0 + 1}} \in Y_a,\; a,b\in\{1,2\}:$
\begin{align}
\pr_{U\sim \Sigma(n,D,d)|_{\mathrm{SP}}}\bigl[e_{T_0 + 1}\mid e_1,\cdots, e_{T_0}\bigr] =\frac{D-(1-\mathbf{1}_{a=b})d-E_{T_0+1}(Y_{a}\rightarrow {x_{T_0 + 1}})}{\left(2D-d-E_{T_0+1}({Y\rightarrow {x_{T_0 + 1}}})\right)\left(n(2D-d) - E_{T_0+1}({Y_{a}\rightarrow X)}\right)}.
\end{align}
\end{lemma}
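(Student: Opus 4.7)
The proof is a direct calculation that exploits the conditional uniformity of the random permutations in the construction of $\Sigma(n,D,d)|_{\mathrm{SP}}$. The plan is to decompose the target conditional probability as a product of two independent factors: $(P_1)$ the probability that the $k_{Q_{T_0+1}}$-th in-edge of $x_{i_{T_0+1}}$ originates from some node in $Y_a$; and $(P_2)$, conditional on this origin, the probability that the matched out-edge is exactly $(y_{j_{T_0+1}}, k_{R_{T_0+1}})$. The claim then follows by multiplying $P_1$ and $P_2$.

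First I would establish that the two random objects controlling $P_1$ and $P_2$ remain independent after conditioning on the history $(Q_1=R_1,\dots,Q_{T_0}=R_{T_0})$. The permutation $\mathrm{P}_{i_{T_0+1}}^{2D-d}$ controls $P_1$, while the combined structure $(\mathrm{P}_{Y_a}^{n(2D-d)}, \mathrm{S}(Y_a,X_1), \mathrm{S}(Y_a,X_2))$ controls $P_2$. These objects are sampled independently in the construction, and each past observation $(Q_t,R_t)$ induces at most one constraint on each, so the conditional joint distribution factors into a uniform distribution over legal $\mathrm{P}_{i_{T_0+1}}^{2D-d}$ times a uniform distribution over legal bijections. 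I would also argue, by a counting argument giving exactly $(nD)!\cdot (n(D-d))!$ compatible triples per target bijection, that the composition $(\mathrm{P}_{Y_a}^{n(2D-d)}, \mathrm{S}(Y_a,X_1), \mathrm{S}(Y_a,X_2))$ induces a uniformly random bijection between the $n(2D-d)$ out-edges of $Y_a$ and the $n(2D-d)$ in-edge slots of $X$ marked ``from $Y_a$''; this is the clean object on which to condition.

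For $P_1$, by conditional uniformity of $\mathrm{P}_{i_{T_0+1}}^{2D-d}$, the label $k_{Q_{T_0+1}}$ is uniform over the unrevealed in-edge labels of $x_{i_{T_0+1}}$. The total number of in-edges of $x_{i_{T_0+1}}\in X_b$ coming from $Y_a$ is $D-(1-\mathbf{1}_{a=b})d$, of which $E_{T_0+1}(Y_a\to x_{i_{T_0+1}})$ have already been revealed, yielding
\[
P_1 \;=\; \frac{D-(1-\mathbf{1}_{a=b})d-E_{T_0+1}(Y_a\to x_{i_{T_0+1}})}{2D-d-E_{T_0+1}(Y\to x_{i_{T_0+1}})}.
\]
For $P_2$, conditioning on the history fixes exactly $E_{T_0+1}(Y_a\to X)$ positions of the uniform bijection identified above; given that the edge is from $Y_a$, its specific source is therefore uniform over the $n(2D-d)-E_{T_0+1}(Y_a\to X)$ unrevealed out-edges of $Y_a$, whence $P_2 = 1/(n(2D-d)-E_{T_0+1}(Y_a\to X))$. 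Multiplying $P_1\cdot P_2$ recovers the claimed formula, and the symmetric case $Q_{T_0+1}=\query{ADJ}_{\mathrm{out}}(y_{j_{T_0+1}},k_{Q_{T_0+1}})$ follows by interchanging the roles of in- and out-edges in the same two-stage argument.

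The principal obstacle is the combined-bijection step underlying $P_2$: one must verify that the triple $(\mathrm{P}_{Y_a}^{n(2D-d)}, \mathrm{S}(Y_a,X_1), \mathrm{S}(Y_a,X_2))$ truly collapses to a uniformly random bijection on the merged edge set, and, more delicately, that this uniformity is preserved under conditioning on arbitrary and mutually correlated past $\query{ADJ}$ observations touching both $X_1$ and $X_2$. Beyond this bookkeeping the computation is elementary ratios of unrevealed-slot counts.
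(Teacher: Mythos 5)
Your proposal is correct and essentially mirrors the paper's own argument: the paper computes the same conditional probability by factoring the event into the constraint on $\mathrm{P}_{i_{T_0+1}}^{2D-d}$ (your $P_1$) and constraints on $\mathrm{P}_{Y_a}^{n(2D-d)}$ together with $\mathrm{S}(Y_a,X_b)$, whose product telescopes to exactly your $P_2=\frac{1}{n(2D-d)-E_{T_0+1}(Y_a\rightarrow X)}$. The only difference is presentational—you collapse those two objects into a single uniform bijection (with the correct $(nD)!\,(n(D-d))!$ counting), whereas the paper keeps them separate and disposes of the "delicate point" you flag (preservation of conditional uniformity under the correlated transcript) by an explicit induction hypothesis on $T_0$ asserting that all permutations and bijections remain independent given $Q_1=R_1,\dots,Q_{T_0}=R_{T_0}$.
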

We proceed by induction; our induction hypothesis is:
In the distribution \(\Sigma(n,D,d)|_{SP}\) conditioned on \(Q_1=R_1,\dots Q_t=R_t\), \(\mathrm{P}_1^{2D-d},\cdots,\mathrm{P}_{2n}^{2D-d}, \mathrm{P}_{Y_1}^{n(2D-d)}, \mathrm{P}_{Y_2}^{n(2D-d)}\) are still independent of each other, and that \(\mathrm{S}(Y_{1}, X_{1}),\mathrm{S}(Y_{1}, X_{2}),\mathrm{S}(Y_{2}, X_{1}),\mathrm{S}(Y_{2}, X_{2})\) are also independent of each other. 
\begin{comment}
Specifically, the distribution is isomorphic to the uniform distribution of the following set:
\begin{align*}
&\Bigl(\prod_{i=1}^{2n} SYM(\{1,..,2D-d-E_{Y_{1}\rightarrow {x_{i_0}}}-E_{Y_{2}\rightarrow {x_{i_0}}}\})\Bigr)\\
&\times SYM(\{1,\cdots,n(2D-d)-E_{Y_{1}\rightarrow X_{1}} - E_{Y_{1}\rightarrow X_{2}}\})
\times SYM(\{1,\cdots,n(2D-d)-E_{Y_{2}\rightarrow X_{1}} - E_{Y_{2}\rightarrow X_{2}}\})\\
&\rtimes \Pi\begin{cases}
SYM(\{1,\cdots,nD - E_{Y_{1}\rightarrow X_{1}}\})\\
SYM(\{1,\cdots,n(D-d) - E_{Y_{1}\rightarrow X_{2}}\})\\
SYM(\{1,\cdots,n(D-d) - E_{Y_{2}\rightarrow X_{1}}\})\\
SYM(\{1,\cdots,nD - E_{T_0+1}({Y_{2}\rightarrow X_{2}}\}))
\end{cases}
\end{align*}
Particularly, the first term means for each \(x_i\) we still need to decide the source of in-edges except those covered in \(Q_i, R_i\); the second two terms stands for deciding the target of out-edges in \(Y_1, Y_2\). the final factor specifies the individual connections, as in Procedure \ref{sec:distribution}.
\end{comment}
Without loss of generality, we prove the case \(x_{i_{T_0+1}} \in X_1, y_{j_{T_0+1}} \in Y_1\).  As in the single query:
\begin{enumerate}[label=(\roman*).]
    \item By definition, \(\mathrm{SP}\) determines whether \(x_{i_{T0+1}}\) is in \(X_1\) or \(X_2\). Here we assume that it lies in \(X_1\).
    \item \(\mathrm{P}^{2D-d}_{i_{T_0+1}}\) decides whether the \(k_{T_0+1}^{th}\) edge of \(x_{i_{T_0+1}}\) is sourced from \(Y_1\). This occurs with probability 
    \begin{align}
    \mathrm{P}^{2D-d}_{i_{T_0+1}}=\frac{D-  E_{T_0+1}({Y_1\rightarrow {x_{i_{T_0+1}}}})}{2D-d - E_{T_0+1}({Y\rightarrow {x_{i_{T_0+1}}}})}.
    \end{align}
    
    \item\(\mathrm{P}_{Y_1}^{n(2D-d)}\) decides whether the \(l_{T_0+1}^{th}\) edge of \(y_{j_{T_0 + 1}}\) targets to \(X_1\). This occurs with probability  
    \begin{align}
    \mathrm{P}_{Y_1}^{n(2D-d)}=\frac{nD- E_{T_0+1}{(Y_{1}\rightarrow X_{1}})}{n(2D-d) - E_{T_0+1}({Y_{1}\rightarrow X})}.
    \end{align} 
    \item Since in the composition, the \(\mathrm{P}_1^{2D-d},\cdots,\mathrm{P}_{2n}^{2D-d}, \mathrm{P}_{Y_1}^{n(2D-d)}, \mathrm{P}_{Y_2}^{n(2D-d)}\) terms are irrelevant to each other by induction hypothesis: For simplification of notations, we call the event for \(\mathrm{P}_1^{2D-d},\cdots,\mathrm{P}_{2n}^{2D-d}, \mathrm{P}_{Y_1}^{n(2D-d)}, \mathrm{P}_{Y_2}^{n(2D-d)}\) suitable for \(e_t\): When 
    \begin{equation}
            x_{i_t}\in X_1,y_{j_t} \in Y_2, 
    \end{equation}
    the event holds if and only if the $k_{Q_{t}}$ in-edge of $x_{i_t}$, i.e. \((x_{i_t},  k_{Q_{t}})\) originates from nodes in \(Y_2\); and the out-edge  \((y_{j_{t}}, k_{R_{t}})\) points to nodes in \(X_1\). We analogously define the same event for the other seven cases \(x_{i_t} \in X_2\) or  \(y_{j_t} \in Y_1\) or \(e_t = [(y_{j_{t}}, k_{t}), (x_{i_{t}}, l_{t})]\). Combined with induction hypothesis, the probability of
    \begin{align}
&\pr_{\mathrm{P}_1^{2D-d},\cdots,\mathrm{P}_{2n}^{2D-d}, \mathrm{P}_{Y_1}^{n(2D-d)}, \mathrm{P}_{Y_2}^{n(2D-d)}} \bigl[\text{suitable for }e_{T_0+1} \mid \text{suitable for }e_t, t=1,\cdots,T_0\bigr] \\
&=\frac{D- E_{T_0+1}({Y_{ 1}\rightarrow {x_{i_{T_0+1}}}})}{2D-d - E_{T_0+1}({Y\rightarrow {x_{i_{T_0+1}}}}) } \times \frac{nD- E_{T_0+1}({Y_{1}\rightarrow X_{1}})}{n(2D-d) - E_{T_0+1}({Y_{1}\rightarrow X})}.
    \end{align}
    \item Conditioned on arbitrary possible \(\mathrm{P}^{n(2D-d)}_i, \mathrm{P}^{n(2D-d)}_{Y_1}, \mathrm{P}^{n(2D-d)}_{Y_2}\), the bijection \(\mathrm{S}(Y_{1}, X_{1})\)  of \(nD\) elements decides the particular out-edge index from the reverse of the \(k_{Q_{T_0+1}}^{th}\) in-edge of \(x_{i_{T_0+1}}\) among \(nD-E_{Y_{1}\rightarrow X_{1}}\) choices, The probability of exactly finding it is the \(k_{R_{T_0+1}}^{th}\) out-edge of \(y_{j_{T_0+1}}\) is \(\frac{1}{nD-E_{T_0+1}({Y_{1}\rightarrow X_{1}})}\). 
\end{enumerate}

Consequently, the resulting probability matches the expression in Lemma~\ref{lemma:condPQR}. Since we only set some restriction independently for \(\mathrm{P}_{Y_1}^{n(2D-d)},\mathrm{P}^{2D-d}_{i_{T_0+1}}\), the \(\mathrm{P}_1^{2D-d},\cdots,\mathrm{P}_{2n}^{2D-d}, \mathrm{P}_{Y_1}^{n(2D-d)}, \mathrm{P}_{Y_2}^{n(2D-d)}\) are still independent conditioned on \(Q_{t+1} = R_{t+1}\), and also for the independence of $$\mathrm{S}(Y_{1},X_{1}),\;\mathrm{S}(Y_{1},\; X_{2}),\;\mathrm{S}(Y_{2}, X_{1}),\;\mathrm{S}(Y_{2}, X_{2}).$$  Thus the induction hypothesis remains valid for \(T_0+1\).

Combining the analysis above, it remains to show the left hand side expression is smaller than \(\frac{1}{n}\) for Lemma~\ref{lemma:targetProb}.
When \(T\leq \gamma nD\), there are at most \(\frac{\gamma}{\gamma_0} n\) nodes \(i\) such that \(E_{T}({Y \rightarrow x_i}) \geq \gamma_0 D\).  Then under the split \(\mathrm{SP}\), there are at least \((1 - \frac{\gamma}{\gamma_0})\cdot n\) nodes in \(X_1\), and at least \((1 - \frac{\gamma}{\gamma_0})\cdot n\) nodes in \(X_2\).  For each pair of \(x_i, x_{i^{'}}\) such that \(E_T({Y \rightarrow x_i}), E_T({Y \rightarrow x_{i^{'}})} < \gamma_0 D\) and in \(\mathrm{SP}\) \(x_i \in X_1, x_{i^{'}} \in X_2\), we have another split \(\mathrm{SP}^{'}\), that \(\mathrm{SP}^{'}\) sends \(x_i\) to \(X_2\), \(x_{i^{'}}\) to \(X_1\), while other nodes are the same as \(\mathrm{SP}\); Further, there are at least \((1 - \frac{\gamma}{\gamma_0})^2 n^2\) such pairs.
Intuitively, \(\mathrm{SP}^{'}\) is very close to \(\mathrm{SP}\), and the success probability should be very close to that of \(\mathrm{SP}\)  as well if we have only known insufficient data. Specifically, we calculate the proportion of their success probability as follows:

\begin{align}
  &\frac{\pr_{U\sim \Sigma(n,D,d)|_{SP}}\!\bigl[e_1,\dots,e_T\bigr]}
  {\pr_{U\sim \Sigma(n,D,d)|_{SP^{'}}}\!\bigl[e_1,\dots,e_T\bigr]}= \Pi_{t=0,\cdots,T-1}\frac{\pr_{U\sim \Sigma(n,D,d)|_{SP}}\!\bigl[e_{t+1}\mid e_1,\dots,e_T\bigr]}
  {\pr_{U\sim \Sigma(n,D,d)|_{SP^{'}}}\!\bigl[e_{t+1}\mid e_1,\dots,e_T\bigr]}.\\
\end{align}
Denote \(Q_t, R_t\) by either \(\query{ADJ}_{in}(x_{i_t}, k_{Q_{t}}),(y_{j_t}, k_{R_{t}}) \) or \(\query{ADJ}_{out}(y_{j_t}, k_{Q_{t}}),(x_{i_t}, k_{R_{t}})\ \text{for } t=1,..,T\). By Lemma \ref{lemma:condPQR}, we have that for any $x_{i_{t + 1}} \in X_b, y_{j_{t + 1}} \in Y_a, a,b\in\{1,2\}$, 
\begin{align}
&\pr_{U\sim \Sigma(n,D,d)|_{\mathrm{SP}}}\!\bigl[e_{t+1}\mid e_1,\dots,e_T\bigr] =\frac{D - (1-\mathbf{1}_{a=b})d-E_{t+1}(Y_{a}\rightarrow {x_{i_{t + 1}}})}{\left(2D-d-E_{t+1}({Y\rightarrow {x_{i_{t + 1}}}})\right)\left(n(2D-d) - E_{t+1}({Y_{a}\rightarrow X}\right)}.
\end{align}
Notice that for both \(\mathrm{SP}^{'}\) and \(\mathrm{SP}\), \(a\) is a fixed number, and if \(x_{i_{t+1}} \notin \{x_i, x_{i^{'}}\}\), \(b\) is also fixed. Then
\begin{align}
E_{t+1}({Y_a\rightarrow x_{i_{t+1}}}), E_{t+1}(Y\rightarrow x_{i_{t+1}}), E_{t+1}(Y_1\rightarrow x_{i_{t+1}})
\end{align} are the same in both cases . When\(x_{i_{t+1}} \notin \{x_i, x_{i^{'}}\}\), the target fraction is \(1\), otherwise the fraction is at most 
\begin{align}
1+\frac{d}{2D-d- E_{Y\rightarrow x_{i_{t+1}}}} \leq 1+\frac{d}{2D-d-\gamma_0 D} \leq 1+\frac{2}{4-3\gamma_0}\frac{d}{D}.
\end{align}
Thus, it holds that 
\begin{align}
\frac{\pr_{U\sim \Sigma(n,D,d)|_{\mathrm{SP}}}\!\bigl[e_1,\dots,e_T\bigr]}
  {\pr_{U\sim \Sigma(n,D,d)|_{\mathrm{SP}^{'}}}\!\bigl[e_1,\dots,e_T\bigr]} \leq \left(1+\frac{2}{4-3\gamma_0}\frac{d}{D}\right)^{\gamma_0 D}.
\end{align}
Since \(d\leq \log(n)\), this expression is at most \(n^{\frac{2\gamma_0}{3-4\gamma_0}}\). Notice that there are at least \((n - \frac{\gamma}{\gamma_0} n)^2\) such pairs, which creates  \((n - \frac{\gamma}{\gamma_0} n)^2\) terms of \(\pr_{U\sim\Sigma(n,D,d)|_{\mathrm{SP}^{'}}}\!\bigl[e_1,\dots,e_T\bigr]\) in the lower part of Equation~\ref{eq:bayes}, and they are at least \(n^{-\frac{2\gamma_0}{3-4\gamma_0}}\cdot\pr_{U\sim\Sigma(n,D,d)|_{\mathrm{SP}^{'}}}\!\bigl[e_1,\dots,e_T\bigr]\). Thereby the expression is no more than \((1-\frac{\gamma}{\gamma_0})n^{-2+\frac{2\gamma_0}{3-4\gamma_0}}\). When \(\gamma\) is a small constant and \(n\) is sufficiently large, the probability is smaller than \(\frac{1}{n}\).
\end{proof}

\section{Proof Detail in STPPR Lower Bound}
\subsection{Proof of Lemma~\ref{lemma:TPPR}}
\label{sec:proof:lemma:TPPR}
\begin{lemmanrestate}[STPPR]
Let $\mathrm{U}^{-1}$ denote the inverse graph of a graph $\mathrm{U}$ obtained by reversing the direction of each edge.  
Choose any decay factor $\alpha\in (0,1)$, error parameter $c\leq \tfrac{1}{2}$, and functions $\delta_0(n_0)\in \bigl(0,\tfrac{(1-\alpha)^2}{16}\bigr)$ and $m_0(n_0)\in \Omega(n_0)\cap \bigo(n_0^2)$.  
For sufficiently large $n_0$, there exist graph parameters $n,D,d$ (as functions of $n_0$) such that for all graph instances $\mathrm{U}\in\mathcal{U}(n,D,d)$, following conditions hold for $\mathrm{U}^{-1}$:
\begin{enumerate}[label=(\roman*).]
    \item The node count of $\mathrm{U}$ is in $\bigo(n_0)$, and the edge count of $\mathrm{U}$ is in $\bigo(m_0)$.
    \item $n^{1/2}\geq D\geq 2d$ and $d\leq \log n$.
    \item $Dn = \Omega\!\left(\min\!\left(m_0, n_0\log n_0 \cdot \tfrac{1}{\delta_0}\right)\right)$.
    \item For all $x_1\in X_1$ and $x_2\in X_2$ with PPR values satisfying $\pi(x_1,s)\geq \delta_0$ and $\pi(x_2,s)\geq \delta_0$, we have $\pi(x_1,s)\;\geq\; \frac{1}{(1-c)^2}\,\pi(x_2,s).$
\end{enumerate}
\end{lemmanrestate}

\begin{proof}
As in the original hard instance, in $\mathrm{U}^{-1}$ we have for $x_i\in X_1$:
$
\pi(x_i,s)=(1-\alpha)^2\frac{D}{2D-d+r},
$
and for $x_{i'}\in X_2$:
$
\pi(x_{i'},s)=(1-\alpha)^2\frac{D-d}{2D-d+r},
$
so that their ratio remains $\frac{D-d}{D}$. Consider first the case $m_0 > n_0 \log n_0 \cdot \tfrac{1}{\delta_0}$.  
Set $n = n_0$, $d = \log n_0$, $D = \left(1 + \frac{1}{4c}\right) d$, and 
$
r = d\left(\frac{(1-\alpha)^2}{4c\delta_0} - 1 - \frac{1}{4c}\right).$
The edge count satisfies $\bigo(n(D+r)) = \bigo(n_0\log n_0 \tfrac{1}{\delta_0}) \subset \bigo(m_0)$; the node count is $\bigo(\max(n,r)) = \bigo(\max(n,m/n)) = \bigo(n_0)$.  
For each $x_2 \in X_2$, $\pi(x_2,s) = \delta_0$, and the ratio condition holds as in the Lemma~. For the second case, where $\tfrac{1}{\delta_0}\le m_0 < n_0 \log n_0 \cdot \tfrac{1}{\delta_0}$, by continuity we can find positive $n,\delta$ such that
$n \log n \cdot \tfrac{1}{\delta} = m_0,$
with $n\le n_0$, $\delta \ge \delta_0$, and $\binom{n}{2}\ge m_0$.  
Setting $d=\log n$, $D=\left(1+\frac{1}{4c}\right)d$, and $r = d\left(\frac{(1-\alpha)^2}{4c\delta} - 1 - \frac{1}{4c}\right),$
we similarly ensure that $\pi(x_2,s)=\delta \ge \delta_0$, and the ratio condition again holds.  
Moreover, $n(r+D) \in \Omega\!\left(\tfrac{nd}{\delta}\right) = \Omega(m_0),$ the edge count is then $$\bigo(n(r+D))=\bigo(\tfrac{nd}{\delta})=\bigo(m_0)$$ and the node count satisfies $$\bigo(\max(n,r)) = \bigo(n,m/n)\subset \bigo(n_0).$$ This completes the proof.
\end{proof}

\section{SSPPR-A Lower Bound Proof}

\subsection{Proof of Lemma~\ref{lemma:GDB_prop}}
\label{sec:proof:lemma:GDB_prop}
\begin{lemmanrestate}[Properties of $\mathcal{G}(D, \mathbf{b})$]
For any graph instance $\mathcal{G}(D, \mathbf{b})$, the following properties hold:
\begin{enumerate}[label=(\roman*).]
\item The graph contains $4D+2$ nodes and $2D(D+1)$ edges.
\item The out-degree of each node in $Y$ or $Y'$ is $D$. The in-degree of each node in $X$ or $X'$ is $D$.
\item $\pi(s , t) = \alpha \left(\sum_{i,j=1}^{D}b_{ij}\right)\frac{(1-\alpha)^3}{D^2}$.
\end{enumerate}
\end{lemmanrestate}

\begin{proof}
The number of nodes follows directly from the construction:
$2$ special nodes ($s$ and $t$) plus $4D$ nodes in the sets $X,Y,X',Y'$,  
for a total of $2 + 4D$ nodes. For the edges, there are
$D$ edges from $s$ to $Y$,
$D$ edges from $X$ to $t$, and
$2D^{2}$ edges between $\{Y,Y'\}$ and $\{X,X'\}$,
giving a total of $2D^{2}+2D \;=\; 2D(D+1).$ For property (ii), take any node $y_i\in Y$ (or $y_i'\in Y'$).  
For every $j\in\{1,\dots,D\}$ there is exactly one outgoing edge—either to $x_j$ or to $x_j'$.  
Hence each $y_i$ (and each $y_i'$) has out-degree exactly $D$. Similarly, for any node $x_j\in X$ (or $x_j'\in X'$), each $i\in\{1,\dots,D\}$ contributes exactly one incoming edge (from either $y_i$ or $y_i'$), so every $x_j$ (and $x_j'$) has in-degree exactly $D$. Now we compute the PPR value $\pi(s,t)$ by tracing the probability mass of an $\alpha$-decay random walk starting at $s$.  
Initially the mass at $s$ is $\alpha$.

\begin{enumerate}[label=(\roman*).]
\item With probability $(1-\alpha)$ the walk leaves $s$ and spreads uniformly to its $D$ out-neighbors in $Y$.  
Hence each $y_i\in Y$ receives $\frac{\alpha(1-\alpha)}{D}.$

\item Each $y_i$ has out-degree $D$ and forwards its mass to nodes in $X$ or $X'$ according to $b_{ij}$.  
The total contribution to a node $x_j\in X$ is
$\sum_{i=1}^D b_{ij}\cdot \frac{\alpha(1-\alpha)}{D}\cdot \frac{1-\alpha}{D}
= \left(\sum_{i=1}^D b_{ij}\right)\frac{\alpha(1-\alpha)^2}{D^2}.$

\item Each $x_j$ has a single outgoing edge to $t$, passing on all of its mass with an additional $(1-\alpha)$ factor.  
Summing over $j$ gives
\(
\pi(s,t)
=\sum_{j=1}^D \left[\left(\sum_{i=1}^D b_{ij}\right)\frac{\alpha(1-\alpha)^2}{D^2}\right](1-\alpha)
=\alpha\left(\sum_{i,j=1}^D b_{ij}\right)\frac{(1-\alpha)^3}{D^2}.
\)
\end{enumerate}
This establishes the desired formula for computing PPR value $\pi(s,t)$.
\end{proof}

\subsection{Proof of Lemma~\ref{lemma:SSPPR-A1}}
\label{sec:proof:lemma:SSPPR-A1}
\begin{lemmanrestate}
Given constants $\alpha \in (0,1)$, $p \in (0,1)$, and \(m_0 \in \Omega(n_0), m_0 \leq \binom{n_0}{2}, \varepsilon_0 \in (0,1))\) as function of \(n_0\) and assuming \(\varepsilon_0\) is sufficiently small when \(n_0\) is sufficiently large. There exist constants $\gamma > 0$ such that for any sufficiently small absolute error tolerance $\varepsilon > 0$, we can choose an integer $D(n_0)$  satisfying:
\begin{enumerate}[label=(\roman*).]
    \item The graph $\mathcal{G}(D, \mathbf{b})$ has no more than $n_0$ nodes and no more than $m_0$ edges.
    \item If two matrices $\mathbf{b}$ and $\mathbf{b}'$ have sums $S = \sum b_{ij}$ and $S' = \sum b'_{ij}$ such that $|S - S'| \geq \gamma D$, then the corresponding PPR values satisfy $|\pi(s,t) - \pi'(s,t)| > 2\varepsilon$.
    \item \(D^2 = \Omega(\min(1/\varepsilon_0^2, m_0))\). 
\end{enumerate}
\end{lemmanrestate}

\begin{proof}
Choose \(D = \sqrt{min(1/\varepsilon^2, m_0)}\frac{(1-\alpha)^3}{8}\). Condition \((i)\) follows directly from Lemma \ref{lemma:GDB_prop}, since \(n_0 \geq  \sqrt{m_0}/2\geq D\). Condition \((iii)\) follows by definition. Now our goal is to show condition \((ii)\).
From the properties of our graph family, we have the change in PPR value:
$$|\pi(s,t) - \pi'(s,t)| = \alpha \frac{(1-\alpha)^3}{D^2} |S - S'|$$
Given that $|S - S'| \geq \gamma D.$ for some constant $\gamma$ we will choose, this implies:
$$|\pi(s,t) - \pi'(s,t)| \geq \alpha (1-\alpha)^3 \gamma/ D > 2\varepsilon.$$
\end{proof}

\subsection{Proof of Lemma~\ref{lemma:statistical_hardness}}
\label{sec:proof:lemma:statistical_hardness}
\begin{lemmanrestate}
For any constant $p \in (0,1)$, there exists a constant $\gamma' > 0$ such that for a sufficiently large $D$, any algorithm making at most $T \leq D^2/2$ queries to a graph drawn from $\Sigma(D)$ will fail to estimate $S = \sum_{i,j=1}^{D^2} b_{ij}$ with an error less than $\gamma' D$. Specifically, if $\hat{S}$ is the algorithm's estimate, then:
$$\pr_{\mathbf{b} \sim \Sigma(D)}[|S - \hat{S}| > \gamma' D] > p.$$
\end{lemmanrestate}

\begin{proof}
Let $\mathcal{A}$ be an arbitrary algorithm operating within our query model. We first analyze the information gained from its queries.
The graph structure is highly regular. The node set is fixed, as are the degrees of all nodes. Queries like \(\query{DEG}\) are uninformative about the matrix $\mathbf{b}$. Furthermore, the neighbors of $s$ (the set $Y$) and the nodes pointing to $t$ (the set $X$) are also fixed. Thus, \(\query{ADJ}\)  queries on $s$ and $t$ yield no information about $\mathbf{b}$.

The only queries that reveal the structure of $\mathbf{b}$ are \(\query{ADJ}\) queries on nodes in $\{Y, Y', X, X'\}$. For instance, querying the $j$-th neighbor of $y_i$ reveals whether the edge is $(y_i, x_j)$ or $(y_i, x_j')$, which is equivalent to learning the value of $b_{ij}$. Each such query reveals exactly one entry of the matrix $\mathbf{b}$.

The algorithm $\mathcal{A}$ makes at most $T \leq D^2/2$ queries. Let $K$ be the set of $T$ indices $(i,j)$ corresponding to the entries of $\mathbf{b}$ that the algorithm learns. The algorithm knows the values $\{b_k\}_{k \in K}$. The remaining $M = D^2 - T$ entries, where $k \notin K$, remain unknown. Since $T \leq D^2/2$, the number of unknown entries is $M \geq D^2/2$. Let $S$ be the true sum of all entries in $\mathbf{b}$. We can write $S = S_{obs} + S_{unobs}$, where $S_{obs} = \sum_{k \in K} b_k$ is the sum of observed entries and $S_{unobs} = \sum_{k \notin K} b_k$ is the sum of unobserved entries. The algorithm knows $S_{obs}$ exactly. The unobserved sum, $S_{unobs}$, is a sum of $M$ independent and identically distributed $Bern(1/2)$ random variables. Thus, $S_{unobs}$ follows a binomial distribution, $Binomial(M, 1/2)$, with mean $\mu=M/2$ and standard deviation $\sigma = \sqrt{M}/2$.

An optimal algorithm's best estimate for this unknown sum is its mean, $\hat{S}_{unobs} = M/2$. The algorithm's estimation error is therefore $|S_{unobs} - M/2|$. We want to show that $$\pr[|S_{unobs} - M/2| > \gamma' D] > p.$$ Let's standardize the variable: $Z_M = (S_{unobs} - \mu)/\sigma$. The failure probability then becomes $$\pr[|Z_M| > \gamma'D/\sigma].$$ Since $M \geq D^2/2$, the standard deviation $\sigma \geq D/(2\sqrt{2})$, which means the normalized error threshold is bounded by a constant: $\gamma'D/\sigma \leq 2\sqrt{2}\gamma' =: c_\gamma$. Thus, the failure probability is at least $\pr[|Z_M| > c_\gamma]$.

By the Berry-Esseen theorem, the cumulative distribution function $F_M(z)$ of $Z_M$ is pointwise close to the standard normal CDF $\Phi(z)$, with $|F_M(z) - \Phi(z)| \leq C/\sqrt{M}$ for a constant $C$. This allows us to lower-bound the failure probability:
$$\pr[|Z_M| > c_\gamma] \geq 2(1 - \Phi(c_\gamma)) - \frac{2C}{\sqrt{M}}$$
For any target probability $p \in (0, 1)$, we can choose the constant $\gamma' > 0$ to be small enough such that $2(1 - \Phi(c_\gamma)) > p$ (since $\Phi(c_\gamma) \to 0.5$ as $\gamma' \to 0$). For a sufficiently large $D$, the error term $2C/\sqrt{M}$ becomes negligible, guaranteeing the failure probability $\geq p$. This completes the proof.
\end{proof}

\section{\algor{}: Reducing $\log$ Factors}
\label{sec:proof:log}

In the main body, we proved that \algor{} achieves a complexity of $\bigo(m + n \log n \log^3(1/(n\delta)))$ for SSPPR-R. While this already approaches optimality in dense graph settings, the logarithmic factors admit further tightening. In this appendix, we refine the complexity analysis and introduce two auxiliary techniques—adaptive threshold scheduling and hyperparameter tuning—to optimize walk usage and variance control. Importantly, these refinements do not modify the algorithmic framework, which continues to rely on the same path-based decomposition and iterative refinement process. Rather, they provide a sharper accounting of sampling behavior across rounds, yielding the improved bound stated in Theorem~\ref{thm:ssppr-r-upper}.

\begin{theorem}[Improved Upper Bound for SSPPR-R]\label{thm:ssppr-r-upper-refined} 
\newalgor{} (Algorithm~\ref{algo:dist_walk_adjusted}) answers the SSPPR-R query within
$
\bigo\Bigl(m + n \log n R\Bigr)
$
queries in expectation, where $R=\log\Bigl(\frac{\log n}{m\delta}\Bigr)$.
\end{theorem}

As noted at the end of Section~\ref{sec:distwalks}, the tighter complexity bound follows from two refinements already introduced in the main text: (i) adaptive threshold selection via a doubling schedule, which amortizes the per-round cost and avoids dependence on the worst-case $1/\delta_r$; and (ii) a global error analysis that treats the estimator jointly across rounds rather than applying round-by-round worst-case bounds, yielding tighter variance. We next introduce the details of these two techniques.
 
\subsection{Adaptive Threshold Determination}\label{sec:adaptive_thre}
The standard \algor{} (Line~6) employs a conservative static threshold 
$\delta_r = \delta_0 \in \Theta(1/n)$, which guarantees that the remaining PPR mass
$\sum_{x \in X_r}\pi_G(s, x)$ shrinks by a factor of $\Delta$ per round.
To optimize this, we implement an \emph{adaptive mechanism} that selects
a near-maximal $\delta_r$ satisfying the shrinking requirement. This is
achieved by iterating through a geometric progression
$\delta_r = \tau^k\delta_0$ (for $k=0, 1, \dots$ and fixed $\tau \in (0,1)$)
until a convergence condition is met.

The adaptive procedure replaces the fixed assignment with the following
steps:
\begin{enumerate}[label=\Alph*.]
    \item Fix a constant $\tau\in (0,1)$, and rescale $c$ as
    \[
        c := \min\bigl(c, \tfrac{1}{8\Delta}\bigr).
    \]
    \item In each round of \newalgor{}, we first preprocess the alias table for sources and edges in $G_X$.
We then iteratively invoke
    $
       \mathsf{\newalgorr{}}(G, \allowbreak G_X, X_{r-1}, \hat \pi_G(s, \cdot), \alpha, c_r, \delta_r, p_r)
    $
    with fixed $c_r$, $p_r$ (established in the next part on accuracy), and a threshold $\delta_r$ taken from the geometric sequence.
    $\frac{R}{n}, \frac{R\tau}{n}, \frac{R\tau^2}{n}, \dots$. We start from  $\frac{R}{n}$ to guarantee that the threshold is small enough while keeping the total work bounded: even in the extreme case, if every round ends with $\delta_r=R/n$,  $\sum \frac{1}{\delta_r}$ still scales as $\bigo(n)$.
    \item During the execution of \newalgorr{}, we track the distribution of
    walk terminations $H[\cdot]$. In particular, we compute the
    proportion of walks landing in the non-estimated subset
    $X \setminus X_{\mathrm{out}}$:
    $$  \mathrm{Ratio}  =\frac{\sum_{x\in X\setminus X_{\mathrm{out}}}H[x]}{\sum_{x\in X}H[x]}=\frac{\sum_{x\in X_{r}}H[x]}{\sum_{x\in X_{r-1}}H[x]}.
    $$
    \item The search for $\delta_r$ halts when $\mathrm{Ratio}$ drops
    below $1/(2\Delta)$. We then retain the resulting
    $X_{\mathrm{out}}$ (equivalently, $V_r$) and the updated estimate
    $\hat{\pi}_G(s,\cdot)$.
    \item Whenever \(\texttt{SUM}(\mathbf{\hat{S}_r}) \le \frac{\log n}{\delta m}\), we switch to the final round: we set \(c_r := c/4\) and \(\delta_r := \frac{\log n}{\delta m}\), run \algorr{} once more, and then terminate the algorithm. Since \(\texttt{SUM}(\mathbf{\hat{S}_r})\) decreases by at least a factor \(\Delta>1\) in each preceding round, this early stopping condition is met within at most $R \le \log\Bigl(\frac{\log n}{\delta m}\Bigr)$ rounds.
\end{enumerate}

    \begin{algorithm}
    \caption{\(\mathsf{\algorr{}(Graph-Prebuilt)}\)}\label{algo:round_est_simple}
    \DontPrintSemicolon
    \KwInput{Graph \(G_X\); target set \(X\); initial estimation $\hat{\pi}_G(s,\cdot)$; decay factor \(\alpha\); error parameter \(c_r\); threshold \(\delta_r\); failure probability \(p_r\), source distribution $\mathbf{\hat{\bar{S}}_r}(u)$, sum of weights $\texttt{SUM}(\mathbf{\hat{S}_r})$, constant $K_{Round}$}
    \KwOutput{Detected set $X_{out}\in X$ and updated estimation $\hat{\pi}_G(s,\cdot)$, ratio of landing in non-estimated set $X\setminus X_{out}$}
    $X_{out} \gets \algoa{}(G_X, \mathbf{\hat{\bar{S}}}, \alpha, \delta_r(1-c), p_r)$;\\
    $N_r\gets {K_{Round}\log(n/p_r)}/\left({c_r^2\delta_r}\right), H(\cdot)\gets \mathbf{0};$ \\
    \For{\(i=1,2,\dots,N_r\)}{
      Draw a node \(u\sim \mathbf{\hat{\bar{S}}}\);\\
      $v \gets$ end of i-th $\alpha$-RW from \(u\) on $G_X$;\\
      \If{\(v\in X_{out}\)}{
        \(H[v]\gets H[v]+\frac{1}{N_r}\);\\
      }
    }
    \(\hat\pi_G(s,t) \gets H(t)\cdot \texttt{SUM}(\mathbf{\hat{S}_r}) \text{ for } \forall t \in X_{out}; \)\\
    \(\mathrm{Ratio}\gets \dfrac{\sum_{x\in X\setminus X_{out}}H[x]}{\sum_{x\in X}H[x]}\);\\
    \Return{\(X_{\text{out}}\,,\,\hat{\pi}_G(s,\cdot), \mathrm{Ratio}\)}
    \end{algorithm}

\begin{algorithm}
 
\caption{$\mathsf{\algor{}(Adaptive-Threshold)}$}\label{algo:dist_walk_adjusted}  
\DontPrintSemicolon
\KwInput{Graph $G$; source node $s$; decay factor $\alpha$; error $c$; threshold $\delta$; probability $p_f$; shrinking factor $\Delta$, rescale factor $\tau$, constants $K_{RW}, K_{err}, K_{prob}$}
\KwOutput{Estimation $\hat{\pi}_G(s,\cdot).$}
$\hat{\pi}_G(s,\cdot) \gets \mathbf{0}, \delta_0\gets\frac{\alpha}{(1+c)n\Delta}, R\gets\log_\Delta(\frac{\delta_0}{\delta});$\\
% $\bar{c} \gets \frac{c}{4R},\bar\delta = \frac{1}{\alpha(1-\alpha)\Delta n}$ ;\\
$V_{1} \gets \algoa{}(G, s, \delta_0, p_f)$;\\
Cast \(\frac{48\log(8n/p_f)}{c^2\delta_0}\) $\alpha-$RWs from $s$ to estimate $\hat{\pi}_G(s,u) , \forall u\in V_{0}$;\\
\For{$r=2,\dots,R+1$}{
    $X\gets X\setminus I_{r-1}$, and construct $G_X$;\\
      $ \forall u\in \boundary{X}, \mathbf{\hat{S}_r}(u) \gets {d_{\mathrm{out}}^{G_X}(u)\hat{\pi}_G(s,u)}/{(\alpha\,d_{\mathrm{out}}^{G}(u))};$
      $ \texttt{SUM}(\mathbf{\hat{S}_r}) \gets \sum_{u \in \boundary{X}}\mathbf{\hat{S}_r}(u), \mathbf{\hat{\bar{S}}_r} \gets \mathbf{\hat{S}_r}/\texttt{SUM}(\mathbf{S_r})$;\\
    build alias table for $\hat{\bar S}[u] $;\\
    \If{$\texttt{SUM}(\mathbf{\hat{S}_r}) \frac{\log(n/p_f)}{m} \leq \delta/2$}{\textbf{break};}
    
    $c_r=\frac{c}{\sqrt{R}K_{err}},p_r=\frac{1}{n^{K_{prob}}}$;\\
    \For{$\delta_r = \frac{R}{n}, \frac{R\tau}{n}, \frac{R\tau^2}{n}, \dots$}{
        $I_r, \hat\pi_G(s, \cdot), \mathrm{Ratio} \gets $\\
        $\algorr\textsf{(Graph-Prebuilt)}(
            G_X, X, \hat \pi_G(s, \cdot), \alpha, c_r, \delta_r, p_r, \mathbf{\hat{S}_r},\texttt{SUM}(\mathbf{\hat{S}_r}),  K_{RW}.
        )$;\\
        \If{$\mathrm{Ratio} \leq \frac{1}{2\Delta}$}{\textbf{break};}
    } 
    
}

$\delta_{r}\gets \frac{\log(n/p_f)}{m}, c_r=c/4, p_r=p_f/(8n);$\\
$I_r, \hat\pi_G(s, \cdot), \mathrm{Ratio} \gets \algorr\textsf{(Graph-Prebuilt)}(
            G_X, X, \hat \pi_G(s, \cdot), \alpha, c_r, \delta_r, p_r, \mathbf{\hat{S}_r}, \texttt{SUM}(\mathbf{\hat{S}_r}), K_{RW}.
        )$;\\
\Return{$\hat{\pi}_G(s,\cdot)$}
\end{algorithm}

We now turn to the complexity. Note that, although we must perform supporting operations such as
initializing \(H[\cdot]\), storing \(I_r\), and computing the ratio
\(\frac{\sum_{x\in X\setminus X_{\mathrm{out}}} H[x]}
        {\sum_{x\in X} H[x]}\)
for each \(\delta_r\) in every round, each such operation costs \(\bigo(n)\).
Since there are only \(\bigo(R \log R)\) candidate thresholds in total, these
chores contribute an additional \(\bigo(n R \log R)\) time, which is
dominated by our expected complexity $\bigo(m+n R\log(n))$.
For \newalgorr{} and \newalgor{}, analogously to Lemma~\ref{lemma:one_round}, we obtain the following preliminary bounds. 
\begin{lemma}\label{lemma:algorrr_raw}
The time complexity of algorithm \newalgor{} scales as 
$$
\bigo\left(m+nR\log(R)+\sum_{r=2}^{R}\frac{\log(n/p_r)}{c_r^2 \delta_r}\right) = \bigo\left(m+nR\log(R) + R\log(n)\sum_{r=2}^{R}\frac{1}{\delta_r} \right).
$$
For accuracy, we have\begin{enumerate}[label=(\roman*)]
    \item In round $1$ (Line $1$ to $3$), $\pr_{u \in V_1}[|\hat\pi_G(s,u)/\pi_G(s,u)-1| \geq \frac{c}{4}] \leq \frac{p_f}{n}$.
    \item For large enough absolute constants $K_{err}, K_{RW}, K_{prob}$, assume up to round $r-1$, $|\hat\pi_G(s,u)/\pi_G(s,u)-1| \leq \frac{c}{2}, u \in X\setminus X_{r-1}$, we have with probability $1-\frac{p_f}{n}$ for each execution with $\delta_r$ at Line $13$, the returned $\hat\pi_G(s,t)$ satisfies:\[
    \pr_{t \in V_r}\left[|\hat\pi_G(s,t)/\pi_G(s,t)-1| \geq \frac{c}{2}\right] \leq \frac{p_f}{n}\]
\end{enumerate}
\end{lemma}

The expected running time is bounded by the number of simulated walks (excluding the potential early stop round with $\bigo(m)$ cost), and this number scales as
\[
\Theta\!\left(\frac{\log(n/p_r)}{c_r^2 \delta_r}\right),
\]
which is proportional to $1/\delta_r$. Since the guarantees for accuracy
and overall failure probability will be established in the subsequent
analysis, we focus solely on the quantity $\sum_r 1/\delta_r$, as in the
next claim.  

\begin{claim}\label{claim:thre_sum}
Assuming that the PPR estimates for all nodes $x$ in the current output
set $X_{\mathrm{out}}$ are already relatively accurate (i.e.,
$\hat \pi_G(s, x)/\pi_G(s, x) \in [1-c, 1+c]$), the total sum of inverse
thresholds $\sum_r 1/\delta_r$ is bounded by $\bigo(n)$ with high
probability, for $n$ sufficiently large.
\end{claim}

\subsection{Proof of Claim~\ref{claim:thre_sum}}
The intuition underlying this $\bigo(n)$ bound is that one can upper bound
$1/\delta_r$ by the number of nodes estimated in that round. In an
idealized setting, assume that all estimated nodes have PPR scores equal
to $\delta_r\sum_{x \in X_{r-1}}\pi_G(s, x)$, and furthermore the shrinkage is exactly $1/\Delta$, i.e.,
\[
\sum_{x \in X_{r}}\pi_G(s, x) = \frac{1}{\Delta}\sum_{x \in X_{r-1}}\pi_G(s, x).
\]
Together with the relation  $X_{r-1} = X_r \cup V_r$, we can establish that
\begin{align}
\sum_{t\in V_r}\pi_G(s,t) =&\;\delta_r \sum_{x \in X_{r-1}}\pi_G(s, x) \cdot |V_r|
 = \left(1-\frac{1}{\Delta}\right)\sum_{x \in X_{r-1}}\pi_G(s, x) \\
\end{align}
and
\begin{align}
\frac{1}{\delta_r} & \leq |V_r|\cdot \frac{1}{1-\frac{1}{\Delta}};\qquad \frac{1}{\delta_r}\in \bigo(|V_r|).
\end{align}
To establish a formal proof, however, we must account for the fact that the estimated nodes may have varying PPR scores within a single round and that the shrinkage can be arbitrarily small. We begin the proof by characterizing the shrinkage
$\sum_{x \in X_r}\pi_G(s, x)$ in each round, and then split into two
cases according to the rate of decrease. In the proof, we will assume without loss of generality that $\delta_r < R/n$. For those rounds with the first choice of $\delta_r$ in Step ~(B), the sum of their costs is bounded by $R \cdot n/R$.

\mypara{ Shrinkage of the Remaining Scores: The Upper Bound.}
Although we target that
$\sum_{x \in X_r}\pi_G(s, x)$ shrinks exponentially across rounds, we
have not proven this yet based on the new algorithm. Recall that, in our adjusted algorithm
(Step~(D)), we terminate the search for $\delta_r$ in round $r$ once
the ratio falls below $1/(2\Delta)$, i.e.,
\[
\frac{\sum_{x\in X_r} H[x]}{\sum_{x\in X_{r-1}} H[x]} \leq \frac{1}{2\Delta}.
\]Intuitively, this empirical ratio can be used
to estimate the quantity
\[
\frac{\sum_{x \in X_r}\pi_G(s, x)}
         {\sum_{x \in X_{r-1}}\pi_G(s, x)}. 
\]
We now claim and will prove that 
\[
    \sum_{x \in X_r}\pi_G(s, x)
    \;\leq\; \frac{1}{\Delta}\sum_{x \in X_{r-1}}\pi_G(s, x).
\]
Recall that in \newalgorr{} we sample $\alpha$-RW from the source distribution
$\mathbf{\hat{\bar{S}}_r}$. By Lemma~\ref{lemma:newdecom} and our
assumption in the claim on the relative accuracy, we have
\begin{equation}\label{eq:by_assum_claim}
  \frac{
  \sum_{u\in \boundary{X}}\mathbf{\hat{\bar{S}}}_r(u)\,\pi_{G_X}(u, x)
}{
  \pi_G(s, x)/\texttt{SUM}(\mathbf{\hat{S}}_r)
} \in [1-c,1+c].
\end{equation}
For the denominator $\sum_{x\in X_{r-1}} H[x]$, we can then interpret it as a sampling procedure from the \textit{Bernoulli} distribution with probability
\[
\sum_{x\in X_{r-1}}\sum_{u\in \boundary{X}}\mathbf{\hat{\bar{S}}_r}\,\pi_{G_X}(u, x).
\]
With Equation~\eqref{eq:by_assum_claim}, we also have
\begin{equation}\label{eq:appendix_eval_err}
\frac{
  \sum_{x\in X_{r-1}}\sum_{u\in \boundary{X}}\mathbf{\hat{\bar{S}}}_r(u)\,\pi_{G_X}(u, x)
}{
  \sum_{x\in X_{r-1}}\pi_G(s, x) / \texttt{SUM}(\mathbf{\hat{S}}_r)
} \in [1-c,1+c].
\end{equation}
By Lemma \ref{lemma:newdecom}, it holds that the denominator $$\sum_{x \in X_{r-1}}\pi_G(s, x)/\texttt{SUM}(\mathbf{\hat{S}_r}) \in \Theta(1).$$ Thus, the probability of our Bernoulli distribution scales as $\Theta(1)$. Since we have performed $\Omega(n/R)\subset \tilde{\Omega}(n)$ independent samples for this estimation, Chernoff's bound guarantees that, for any $K>1$ and $n$ sufficiently large
$$\pr\left[|\frac{\sum_{x\in X_{r-1}} H[x]}{N_r} - \sum_{x\in X_{r-1}}\sum_{u\in \boundary{X}}\mathbf{\hat{\bar{S}}_r}\,\pi_{G_X}(u, x)| \geq \frac{1}{n^{1/4}}\right] \leq \frac{1}{n^K}.
$$
Equivalently with high probability, it holds that
\[
\frac{\sum_{x\in X_{r-1}} H[x]}{N_r\sum_{x \in X_{r-1}}\pi_G(s, x)/\texttt{SUM}(\mathbf{\hat{S}_r})}
\in \left[(1-c)(1-\frac{1}{n^{1/4}}), (1+c)(1+\frac{1}{n^{1/4}})\right].
\]
However, we do not know the scale of the shrunken total scores $\sum_{x \in X_{r}}\pi_G(s, x)/\texttt{SUM}(\mathbf{\hat{S}_r})$. Indeed, our role is to show this expected value is relatively small. Therefore, we prove this by contradiction and assume that with probability larger than $1/n^{K-1}$, it holds that
\begin{equation}\label{eq:simple_contra}
    \sum_{x \in X_{r}}\pi_G(s, x) > \frac{1}{\Delta}\sum_{x \in X_{r-1}}\pi_G(s, x).
\end{equation}
As such, we can use a similar estimate for the numerator, as
$$\pr\left[|\frac{\sum_{x\in X_{r}} H[x]}{N_r} - \sum_{x\in X_{r}}\sum_{u\in \boundary{X}}\mathbf{\hat{\bar{S}}_r}\,\pi_{G_X}(u, x)| \geq \frac{1}{n^{1/4}}\right] \leq \frac{1}{n^K}.
$$
Then, with probability at least $1-\frac{2}{n^K}$, it holds that
\begin{align*}
& \quad\; \frac{\sum_{v \in X_r}\pi_G(s, v)}
             {\sum_{x \in X_{r-1}}\pi_G(s, x)}\\
&\leq  \frac{(1+c)(1+\frac{1}{n^{1/4}})\sum_{v \in X_r} H[v]}
            {(1-c)\bigl(1-\frac{1}{n^{1/4}}\bigr)\sum_{x \in X_{r-1}}H[x]}\\
\left(n\text{ sufficiently large}\right)&\leq (1+2c)  \frac{\sum_{v\in X_r} H[v]}
            {\sum_{v\in X_{r-1}} H[v]}\\
\left(\text{Terminating}\right)& \leq (1+2c)\frac{1}{2\Delta}
\end{align*}
Then it holds that $$ \frac{\sum_{x \in X_r}\pi_G(s, x)}
         {\sum_{x \in X_{r-1}}\pi_G(s, x)} \leq \frac{1}{\Delta},$$ which contradicts Eq.~\eqref{eq:simple_contra}.

\mypara{Fast-shrink case.} Although this natural stopping rule gives a clean upper bound $\frac{1}{\Delta}$ on the
shrinkage of the remaining scores, there is no corresponding lower bound.
We first discuss the case that
\[
    \frac{\sum_{x \in X_r}\pi_G(s, x)}
         {\sum_{x \in X_{r-1}}\pi_G(s, x)}
    < \frac{1}{\Delta'},
\]
where the fixed constant $\Delta'$ satisfying
\begin{equation}\label{eq:def_delta_prime}
    1/(32\Delta)<1/\Delta' < 1/(16\Delta).
\end{equation}
In this regime, the total remaining PPR mass drops very rapidly. However, our strategy of exponentially adapting the threshold ensures that we do not overshoot the target excessively. Specifically, the fact that the immediately preceding candidate $\delta_r/\tau$ failed to satisfy the stopping condition implies that the mass drop is bounded from below, which allows us to derive an upper bound on the inverse threshold $1/\delta_r$.
More precisely, given that $\mathrm{Ratio} > 1/(2\Delta)$ holds during the execution with threshold $\delta_r/\tau$, we can employ a similar proof-by-contradiction argument as in the proof of the upper bound on the shrinkage. Specifically, we assume it holds with probability at least $1-n^{K-1}$:
\begin{equation}\label{eq:simple_by_contra2}
    \frac{\sum_{x \in X'_r}\pi_G(s, x)}
         {\sum_{x \in X_{r-1}}\pi_G(s, x)} < \frac{1}{8\Delta},
\end{equation}
where $X'_r$ is the result obtained by adopting the threshold $\delta_r/\tau$.  In this setting, $\sum_{x \in V'_r}\pi_G(s, x)$ is large enough. Then, setting $V_r', H', N_r'$ to be the variables with threshold $\delta_r/\tau$,  we have 
$$\sum_{x \in V'_r}\pi_G(s, x)/\texttt{SUM}(\mathbf{\hat{S}_r})  \geq (1-1/(8\Delta))\sum_{x \in X_{r-1}}\pi_G(s, x)/\texttt{SUM}(\mathbf{\hat{S}_r}) \in \Theta(1).
$$
By standard Chernoff's bound, we obtain that
$$\pr\left[|\frac{\sum_{x\in V'_{r}} H'[x]}{N'_r} - \sum_{x\in V'_{r}}\sum_{u\in \boundary{X}}\mathbf{\hat{\bar{S}}_r}\,\pi_{G_X}(u, x)| \geq \frac{1}{n^{1/4}}\right] \leq \frac{1}{n^K}.
$$
Combining above arguments, it then holds with probability at least $1-2/n^K$ that 
\begin{align*}
    &\quad\;\frac{\sum_{x \in X'_r}\pi_G(s, x)}
         {\sum_{x \in X_{r-1}}\pi_G(s, x)}  \\
         \bigl(X'_r \sqcup V'_{r} = X_{r-1}\bigr)&= 1 - \frac{\sum_{x \in V'_r}\pi_G(s, x)}
         {\sum_{x \in X_{r-1}}\pi_G(s, x)}\\
    \bigl(n \text{ is sufficiently large})&\geq 1 - \frac{1}{1-2c}\left(
        1 - \frac{\sum_{x\in V'_r} H'[x]}
                 {\sum_{x\in X_{r-1}} H'[x]}
    \right)\\
    (c \leq 1/(8\Delta),\ \Delta>1)&\geq 1 -\frac{1}{1-2c}\left(1 - \frac{1}{2\Delta}\right)\geq \frac{1}{8\Delta}.\label{eq:lower_shrink}
\end{align*}
This establishes that the coarser threshold $\delta_r/\tau$ would keep a significant remaining total score. Since the selected threshold $\delta_r$ satisfies the fast-shrink assumption (implying $\sum_{x \in X_r}\pi_G(s, x)$ is small), the probability mass captured in the incremental set $V_r \setminus V'_r = X'_r \setminus X_r$ must be substantial.

Notice that for those $v \in V_r \setminus V_r'$, i.e. nodes that have not yet been selected by the
threshold $\delta_r/\tau$ but selected when using $\delta_r$, together with the accuracy argument like in Eq. \eqref{eq:appendix_eval_err}, we have 
\[
\pi_G(s,v) \leq \frac{\delta_r}{\tau(1-c)}\texttt{SUM}(\mathbf{\hat{S}_r})
\leq \frac{\delta_r}{\tau(1-c)\alpha(1-\alpha)}\sum_{x\in X_{r-1}}\pi_G(s, x),
\]
by Lemma~\ref{lemma:one_round}. Hence we can derive that 
\begin{align*}
|V_r|\cdot \frac{\delta_r}{\tau(1-c)\alpha(1-\alpha)}
   \sum_{x \in X_{r-1}}\pi_G(s, x)
  &\geq \sum_{x \in X'_r}\pi_G(s, x)
         - \sum_{x \in X_r}\pi_G(s, x).
\end{align*}

The right side of the above equation is exactly the PPR sum in
\(V_r \setminus V_r' = X'_r \setminus X_r\). By applying the two previous
estimations, we then obtain that

\begin{align*}
|V_r|\cdot \frac{\delta_r}{\tau(1-c)\alpha(1-\alpha)}
   \sum_{x \in X_{r-1}}\pi_G(s, x)
  &\geq  \left(\frac{1}{8\Delta} - \frac{1}{\Delta'}\right)
          \sum_{x \in X_{r-1}}\pi_G(s, x).
\end{align*}
Equivalently, we have
\[
 \frac{1}{\delta_r} \leq \frac{16\Delta}{\tau(1-c)\alpha(1-\alpha)}|V_r|.
\]
Therefore, the total cost over all fast-shrink rounds satisfies
\[
\sum_{r} \frac{1}{\delta_r} \in \bigo\Bigl(\sum_r |V_r|\Bigr) = \bigo(n).
\]

\mypara{Slow-shrink case.}
In this case we have
\[
    \frac{1}{\Delta} \geq \frac{\sum_{x \in X_r}\pi_G(s, x)}
         {\sum_{x \in X_{r-1}}\pi_G(s, x)}
    \geq \frac{1}{\Delta'}.
\]
This case can be bounded in terms of $|V_{r+1}|$. Formally, for $r<R$, since nodes with larger scores have been filtered in round $r$, at the end of round $r$, each node $x$ in $X_{r}$ has PPR score at most 
\[
\pi_G(s,x) \leq \frac{\delta_r}{1-c}\mathtt{SUM}(\mathbf{\hat{S}_r}) \leq
\frac{\delta_r}{(1-c)\alpha(1-\alpha)}\sum_{x\in X_{r-1}}\pi_G(s, x)
\]
by Lemma~\ref{lemma:one_round}. Then it holds that 
\begin{align*}
&\quad\;|V_{r+1}| \cdot \frac{\delta_r}{(1-c)\alpha(1-\alpha)}\sum_{x\in X_{r-1}}\pi_G(s, x) \\
&\geq \sum_{x\in V_{r+1}}\pi_G(s, x)\\
&=\sum_{x\in X_{r}}\pi_G(s, x) -  \sum_{x\in X_{r+1}}\pi_G(s, x)\\
\end{align*}
Since in round $r+1$,
 $\sum_{x\in X_{r+1}}\pi_G(s, x)\big/\sum_{x\in X_{r}}\pi_G(s, x) \leq \tfrac{1}{\Delta}$, it then follows that
\begin{align*}
&\quad\;\sum_{x\in X_{r}}\pi_G(s, x) -  \sum_{x\in X_{r+1}}\pi_G(s, x)\\
&\geq \left(1-\frac{1}{\Delta}\right) \sum_{x\in X_{r}}\pi_G(s, x)
\\
&\geq \frac{1}{\Delta'}\left(1-\frac{1}{\Delta}\right)\sum_{x\in X_{r-1}}\pi_G(s, x).
\end{align*}
Equivalently, we have
\[
 \frac{1}{\delta_r} \leq \frac{32\Delta}{(1-c)\alpha(1-\alpha)\left(1-\frac{1}{\Delta}\right)}|V_{r+1}|.
\]
Therefore, the total cost among slow-shrink rounds satisfies
\[
\sum_{r<R} \frac{1}{\delta_r} \in \bigo\Bigl(\sum_r |V_r|\Bigr) = \bigo(n).
\]
Finally, we consider the case where the last round (the last round before early stop round) is a \textit{slow-shrink round}. We only
need to ensure that $1/\delta_R \in \bigo(n)$. Similarly for any $v\in X_R$ we have that
\[
\pi_G(s, v) \leq \frac{1+c}{(1-\alpha)\alpha}\delta_R\sum_{x\in X_{R-1}}\pi_G(s, x).
\]
Hence, it holds that 
\[
\sum_{x\in X_{R}}\pi_G(s, x)
\leq  |X_R|\frac{1+c}{(1-\alpha)\alpha}\delta_R\sum_{x\in X_{R-1}}\pi_G(s, x).
\]
Comparing with our assumption
\[
\frac{\sum_{x \in X_R}\pi_G(s, x)}
         {\sum_{x \in X_{R-1}}\pi_G(s, x)}
    \geq \frac{1}{\Delta'},
\]
we obtain $$\frac{1+c}{(1-\alpha)\alpha}\delta_R|X_R| \geq 1/\Delta'.$$ 
Therefore,
$1/\delta_R \in \bigo(|X_R|) \in \bigo(n)$ holds and we complete the proof.
\qed

\subsection{Global Error Analysis}
In the previous subsection we refined the complexity analysis by reducing the total sampling factor
\(\sum_{r=1}^{R} 1/\delta_r\) from \(nR\) to \(n\), leading to a $\bigo(m+n\log(n)R)$ bound as expected. However, we have not shown the accuracy and correctness of our algorithm.

In this subsection we revisit the accuracy guarantees and, from a global perspective, show how to control the accumulated estimation error while setting $c_r \in \Omega(1/\sqrt{R})$ rather than the direct $c_r \in \Omega(1/R)$ setting in the original \algor.
In short, our goal is to show that for all nodes \(t\) with \(\pi_G(s,t)\ge \delta\) we can ensure
\[
  \bigl|\hat\pi_G(s,t)/\pi_G(s,t) - 1\bigr| \;\le\; c
\]
with high probability, while using parameters \(c_r \in \Omega(1/\sqrt{R})\) in each round \(r\).

Since the final error aggregates contributions from all rounds, we decompose the total error
\(\hat \pi_G(s,t) - \pi_G(s,t)\) into a sequence of sub-Gaussian errors, i.e., random errors with Gaussian-type tail decay.
We carefully bound the tail of each error term by conditioning on high-probability “good’’ events.
Moreover, although the per-round errors may be dependent, they admit uniform bounds conditional on the past, so we can organize them into a martingale-difference sequence. Intuitively, it makes our total error behaves like the sum of $R$ independent Gaussian noise terms with standard deviation $\bigo(1/\sqrt{\log(1/p_f)R})$, leading to a total deviation $\bigo(1/\sqrt{\log(1/p_f)})$.
These are standard tools for establishing concentration in multi-round randomized algorithms, and we keep the presentation self-contained.

\mypara{Sub-Gaussian random variables.}
We first recall a basic notion of sub-Gaussian random variables.
\begin{definition}\label{def:subG}
    A real-valued random variable \(D\) is \emph{sub-Gaussian with parameter} \(\sigma > 0\) if
\begin{equation}\label{eq:subgaussian-def}
   \pr\bigl(|D| \geq  a\bigr) \;\leq\; 2\exp\bigl(-a^2/\sigma^2\bigr)
   \qquad\text{for all } a>0.
\end{equation}
Our definition does not require \(D\) to be centered.
\end{definition}
It is well known that convex combinations of sub-Gaussian random variables are again sub-Gaussian; see, e.g., \cite[Section~2.5]{vershynin2018high}.

\begin{fact}\label{fact:subgaussian-convex-combination}
There exists an absolute constant \(C_1 > 0\) such that the following holds.
Let \(D_i\) be sub-Gaussian random variables with parameters \(\sigma_i\), and let weights \(w_i \geq 0\) satisfy \(\sum_i w_i = 1\).
Then the weighted sum \(\sum_i w_i D_i\) is sub-Gaussian with parameter at most \(C_1 \max_i \sigma_i\).
\end{fact}

Although our definition does not require sub-Gaussian random variables to be centered, the next lemma shows that centering preserves sub-Gaussianity up to absolute constants.

\begin{lemma}[cf.\ Lemma~2.6.8 in~\cite{vershynin2018high}]\label{lemma:centering-subgaussian}
There exists an absolute constant \(C_2 > 0\) such that if \(D\) is sub-Gaussian with parameter \(\sigma\), then \(D - \E[D]\) is also sub-Gaussian with parameter at most \(C_2 \sigma\).
\end{lemma}

\mypara{Martingales and an Azuma--Hoeffding type inequality.}
As explained at the beginning of this subsection, the total error can be written as a sum
$
  \sum_{r=1}^R D_r,
$
where the random variable \(D_r\) represents the contribution of round \(r\).
Each \(D_r\) may depend on the randomness revealed in previous rounds, such as the estimated subset of nodes \(\boundary{X}\), but, conditional on the past, it satisfies a uniform sub-Gaussian tail bound.

To formalize this, we reveal the randomness round by round and let \(\mathrm{All}_{r-1}\) denote all information available before the estimation in round \(r\).
For instance, \(\mathrm{All}_{r-1}\) determines the target set \(V_r\) and the current estimates \(\hat\pi_G(s,t)\) for all \(t \in\boundary{X}_{r-1}\).
Formally, the random variable sequence \((\mathrm{All}_r)_{r=0}^R\) plays the role of a filtration \((\mathcal{F}_r)_{r=0}^R\), but we avoid measure-theoretic notation and simply write \(\mathrm{All}_r\).

\begin{definition}[Martingale-difference sequence]\label{def:mds}
We say that \((D_r)_{r=1}^R\) is a \emph{martingale-difference sequence adapted to} \((\mathrm{All}_r)_{r=0}^R\) if, for every \(1 \le r \le R\):
\begin{enumerate}[label=(\roman*).]
  \item \(\mathrm{All}_r\) determines \(D_r\) (that is, \(D_r\) is a function of \(\mathrm{All}_r\));
  \item \(\E\bigl[D_r \,\big|\, \mathrm{All}_0, \dots, \mathrm{All}_{r-1}\bigr] = 0\);
\end{enumerate}
\end{definition}

In this language, the sequence of error terms \((D_r)_{r=1}^R\) that we obtain from our algorithm will form a martingale-difference sequence adapted to \((\mathrm{All}_r)_{r=0}^R\), which allows us to apply a standard Azuma--Hoeffding type inequality. See \cite{shamir2011variant} for an elementary proof.

\begin{lemma}[Azuma for sub-Gaussian increments]\label{lemma:azuma-subgaussian}
There exists an absolute constant \(C_3 > 0\) with the following property.
Let \((D_r)_{r=1}^R\) be a martingale-difference sequence adapted to \((\mathrm{All}_r)_{r=0}^R\).
Suppose that for each \(1 \le r \le R\) and all \(a>0\),
\[
  \pr\bigl(|D_r| \geq  a \,\big|\, \mathrm{All}_0, \dots, \mathrm{All}_{r-1}\bigr)
    \;\le\; 2\exp\bigl(-a^2/\sigma^2\bigr).
\]
Then, 
for every \(q \in (0,1)\), with probability at least \(1-q\),
\[
  \Bigl|\sum_{r=1}^R D_r\Bigr|
    \;\le\; \sqrt{\,C_3 R\sigma^2 \log(1/q)}.
\]
\end{lemma}

This lemma is the only martingale tool we will need: it shows that the total error behaves like the sum of \(R\) sub-Gaussian increments, leading to a global error bound that scales as \(\sqrt{R}\) rather than \(R\).

\mypara{Notations and the estimation sequence $\hat\pi_G^{(r)}(s,t)$.}
Recall that $\mathrm{All}_{r-1}$ denotes all information available before the estimation step in round~$r$, i.e., the internal state of our algorithm before walking (cf.\ Line $13$ in Algorithm\ref{algo:dist_walk_adjusted}).
In particular, the collection $(\mathrm{All}_0,\dots,\mathrm{All}_{r-1})$ determines the following data:
\begin{enumerate*}[label=(\roman*)]
  \item the set of not-yet-estimated nodes $X_{r-1}$, and the subset $V_r \subseteq X_{r-1}$ to be estimated in round~$r$;
  \item the parameters $c_r, \delta_r, p_r$;
  \item the previous estimates $\hat\pi_G(s,t)$ for all $t \in X \setminus X_{r-1}$;
  \item in particular, the current sample set $V_r$ and the induced subgraph $G_{X_r}$.
\end{enumerate*}

We now define a sequence of estimators $\hat\pi_G^{(r)}(s,t)$ that is convenient for the global error analysis, as it explicitly tracks how the estimate at each node $t$ evolves across rounds.
Conditioned on $\mathrm{All}_0,\dots,\mathrm{All}_{r-1}$, the “raw’’ estimator $\hat\pi_G(s,t)$ satisfies
\[
  \hat\pi_G(s,t) \;=\;
  \begin{cases}
    \text{not yet defined}, & t \in X_{r-1},\\[2pt]
    \text{fixed (already estimated)}, & t \in V \setminus X_{r-1}.
  \end{cases}
\]
Thus, to obtain a globally defined estimator at the end of round $r-1$, we extend $\hat\pi_G(s,t)$ to all $t \in V$ by setting
\begin{equation}\label{eq:def_coherent_estimation}
  \hat\pi_G^{(r-1)}(s,t) \;=\;
  \begin{cases}
    \displaystyle
    \sum_{u \in X \setminus X_r}
      \frac{\hat\pi_G(s,u)\,\pi_{G_X}(u,t)\,d^{G_X}_{\mathrm{out}}(u)}
           {\alpha\,d^{G}_{\mathrm{out}}(u)},
      & t \in X_{r-1},\\[10pt]
    \hat\pi_G(s,t), & t \in V \setminus X_{r-1}.
  \end{cases}
\end{equation}
We also set $\hat\pi_G^{(0)}(s,t) := \pi_G(s,t)$.
Heuristically, $\hat\pi_G^{(r-1)}(s,t)$ can be viewed as the mean of a random-walk–based estimator of $\pi_G(s,t)$ that uses the current PPR estimates.
To avoid confusion, from now on we work exclusively with the coherent sequence $(\hat\pi_G^{(r)}(s,t))_{r=0}^R$ and no longer use the bare notation $\hat\pi_G(s,t)$. We remark that although in this definition, we use $u\in X\setminus X_r$ rather than $\boundary{X_r}$ in Lemma \ref{lemma:newdecom}. This distinction is purely stylistic, since $\pi_{G_X}(u,t)=0$ (i.e. there is no path from $u$ to $t$ in $G_X$). Note that for every $u \in X \setminus X_r$ we have $\hat\pi_G(s,u) = \hat\pi_G^{(r-1)}(s,u)$, so the first case in~\eqref{eq:def_coherent_estimation} can be rewritten as
\begin{equation}\label{eq:def_coherent_estimation_1}
  \hat\pi_G^{(r-1)}(s,t)
  \;=\;
  \sum_{u \in X \setminus X_r}
    \frac{\hat\pi_G^{(r-1)}(s,u)\,\pi_{G_X}(u,t)\,d^{G_X}_{\mathrm{out}}(u)}
         {\alpha\,d^{G}_{\mathrm{out}}(u)},
  \qquad t \in X_{r-1}.
\end{equation}
Consequently, $(\mathrm{All}_0,\dots,\mathrm{All}_{r-1})$ determines $\hat\pi_G^{(r-1)}(s,t)$ for all $t \in V$.
As $r$ increases, the estimation error from earlier rounds, encoded in the values $\hat\pi_G^{(r-1)}(s,u)$, propagates to the nodes $t \in X_{r-1}$ through the recursion~\eqref{eq:def_coherent_estimation_1}.
In the next subsection we will see that the resulting error sequence fits automatically into our martingale framework.

\mypara{Per-round sampling error as a martingale difference.}
Fix a node $t \in V$.
We now study how the coherent estimator $\hat\pi_G^{(r)}(s,t)$ evolves across rounds and show that its per-round update forms a martingale-difference sequence (we will derive sub-Gaussian tail bounds for these increments in the next subsection). Without loss of generality, we assume that for all $ u\in V, \pi_G(s, u) >0$, or there is a path from $s$ to $u$.
Specifically, we define
\begin{equation}\label{eq:def_diff_free}
  D_r(t)
  \;:=\;
  \frac{\hat\pi_G^{(r)}(s,t) - \hat\pi_G^{(r-1)}(s,t)}{\pi_G(s,t)},
  \qquad 1 \le r \le R.
\end{equation}
To prove that $\bigl(D_r(t)\bigr)_{r=1}^R$ is a martingale-difference sequence adapted to
$\bigl(\mathrm{All}_r\bigr)_{r=0}^R$, it suffices to show
\[
\E\bigl[D_r(t)\,\big|\,\mathrm{All}_0,\dots,\mathrm{All}_{r-1}\bigr] = 0
  \qquad\text{for all } 1 \le r \le R.
\]
We first treat the case $t \in V_r$, i.e., $t$ is newly estimated in round~$r$, and then reduce the general case to this one.

\mypara{Newly Estimated Nodes.}
In round~$r$, \newalgorr{} launches $N_r$ independent $\alpha$-random walks on $G_{X_{r-1}}$, where the start node of each walk is drawn from the distribution $\hat{\bar S}$.
Conditioned on $\mathrm{All}_0,\dots,\mathrm{All}_{r-1}$, the probability that a single round-$r$ walk ends at $t$ is
\begin{equation*}
\begin{aligned}
  \pr\bigl[\text{a single round-$r$ walk ends at $t$}\,\bigm|\mathrm{All}_0,\dots,\mathrm{All}_{r-1}\bigr]
  &= \sum_{u \in X \setminus X_r} \mathbf{\hat{\bar S}}_r[u] \,\pi_{G_{X_{r-1}}}(u,t)\\
  &= \texttt{SUM}(\mathbf{\hat S}_r)^{-1} \,\hat\pi_G^{(r-1)}(s,t),
\end{aligned}
\end{equation*}
where we used the definition of $\hat{\bar S}$ and~\eqref{eq:def_coherent_estimation_1} for round $r-1$.

Let $H[t]$ denote the number of round-$r$ walks that end at node $t$.
Conditioned on $\mathrm{All}_0,\dots,\mathrm{All}_{r-1}$, we have
\[
  H[t] \sim \mathrm{Binomial}\Bigl(N_r,\,
    \texttt{SUM}(\mathbf{\hat S}_r)^{-1}\,\hat\pi_G^{(r-1)}(s,t)\Bigr).
\]
As in the algorithm, the new estimate of $\hat\pi_G(s,t)$ in round~$r$ is
\[
  \hat\pi_G^{(r)}(s,t)
  \;=\;
  \frac{H[t]}{N_r}\cdot \texttt{SUM}(\mathbf{\hat S}_r),
  \qquad t \in V_r.
\]
Therefore,
\begin{equation*}
\begin{aligned}
  \E\bigl[\hat\pi_G^{(r)}(s,t)\,\big|\,\mathrm{All}_0,\dots,\mathrm{All}_{r-1}\bigr]
  &= \frac{\texttt{SUM}(\mathbf{\hat S}_r)}{N_r}\,
     \E\bigl[H[t]\,\big|\,\mathrm{All}_0,\dots,\mathrm{All}_{r-1}\bigr]\\
  &= \frac{\texttt{SUM}(\mathbf{\hat S}_r)}{N_r}\,
     N_r \cdot
     \texttt{SUM}(\mathbf{\hat S}_r)^{-1}\,\hat\pi_G^{(r-1)}(s,t)\\
  &= \hat\pi_G^{(r-1)}(s,t).
\end{aligned}
\end{equation*}
Dividing by $\pi_G(s,t)$, we obtain
\[
  \E\bigl[D_r(t)\,\big|\,\mathrm{All}_0,\dots,\mathrm{All}_{r-1}\bigr]
  \;=\;
  \frac{1}{\pi_G(s,t)}
  \E\bigl[\hat\pi_G^{(r)}(s,t) - \hat\pi_G^{(r-1)}(s,t)\,\big|\,\mathrm{All}_0,\dots,\mathrm{All}_{r-1}\bigr]
  \;=\; 0,
\]
as desired.

\mypara{Previously Estimated or not Yet Estimated Nodes.}
If $t \in V \setminus X_{r-1}$, then $t$ has already been estimated before round~$r$, and by case~(ii) in~\eqref{eq:def_coherent_estimation} we have
\[
  \hat\pi_G^{(r)}(s,t) = \hat\pi_G^{(r-1)}(s,t),
\]
so $D_r(t) = 0$ deterministically and hence
\(
  \E\bigl[D_r(t)\,\big|\,\mathrm{All}_0,\dots,\mathrm{All}_{r-1}\bigr] = 0.
\)

It remains to consider $t \in X_{r-1}$.
Using the recursion~\eqref{eq:def_coherent_estimation_1} at rounds $r$ and $r-1$ and subtracting, we obtain
\begin{equation*}
\begin{aligned}
  \hat\pi_G^{(r)}(s,t) - \hat\pi_G^{(r-1)}(s,t)
  \;=\;
  \sum_{u \in X \setminus X_r}
    \frac{\pi_{G_X}(u,t)\,d^{G_X}_{\mathrm{out}}(u)}
         {\alpha\,d^{G}_{\mathrm{out}}(u)}
    \bigl(\hat\pi_G^{(r)}(s,u) - \hat\pi_G^{(r-1)}(s,u)\bigr).
\end{aligned}
\end{equation*}
We now have obtained that
\[
  \hat\pi_G^{(r)}(s,u) = \hat\pi_G^{(r-1)}(s,u)
  \qquad\text{for all } u \in X \setminus X_{r-1},
\]
and $X_{r-1} = X_r \sqcup V_r$ is a disjoint union.
Hence the above sum reduces to
\begin{equation}\label{eq:Ir_to_generalnode_reduced}
  \hat\pi_G^{(r)}(s,t) - \hat\pi_G^{(r-1)}(s,t)
  \;=\;
  \sum_{u \in V_r}
    \frac{\pi_{G_X}(u,t)\,d^{G_X}_{\mathrm{out}}(u)}
         {\alpha\,d^{G}_{\mathrm{out}}(u)}
    \bigl(\hat\pi_G^{(r)}(s,u) - \hat\pi_G^{(r-1)}(s,u)\bigr),
  \qquad t \in X_{r-1}.
\end{equation}
Using the definition of $D_r(u)$ from~\eqref{eq:def_diff_free}, we can rewrite~\eqref{eq:Ir_to_generalnode_reduced} as
\begin{equation}\label{eq:Ir_to_generalnode}
\begin{aligned}
  D_r(t)
  &= \frac{1}{\pi_G(s,t)}
     \sum_{u \in V_r}
       \frac{\pi_{G_X}(u,t)\,d^{G_X}_{\mathrm{out}}(u)}
            {\alpha\,d^{G}_{\mathrm{out}}(u)}
       \bigl(\hat\pi_G^{(r)}(s,u) - \hat\pi_G^{(r-1)}(s,u)\bigr)\\
  &= \sum_{u \in V_r}
       w_{r,t}(u)\,D_r(u),
\end{aligned}
\end{equation}
where
\[
  w_{r,t}(u)
  \;:=\;
  \frac{\pi_{G_X}(u,t)\,d^{G_X}_{\mathrm{out}}(u)\,\pi_G(s,u)}
       {\alpha\,d^{G}_{\mathrm{out}}(u)\,\pi_G(s,t)}.
\]
By the corresponding identity for the true PPR values $\pi_G(s,\cdot)$ in Lemma \ref{lemma:newdecom}, we have $w_{r,t}(u) \ge 0$ and
\[
  \sum_{u \in V_r} w_{r,t}(u) \;=\; 1,
\]
with the remaining mass accounted for by nodes in $X \setminus X_{r-1}$, whose increments vanish.
In particular, the coefficients $w_{r,t}(u)$ are deterministic functions of $\mathrm{All}_0,\dots,\mathrm{All}_{r-1}$. Taking conditional expectations in~\eqref{eq:Ir_to_generalnode} and using Case~1, we get
\begin{equation}\label{eq:centered_noise}
\begin{aligned}
  \E\bigl[D_r(t)\,\big|\,\mathrm{All}_0,\dots,\mathrm{All}_{r-1}\bigr]
  &= \sum_{u \in V_r}
       w_{r,t}(u)\,
       \E\bigl[D_r(u)\,\big|\,\mathrm{All}_0,\dots,\mathrm{All}_{r-1}\bigr] = 0.
\end{aligned}
\end{equation}
Combining the two cases, we conclude that for every fixed node $t \in V$, the sequence
$\bigl(D_r(t)\bigr)_{r=1}^R$ is a martingale-difference sequence adapted to
$\bigl(\mathrm{All}_r\bigr)_{r=0}^R$.
In the next subsection we will establish sub-Gaussian tail bounds for these increments and apply Lemma~\ref{lemma:azuma-subgaussian} to control the global error.
\subsection{Establish the Main Theorem}
We are now ready to prove Theorem~\ref{thm:ssppr-r-upper-refined}.
Recall Lemma~\ref{lemma:algorrr_raw}, which shows that the running time
of \newalgor{} is
\[
  \bigo\!\left(m + \sum_{r=2}^{R}\frac{\log(n/p_r)}{c_r^2\delta_r}\right)
  = \bigo\!\left(m + R\log n\cdot \sum_{r=2}^{R}\frac{1}{\delta_r}\right).
\]
In Claim~\ref{claim:thre_sum} we proved that, as long as the PPR
estimates for the already-estimated nodes remain relatively accurate,
the sum of inverse thresholds satisfies
$\sum_{r=2}^{R} 1/\delta_r \in \bigo(n)$ with high probability.
Consequently, once we ensure that this accuracy condition holds in every
round, Lemma~\ref{lemma:algorrr_raw} and Claim~\ref{claim:thre_sum}
imply the overall complexity bound of 
\[
  \bigo\!\left(m + n\log n\,R\right)
\]
for any choice of per-round parameters $c_r \in \Omega(1/\sqrt{R})$.

Within a single round $r$, our adaptive procedure may try several
candidate thresholds $\delta_r$ along a geometric sequence.
However, only the \emph{final} accepted threshold in round $r$
affects the subsequent rounds.
Thus it suffices to guarantee the desired accuracy for the estimator
associated with the last (accepted) threshold in each round.

With this in mind, we fix absolute constants $K_{\mathrm{RW}},
K_{\mathrm{err}}, K_{\mathrm{prob}}$ and choose parameters
$c_r \in \Omega(1/\sqrt{R})$ and $p_r = 1/\mathrm{poly}(n)$ so that,
with high probability, for every round $r \le R$ we have
\begin{equation}\label{eq:prob_target}
\begin{aligned}
  &\forall t \in V,\quad
   \left|\frac{\hat\pi_G^{(r)}(s,t)}{\pi_G(s,t)} - 1 \right|
   \leq \frac{c}{2},\\
  &\text{equivalently, } \forall t \in V,\quad
   \left|\sum_{k=1}^{r} D_k(t)\right| \leq \frac{c}{2},
\end{aligned}
\end{equation}
where $D_k(t)$ denotes the normalized error increment in round $k$
(see~\eqref{eq:def_diff_free}).

\mypara{Good events}
Note that the guarantee in~\eqref{eq:prob_target} is stated in terms of
the number of rounds $R$, rather than the final threshold $\delta$ of
the SSPPR-R query.
In the proof, we refer to the event that~\eqref{eq:prob_target} holds
(together with several additional technical conditions) as the
\emph{good event} up to round~$r$.
On this good event we will show that each per-round increment $D_r(t)$
is sub-Gaussian, which allows us to apply the Azuma-type inequality for
sub-Gaussian martingale differences (Lemma~\ref{lemma:azuma-subgaussian}).

We now formalize the good event via two requirements.

\begin{lemma}\label{lemma:good-events}
There exist absolute constants
$K_{\mathrm{RW}}, K_{\mathrm{err}}, K_{\mathrm{prob}}, \lambda > 0$
such that the following holds.
Assume that $n$ is sufficiently large, and in round $r$ we choose parameters (as in Line $10$ of \newalgor{}): 
\[
  N_r \;=\; K_{\mathrm{RW}}\,
    \frac{\log(1/p_r)}{c_r^2\delta_r},\qquad
  c_r \;=\; \frac{1}{K_{\mathrm{err}}\sqrt{R}},\qquad
  p_r \;=\; \frac{1}{n^{K_{\mathrm{prob}}}}.
\]
Then, for any graph $G$ with $n$ nodes and $m$ edges, there exists a sequence of events $A_0 \subset A_1 \subset \cdots \subset A_R$. 
Each event $A_r$ is defined on $\mathrm{All}_0, \dots, \mathrm{All}_r$ for an execution of \newalgor. $A$ implies that for each $r_1 \leq r$:
\begin{enumerate}[label=(\roman*).]
  \item For any $ t \in V$, it holds that
  \[\left|\sum_{k=1}^{r_1} D_{k}(t)\right|\;\leq\; \frac{c}{2}.\]
  \item For any $ t \in I_{r_1}$,
    \[
      \frac{\hat\pi_G^{(r_1-1)}(s,t)}{\texttt{SUM}(\mathbf{\hat{S}_{r_1}})}
      \;\geq \; \frac{\delta_{r_1}}{\lambda}.
    \]
    \item $\pr[A_{r}] \geq 1-\frac{r_1}{n^2}$.
\end{enumerate}
\end{lemma}
We briefly explain these three conditions.
Condition~(i) is precisely the global accuracy guarantee
in~\eqref{eq:prob_target} for all estimates up to round~$r$, and condition (iii) indicates that this happens with high probability.
\mypara{Condition (ii) for Good Event.}
Condition~(ii) encodes the success of the subroutine $\algoa$ used in
the adaptive threshold search.
In our algorithm, we invoke $\algoa$ at each attempt within round~$r$
for every candidate threshold $\delta_r$.
In a single call, $\algoa$ repeatedly samples nodes independently, where
node $t$ is selected with probability
\[
  \frac{\hat\pi_G^{(r-1)}(s,t)}{\texttt{SUM}(\mathbf{\hat{S}_r})},
\]
and checks whether $t$ belongs to $X_{\mathrm{out}}$.
By Lemma~\ref{lemma:discover}, if we set $p_r = 1/\mathrm{poly}(n)$,
then with probability at least $1-p_r$ the returned set
$X_{\mathrm{out}}$ contains no nodes with too small PPR scores. In particular, $\forall t\in I_r$
\begin{equation}\label{eq:good_algoa}
  \frac{\hat\pi_G^{(r-1)}(s,t)}{\texttt{SUM}(\mathbf{\hat{S}_r})}
    \;\ge\; \frac{\delta_r }{\lambda_0}
\end{equation}
for some absolute constant $\lambda_0 > 1$. Indeed, we may take $\lambda_0 = 5e/(1-c)$. Note that the factor $(1-c)$ arises from the threshold $(1-c)\delta_r$ used in \newalgorr's call to \algoa; this choice ensures that $\pi_G(s,t) \ge \delta_r$ for all $t \in I_r$ in Lemma~\ref{lemma:algorrr_raw}.

Note that $\algoa$ depends only on the intermediate scores
$\hat\pi_G^{(r)}(s,t)$ and therefore operates correctly even when
$\hat\pi_G^{(r)}(s,t)$ is still far from the true value $\pi_G(s,t)$. Each invocation of $\algoa$ fails with probability at most $p_r$, and
there are at most $\bigo(R\log n)$ such invocations over the entire run of
the algorithm (one for each candidate threshold in each round). Thus the total failure probability is at most $\bigo(p_r R\log n)$.
By choosing $K_{\mathrm{prob}}$ sufficiently large and applying a union
bound, we may condition on the event that \emph{all} calls to $\algoa$
succeed, and include this as part of our good event $A_r$ for all
$r \le R$. Under this conditioning, condition (ii) holds automatically throughout the analysis.

\mypara{Truncating the tail of $D_r$.}
We now construct, for each round $r$, a high-probability event on which
$D_r(t)$ is sub-Gaussian, and later combine these events inductively to
obtain Lemma~\ref{lemma:good-events}(i).
Throughout this subsection we fix a round $r$ and condition on
$\mathrm{All}_0,\dots,\mathrm{All}_{r-1}$.

Recall that for each $t \in V_r$ the estimator produced in round $r$ is
\[
  \hat\pi_G^{(r)}(s,t)
  \;=\;
  \frac{H[t]}{N_r}\cdot \texttt{SUM}(\mathbf{\hat S}),
  \quad \text{where }
  H[t] \sim \mathrm{Binomial}\Bigl(
    N_r,\,
    \texttt{SUM}(\mathbf{\hat S})^{-1}\hat\pi_G^{(r-1)}(s,t)
  \Bigr).
\]
By the standard Chernoff bound, for every $a \in (0,1]$ we have
\begin{equation}\label{eq:chernoff-one-round}
  \pr\Bigl[
    \Bigl|
      \frac{H[t]\;\texttt{SUM}(\mathbf{\hat S})}
           {N_r\hat\pi_G^{(r-1)}(s,t)}
      - 1
    \Bigr|
    \;\geq\; a
    \,\Bigm|\,\mathrm{All}_0,\dots,\mathrm{All}_{r-1}
  \Bigr]
  ~\le~
  2\exp\Bigl(
    -\frac{a^2 N_r \hat\pi_G^{(r-1)}(s,t)}
           {3\,\texttt{SUM}(\mathbf{\hat S})}
  \Bigr).
\end{equation}

We will repeatedly restrict random variables to a given event using its
indicator function.
For an event $A$, its indicator $\mathbf{1}_A$ is the random variable
\begin{equation}\label{eq:def_char_func}
  \mathbf{1}_A \;=\;
  \begin{cases}
    1, & \text{if } A \text{ occurs},\\
    0, & \text{otherwise}.
  \end{cases}
\end{equation}

Fix the good event $A_{r-1}$ from Lemma~\ref{lemma:good-events}, and
consider the truncated increment $D_r(t)\mathbf{1}_{A_{r-1}}$.
For $t \in V_r$ and any $a>0$ we have
\begin{align*}
   &\pr\Bigl[
      |D_r(t)\,\mathbf{1}_{A_{r-1}}| \,\geq\, a
      \,\Bigm|\,\mathrm{All}_0,\dots,\mathrm{All}_{r-1}
   \Bigr]\\
   &\qquad=
   \pr\Bigl[
     \Bigl|
       \frac{\hat\pi_G^{(r)}(s,t)}{\pi_G(s,t)}
       -\frac{\hat\pi_G^{(r-1)}(s,t)}{\pi_G(s,t)}
     \Bigr|
     \mathbf{1}_{A_{r-1}}
     \,\geq\, a
     \,\Bigm|\,\mathrm{All}_0,\dots,\mathrm{All}_{r-1}
   \Bigr]\\
   &\qquad\le
   \pr\Bigl[
     \Bigl|
       \frac{\hat\pi_G^{(r)}(s,t)}{\hat\pi_G^{(r-1)}(s,t)} - 1
     \Bigr|
     \,\geq\,
     a\,\frac{\pi_G(s,t)}{\hat\pi_G^{(r-1)}(s,t)}
     \,\Bigm|\,\mathrm{All}_0,\dots,\mathrm{All}_{r-1}
   \Bigr]\\
   &\qquad=
   \pr\Bigl[
     \Bigl|
       \frac{H[t]\;\texttt{SUM}(\mathbf{\hat S})}
            {N_r\hat\pi_G^{(r-1)}(s,t)}
       -1
     \Bigr|
     \,\geq\,
     a\,\frac{\pi_G(s,t)}{\hat\pi_G^{(r-1)}(s,t)}
     \,\Bigm|\,\mathrm{All}_0,\dots,\mathrm{All}_{r-1}
   \Bigr].
\end{align*}
Applying~\eqref{eq:chernoff-one-round} with
$a' := a\,\pi_G(s,t)/\hat\pi_G^{(r-1)}(s,t)$ yields
\begin{align*}
   \pr\Bigl[
     |D_r(t)\,\mathbf{1}_{A_{r-1}}| \,\geq\, a
     \,\Bigm|\,\mathrm{All}_0,\dots,\mathrm{All}_{r-1}
   \Bigr]
   &\le
   2\exp\Bigl(
     -\frac{a^2}{3\,\texttt{SUM}(\mathbf{\hat S})}
      \Bigl(\frac{\pi_G(s,t)}{\hat\pi_G^{(r-1)}(s,t)}\Bigr)^2
      N_r\hat\pi_G^{(r-1)}(s,t)
   \Bigr).
\end{align*}

We now substitute the lower bound on $N_r$ and exploit the properties of
the good event $A_{r-1}$.
Recall that
\[
  N_r \;\ge\;
  K_{RW}\,\frac{\log(1/p_r)}{c_r^2\delta_r},
\]
and that on $A_{r-1}$ we have, for all nodes in the current output set,
\[
  \frac{\pi_G(s,t)}{\texttt{SUM}(\mathbf{\hat S})}
  \;\ge\; \frac{\delta_r}{\lambda}
  \qquad\text{(Lemma~\ref{lemma:good-events}(ii))},
\]
while the relative error bound from Lemma~\ref{lemma:good-events}(i)
implies
\[
  \frac{\hat\pi_G^{(r-1)}(s,t)}{\pi_G(s,t)} \in [1-c,1+c].
\]
Combining these inequalities and using $\hat\pi_G^{(r-1)}(s,t)\le(1+c)\pi_G(s,t)$, we obtain, for all
$a \le 1-c$,
\begin{align*}
   &\quad\qquad\frac{a^2}{3\,\texttt{SUM}(\mathbf{\hat S})}
    \Bigl(\frac{\pi_G(s,t)}{\hat\pi_G^{(r-1)}(s,t)}\Bigr)^2
    N_r\hat\pi_G^{(r-1)}(s,t) \\
   &\qquad\ge
   \frac{a^2\,
   K_{RW}}{3\,\texttt{SUM}(\mathbf{\hat S})(1+c)}
   \frac{\log(1/p_r)}{c_r^2\delta_r}\,
   \pi_G(s,t)\\
   &\qquad\ge
   \frac{a^2\,
   K_{RW}}{3\,\texttt{SUM}(\mathbf{\hat S})(1+c)}
   \frac{\log(1/p_r)}{c_r^2\delta_r}\,
   \frac{\texttt{SUM}(\mathbf{\hat S})\delta_r}{\lambda}\\
   &\qquad=
   \frac{K_{RW}a^2}{3\lambda(1+c)}\cdot
   \frac{\log(1/p_r)}{c_r^2}.
\end{align*}
Hence, for all $a \le 1-c$,
\[
  \pr\Bigl[
    |D_r(t)\,\mathbf{1}_{A_{r-1}}| \,\geq\, a
    \,\Bigm|\,\mathrm{All}_0,\dots,\mathrm{All}_{r-1}
  \Bigr]
  \;\le\;
  2\exp\Bigl(
    -\frac{K_{RW}a^2}{3\lambda(1+c)}\cdot
     \frac{\log(1/p_r)}{c_r^2}
  \Bigr).
\]

To explicitly truncate the very rare large deviations, define
\begin{equation}\label{eq:Brt}
  B_r(t)
  := \Bigl\{
       \mathrm{All}_r :
       |D_r(t)\,\mathbf{1}_{A_{r-1}}| \le 1-c
     \Bigr\}.
\end{equation}
By the above tail bound with $a = 1-c$, we obtain that
\begin{align*}
  \pr\bigl[B_r(t)\cap A_{r-1}\bigr]
    &\ge
    \pr[A_{r-1}]
    - \pr\Bigl[
        |D_r(t)\,\mathbf{1}_{A_{r-1}}|
        \,\ge\,1-c
        \,\Bigm|\,\mathrm{All}_0,\dots,\mathrm{All}_{r-1}
      \Bigr]\\
    &\ge
    \pr[A_{r-1}]
    - 2\exp(-\frac{K_{RW}(1-c)^2}{3\lambda(1+c)}\cdot
      \frac{\log(1/p_r)}{c_r^2}).
\end{align*}
Moreover, for $a>0$ we have
\[
  \pr\Bigl[
    |D_r(t)\,\mathbf{1}_{A_{r-1}\cap B_r(t)}|
      \,\geq\, a
      \,\Bigm|\,\mathrm{All}_0,\dots,\mathrm{All}_{r-1}
  \Bigr]
  \;\le\;
  \begin{cases}
    2\exp\!\Bigl(
      -\dfrac{K_{RW}a^2}{3\lambda(1+c)}\cdot
       \dfrac{\log(1/p_r)}{c_r^2}
    \Bigr),
      & a \le 1-c,\\[4pt]
    0, & a > 1-c.
  \end{cases}
\]
Therefore, in the sense of Definition~\ref{def:subG}, for any event
$B \subseteq A_{r-1}\cap B_r(t)$, the truncated increment
$D_r(t)\mathbf{1}_B$ (conditioned on $\mathrm{All}_0,\dots,\mathrm{All}_{r-1}$)
is sub-Gaussian with parameter
\[
  \sigma_r
  \;:=\;
  c_r \cdot \sqrt{\frac{3\lambda(1+c)}{K_{RW}\log(1/p_r)}}.
\]

Finally, we extend this bound from $t\in V_r$ to all $t \in V$.
Recall from~\eqref{eq:Ir_to_generalnode} that for every
$t \in X_{r-1}$ we can write
\[
  D_r(t) \;=\; \sum_{u\in V_r} w_{r,t}(u)\,D_r(u),
\]
where $w_{r,t}(u)\ge 0$ and $\sum_{u\in V_r} w_{r,t}(u)\le 1$, and that
$D_r(t)=0$ for $t\notin X_{r-1}\cup V_r$.
Thus $D_r(t)$ is a convex combination of the increments $\{D_r(u)\}_{u\in V_r}$.
By Fact~\ref{fact:subgaussian-convex-combination}, for any event
\[
  B \;\subseteq\; A_{r-1} \cap \bigcap_{u\in V_r} B_r(u)
\]
and any node $t\in V$, the truncated increment
$D_r(t)\mathbf{1}_B$ (conditioned on $\mathrm{All}_0,\dots,\mathrm{All}_{r-1}$)
is sub-Gaussian with parameter
\begin{equation}\label{eq:one-round-subG}
  C_1\,\sigma_r
  \;=\;
  c_r \cdot \sqrt{\frac{3C_1^2\lambda}{K_{RW}\log(1/p_r)}}.
\end{equation}
This is the per-round sub-Gaussian bound we will use in the inductive
construction of the good events $A_r$.

\mypara{Construction of $A_r$.}
We now construct the good events $A_r$ inductively and bound their probabilities.

Fix $1 \le r \le R$ and assume that $A_{r_1}, r_1< r$ has already been defined. To prepare for an application of Lemma~\ref{lemma:azuma-subgaussian}, we first intersect all high-probability events obtained up to round~$r$ and set
\[
  A_r^\circ
  \;:=\;
  A_{r-1}
  \;\cap\;
  \bigcap_{r_1 \le r}\;\bigcap_{u \in V_{r_1}} B_{r_1}(u),
\]
where the events $B_{r_1}(u)$ are defined in~\eqref{eq:Brt}. For every $t \in V$ and every $r_1 \le r$, the truncated increment
$D_{r_1}(t)\mathbf{1}_{A_r^\circ}$ is, conditionally on
$\mathrm{All}_0,\dots,\mathrm{All}_{r_1-1}$, sub-Gaussian with parameter given in~\eqref{eq:one-round-subG}. 
We define $A_{r_1}^{\circ} \subset A_{r_1-1}$ for all $r_1< r$ analogously. To construct $A_r$, we need to bound the probability of $\pr[|\sum_{r_1\leq r} D_{r_1}(t)| \geq c/2] $ using the truncated $D_{r_1}(t)$. However, our truncation can introduce bias. For a fixed node $t \in V$ we now center these increments:
\[
  \bar D_{r_1}(t)
  \;:=\;
  D_{r_1}(t)\mathbf{1}_{A_{r_1}^\circ}
  \;-\;
  \mathbb{E}\!\left[
    D_{r_1}(t)\mathbf{1}_{A_{r_1}^\circ}
    \,\middle|\, \mathrm{All}_0,\dots,\mathrm{All}_{r_1-1}
  \right],
  \qquad r_1 \le r.
\]
Then $(\bar D_{r_1}(t))_{r_1 \le r}$ is a martingale-difference sequence adapted to $(\mathrm{All}_{r_1})_{ r_1 \leq r}$. By Lemma~\ref{lemma:centering-subgaussian} and~\eqref{eq:one-round-subG}, each $\bar D_{r_1}(t)$ is sub-Gaussian, conditionally on the past, with parameter
\[
  c_{r_1} \cdot
  \sqrt{\frac{3C_1^2 C_2^2 (1+c)\lambda}{K_{RW}\log(1/p_{r_1})}}.
\]

We now fix the absolute constants
\begin{equation}\label{eq:constant_values_logfac}
  K_{RW}
  \;=\;
  3(1+c)\max(1,C_1)^2\max(1,C_2)^2\max(1,C_3)^2\max(1,\lambda),
  \;
  K_{\mathrm{err}} \;=\; 16/c,
  \;
  K_{\mathrm{prob}} \;=\; 5.
\end{equation}
Using $c_{r_1} =1/(K_{\mathrm{err}}\sqrt{R})$ and $p_{r_1} = 1/n^{K_{\mathrm{prob}}}$, the above sub-Gaussian parameter is at most
$c/(16\sqrt{R})$ for all $r_1 \le r$. Applying Lemma~\ref{lemma:azuma-subgaussian} with $q = n^{-5}$, we have
\begin{equation}\label{eq:Azuma_centered_total}
  \pr\Biggl[ A_{r-1} \cap \{
    \Biggl|
      \sum_{r_1 \le r} \bar D_{r_1}(t)
    \Biggr|
    \geq \frac{c}{4}\}
  \Biggr]
  \;\le\; \frac{2}{n^5}.
\end{equation}
Similarly, since $1-c \geq c/4$, we can bound, for every $r_1 \le r$ and $u \in V_{r_1}$, the probability $\pr[A_{r-1} \cap \neg B_{r_1}(u)]$, by\[
\pr[A_{r-1} \cap \neg B_{r_1}(u)] \leq \pr[A_{r-1} \cap  \{
| \bar D_{r_1}(u)|
    \geq 1-c\}] \leq \frac{2}{n^5}
\]
It remains to bound the bias introduced by truncation. Specifically, we choose a round $r_1 \leq r$. Using~\eqref{eq:centered_noise} and the definition of $B_{r_1}(t)$, we have
\begin{align*}
  \left|
    \mathbb{E}\!\left[
      D_{r_1}(t)\mathbf{1}_{A_{r_1}^\circ}
      \,\middle|\,\mathrm{All}_0,\dots,\mathrm{All}_{r_1-1}
    \right]
  \right|
  &=
  \left|
    \mathbb{E}\!\left[
      D_{r_1}(t)\mathbf{1}_{\neg A_{r_1}^\circ}
      \,\middle|\,\mathrm{All}_0,\dots,\mathrm{All}_{r_1-1}
    \right]
  \right|\\
  &\leq \pr[\neg A_{r_1}^\circ ](1+c) +   \left|
    \mathbb{E}\!\left[
      D_{r_1}(t)\mathbf{1}_{\neg A_{r_1}^\circ} \mathbf{1}_{|D_{r_1}(t)| \geq 1+c}
      \,\middle|\,\mathrm{All}_0,\dots,\mathrm{All}_{r_1-1}
    \right]
  \right|.
\end{align*}
Now we split the bias into two parts by $|D_{r_1}(t)|$. The first part is well-controlled with $ \pr[\neg A_{r_1}^\circ ]$, whereas for the second part, we have to use a Chernoff bound for large deviations (~(3) in Fact \ref{fact:chernoff}). Specifically, conditioned on earlier events in round $r_1$, we have
\[
\pr[H[v] \geq (1+a)\mathbb E[H[v]] ]\leq \exp(-\frac{aE[H[v]] }{3}),\forall a\geq 1.
\]
With an analogous argument in the small-deviation case, we have for all $  a \geq 1+c \geq \frac{\hat\pi_G^{(r)}(s,t)}{\pi_G(s,t)}$,

\begin{align*}
  \pr\Bigl[
     |D_{r_1}(t)\,\mathbf{1}_{\neg A_{r_1}^\circ} \mathbf{1}_{|D_{r_1}(t)|\geq 1+c}
      \,\geq\, a
      \,\Bigm|\,\mathrm{All}_0,\dots,\mathrm{All}_{r_1-1}
  \Bigr]&
  \;\le\; \exp\Bigl(
    -\frac{K_{RW}a}{3\lambda}\cdot
     \frac{\log(1/p_r)}{c_r^2}
  \Bigr)\leq  n^{-5 a}.\\
  \end{align*}
 Note that in this equation, we use $a$ in the exponent instead of $a^2$ in the small-deviation case.
We can ensure that 
  \begin{align*}
  \left|\mathbb{E}\left[
          D_{r_1}(t)\,\mathbf{1}_{\neg A_{r_1}^\circ} \mathbf{1}_{|D_{r_1}(t)|\geq 1+c}  
          \middle|\,\mathrm{All}_0,\dots,\mathrm{All}_{r_1-1}
    \right]\right|&
    \;\le \;\mathbb{E}\left[
          |D_{r_1}(t)|\, \mathbf{1}_{|D_{r_1}(t)|\geq 1+c}  
          \middle|\,\mathrm{All}_0,\dots,\mathrm{All}_{r_1-1}
    \right]\\
    &= \int_{a\geq 1+c}  \pr\left[
          |D_{r_1}(t)|\, \mathbf{1}_{|D_{r_1}(t)|\geq 1+c}  
          \middle|\,\mathrm{All}_0,\dots,\mathrm{All}_{r_1-1} 
    \right]\operatorname{d}a\\
   & \le\int_{a\geq 1+c} n^{-5 a} \operatorname{d}a \;\leq\; \frac{1}{n^5}.
\end{align*}

Now that we have estimated the bias, we are able to bound the probabilities. By the choice of $K_{\mathrm{prob}}$ and the tail bound for $\neg B_{r_1}(u)$, a union bound over all $r_2 \le r_1$ and all $u \in V_{r_1}$ implies that
\[
  \pr[\neg A_{r_1}^\circ] \;\le\; \pr[\neg A_{r_1-1}] + \sum_{r_2 \leq r_1} \sum_{u \in V_{r_2}}   \pr[A_{r_1-1} \cap \neg B_{r_2}(u)] \leq \frac{r_1-1}{n^2} + \frac{2Rn}{n^5} \in \tilde\bigo(\frac{1}{n^2}).
\]
Summing the bias over $r_1\le r$  using the triangle inequality yields for sufficiently large $n$,
\[
  \left|
    \mathbb{E}\!\left[
      \sum_{r_1 \le r} D_{r_1}(t)\mathbf{1}_{A_{r_1}^\circ}
    \right]
  \right|
  \le
  r(1+c)\,\pr[\neg A_r^\circ] + \frac{r}{n^5}
  \le \frac{c}{4}.
\]

Combining this bias bound with~\eqref{eq:Azuma_centered_total}, we obtain
\[
  \pr\Biggl[
    \Biggl|
      \sum_{r_1 \le r} D_{r_1}(t)\mathbf{1}_{A_{r_1}^\circ}
    \Biggr|
    > \frac{c}{2}
  \Biggr]
  \;\le\; \frac{2}{n^5},
\]
for any $t \in V$ . We now set
\[
  A_r(t)
  \;:=\;
  A_r^\circ \,\cap\,
  \Biggl\{
    \Biggl|
      \sum_{r_1 \le r} D_{r_1}(t)
    \Biggr|
    \le \frac{c}{2}
  \Biggr\},
  \qquad
  A_r
  \;:=\;
  \bigcap_{t\in V} A_r(t).
\]
A union bound over all $t \in V$ gives
\[
  \pr[A_r]
  \;\ge\;
  \pr[A_{r-1}] -  \sum_{r_1 \leq r} \sum_{u \in V_r}   \pr[A_{r-1} \cap \neg B_{r_1}(u)] - \sum_{t\in V_{r_1}} \pr[\neg A_r(t) \cap A_{r}^{\circ}]
  \;\ge\;
  1 - \frac{r}{n^2},
\]
where we also used the inductive hypothesis
$\pr[A_{r-1}] \ge 1 - (r-1)/n^2$ and the bound on the failure probability of the events $ A_r(t), B_{r_1}(u)$. By construction, on $A_r$ we have for all $t \in V$, $|\sum_{k=1}^{r} D_k(t)|
  \le \frac{c}{2}.$
In addition, since in our construction $A_{r} \subset A_{r-1}$, we can now claim that:
\[
\forall r_1\leq r, t\in V, \qquad \Biggl|\sum_{r_2=1}^{r_1} D_{r_2}(t)\Biggr|
  \le \frac{c}{2},
\]
which is precisely Lemma~\ref{lemma:good-events}(i) for round~$r$. This completes the inductive construction of $A_r$.\qed

\end{document}